\providecommand{\tabularnewline}{\\}
\providecommand{\algorithmname}{Algorithm}
\theoremstyle{plain}
\newtheorem{lem}{\protect\lemmaname}
  \newenvironment{proof}[1][\proofname]{\par
    \normalfont\topsep6\p@\@plus6\p@\relax
    \trivlist
    \itemindent\parindent
    \item[\hskip\labelsep
          \scshape
      #1]\ignorespaces
  }{%
    \endtrivlist\@endpefalse
  }
  \providecommand{\proofname}{Proof}
\theoremstyle{definition}
\newtheorem{defn}{\protect\definitionname}
\theoremstyle{plain}
\newtheorem{thm}{\protect\theoremname}
\theoremstyle{definition}
\newtheorem{example}{\protect\examplename}
\theoremstyle{remark}
\newtheorem{rem}{\protect\remarkname}
\theoremstyle{plain}
\newtheorem{prop}{\protect\propositionname}
\def\squarecorner#1{
    %
    \pgf@x=\the\wd\pgfnodeparttextbox%
    \pgfmathsetlength\pgf@xc{\pgfkeysvalueof{/pgf/inner xsep}}%
    \advance\pgf@x by 2\pgf@xc%
    \pgfmathsetlength\pgf@xb{\pgfkeysvalueof{/pgf/minimum width}}%
    \ifdim\pgf@x<\pgf@xb%
        \pgf@x=\pgf@xb%
    \fi%
    %
    \pgf@y=\ht\pgfnodeparttextbox%
    \advance\pgf@y by\dp\pgfnodeparttextbox%
    \pgfmathsetlength\pgf@yc{\pgfkeysvalueof{/pgf/inner ysep}}%
    \advance\pgf@y by 2\pgf@yc%
    \pgfmathsetlength\pgf@yb{\pgfkeysvalueof{/pgf/minimum height}}%
    \ifdim\pgf@y<\pgf@yb%
        \pgf@y=\pgf@yb%
    \fi%
    %
    \ifdim\pgf@x<\pgf@y%
        \pgf@x=\pgf@y%
    \else
        \pgf@y=\pgf@x%
    \fi
    %
    \pgf@x=#1.5\pgf@x%
    \advance\pgf@x by.5\wd\pgfnodeparttextbox%
    \pgfmathsetlength\pgf@xa{\pgfkeysvalueof{/pgf/outer xsep}}%
    \advance\pgf@x by#1\pgf@xa%
    \pgf@y=#1.5\pgf@y%
    \advance\pgf@y by-.5\dp\pgfnodeparttextbox%
    \advance\pgf@y by.5\ht\pgfnodeparttextbox%
    \pgfmathsetlength\pgf@ya{\pgfkeysvalueof{/pgf/outer ysep}}%
    \advance\pgf@y by#1\pgf@ya%
}
    \savedanchor\northeast{\squarecorner{}}
    \savedanchor\southwest{\squarecorner{-}}
\providecommand{\definitionname}{Definition}
\providecommand{\examplename}{Example}
\providecommand{\lemmaname}{Lemma}
\providecommand{\propositionname}{Proposition}
\providecommand{\remarkname}{Remark}
\providecommand{\theoremname}{Theorem}
\begin{document}
\title{Distributed Neighbor Selection\\
in Multi-agent Networks}
\author{Haibin~Shao,~\IEEEmembership{Member,~IEEE,} Lulu Pan, Mehran~Mesbahi,~\IEEEmembership{Fellow,~IEEE,}\\
Yugeng~Xi,~\IEEEmembership{Senior Member,~IEEE} and Dewei~Li\thanks{This work was supported by the National Science Foundation of China
(Grant No.62103278, 61973214,  61963030). }\thanks{H.\,Shao, L.\,Pan, Y.\,Xi and D. Li are with the Department of
Automation, Shanghai Jiao Tong University, and Key Laboratory of System
Control and Information Processing, Ministry of Education of China,
and Shanghai Engineering Research Center of Intelligent Control and
Management, Shanghai, 200240, China. }\thanks{M.\,Mesbahi is with the William E. Boeing Department of Aeronautics
and Astronautics, University of Washington, Seattle, WA, 98195-2400,
USA. }\thanks{Corresponding author: Lulu Pan (llpan@sjtu.edu.cn).}}
\maketitle
\begin{abstract}
Achieving consensus via nearest neighbor rules is an important prerequisite
for multi-agent networks to accomplish collective tasks. A common
assumption in consensus setup is that each agent interacts with all
its neighbors. This paper examines whether network functionality and
performance can be maintained-and even enhanced-when agents interact
only with a subset of their respective (available) neighbors. As shown
in the paper, the answer to this inquiry is affirmative. In this direction,
we show that by exploring the monotonicity property of the Laplacian
eigenvectors, a neighbor selection rule with guaranteed performance
enhancements, can be realized for consensus-type networks. For distributed
implementation, a quantitative connection between entries of Laplacian
eigenvectors and the \textquotedblleft relative rate of change\textquotedblright{}
in the state between neighboring agents is further established; this
connection facilitates a distributed algorithm for each agent to identify
\textquotedblleft favorable\textquotedblright{} neighbors to interact
with. Multi-agent networks with and without external influence are
examined, as well as extensions to signed networks. This paper underscores
the utility of Laplacian eigenvectors in the context of distributed
neighbor selection, providing novel insights into distributed data-driven
control of multi-agent systems.
\end{abstract}

\begin{IEEEkeywords}
Distributed neighbor selection; Laplacian eigenvectors; convergence
rate; Fiedler vector; block-cut tree; relative tempo; data-driven
control.
\end{IEEEkeywords}

\section{Introduction}

A multi-agent network is composed of a group of agents, interacting
with their respective nearest neighbors by following local rules;
when such local rules lead to an emerging collective behavior at the
network level is of great interest \cite{mesbahi2010graph,qin2016recent,cao2013overview}.
Achieving consensus via pairwise diffusive interactions between neighboring
agents is a prototypical collective behavior of multi-agent systems
\cite{olfati2004consensus,Jadbabaie2003,lin2005necessary,ren2005consensus},
which also turns out to be a critical prerequisite in disciplines
such as distributed control of networked systems \cite{chung2018survey,song2020network},
distributed estimation over sensor networks \cite{barooah2007estimation},
synchronization in complex networks \cite{dorfler2013synchronization},
large-scale multi-agent machine learning \cite{nedic2020distributed},
and opinion dynamics \cite{proskurnikov2018tutorial}. 

\subsection{Motivation}

The functionality and performance of a multi-agent network are dependent
on the underlying network topology, realized via each agent's interactions
with its nearest neighbors \cite{olfati2004consensus,Jadbabaie2003,ren2005consensus}.
In practice, information exchange using communication channels amongst
agents are often expensive \cite{vasarhelyi2018optimized}. Here,
an important question is whether the network performance can be maintained
and even enhanced if agents interact only with a subset of their available
neighbors, namely, via a neighbor selection. For instance, in leader-follower
multi-robotic networks, it is often assumed that each robot interacts
with all robots within a sensing radius; an important observation
is that the resultant network is not necessary efficient in diffusion
of information from leader robots to the followers \cite{chung2018survey}.
A similar situation arises in distributed optimization and estimation,
where the coordination network is realized via the spatial distribution
of processors or sensors which may not be optimal for the given tasks,
leading to a performance loss \cite{yang2019survey}. 

In fact, the neighbor selection is ubiquitous in both natural and
artificial networks. For instance, it has been reported that in flocks
of starlings, birds interact only with a subset of their nearest neighbors,
rather than with all birds within a sensing radius \cite{ballerini2008interaction}.
An analogous scenario is observed in social networks, where an individual
often determines the subset of their friends to interact with on online
social media; this phenomenon also occurs in real-world social interactions
amongst people \cite{allen2017evolutionary}. Neighbor selection schemes
are also employed in peer-to-peer networks, such as BitTorrent, to
save traffic overhead \cite{bindal2006improving}. Along the same
lines, adaptive neighbor selection has been proposed to enhance the
quality of predicted ratings in recommender systems \cite{ahmadian2018social}.
The $k$NN imputation methods are designed to select $k$ nearest
neighbors to deal with missing data in datasets \cite{zhang2012nearest}.
Notably, neighbor selection can also be viewed as an attention mechanism
(each agent pays more attention to specific agents), which is ubiquitously
employed in recently developed learning  algorithms \cite{vaswani2017attention}. 

For multi-agent consensus problems, network topology plays a crucial
role in both reaching consensus and the corresponding  convergence
rate \cite{olshevsky2009convergence,clark2018maximizing,kim2006maximizing}.
A common assumption in this line of work is that each agent interacts
with all its neighbors \cite{mesbahi2010graph,qin2016recent,cao2013overview};
however, there may exist excessive interactions that degrade the performance
of the multi-agent network. A natural question thereby is whether
the importance of agents' neighbors (with respect to the desired performance)
can be inferred from local measurements. This (data-driven) distributed
neighbor selection problem is the focus of the present work. Our work
is also inspired by the observation that information flow between
a pair of neighboring agents does not need to be bidirectional, especially
when two neighboring agents are not hierarchical equivalent \cite{dedeo2021equality}.
For instance, a rooted tree exhibits a typical hierarchical structure
for the efficient spreading of information from the root to other
nodes- a bidirectional information exchange on the other hand may
lessen the efficiency of the convergence process. In particular, we
provide a theoretical framework to reason about the distributed neighbor
selection problem as well as a guarantee of its performance. As we
will show, specific Laplacian eigenvectors facilitate a systematic
treatment for designing and analyzing a novel distributed neighbor
selection algorithm for consensus-type networks.

\subsection{Contribution}

The contribution of this paper is threefold. First, we will show that
neighbor selection (effectively removing a specific subset of edges
from the network) can be effective for improving the network performance.
In this direction, one of our contributions involves using entries
of the Laplacian eigenvector as a criterion for neighbor selection;
subsequently, we show how the obtained reduced network maintains,
and even enhances the functionality of the original network in terms
of network reachability. Secondly, inspired by the observation that
bidirectional interactions amongst neighboring agents can hinder the
efficiency of information propagation throughout a consensus-type
network, we provide theoretical guarantees on the performance enhancement
of the reduced network in terms of convergence rate. Finally, as the
Laplacian eigenvectors are global network variables, we establish
a quantitative connection between the entries of the Laplacian eigenvector
and the relative rate of change in state between a pair of neighboring
agents, a quantity that we have referred to as the network relative
tempo. An important observation is that relative tempo is computable
from local measurements. In this direction, we show how the relative
tempo can be employed for the distributed online neighbor selection
process. 

The contributions of this work have several immediate consequences.
First, linking local interactions and global collective behaviors
is a central topic in complex systems, this work essentially initiates
a novel local neighbor selection rule that leads to a reduced network
with guaranteed network reachability and enhanced convergence rate
at the network level \cite{vicsek2012collective,beni2020swarm}. Second,
as consensus-type networks play a central role in distributed algorithms,
our work has immediate consequences for distributed control, estimation,
optimization, and learning algorithm design \cite{kia2019tutorial,nedic2020distributed}.
Third, the quantitative connection between entries of Laplacian eigenvectors
and relative tempo provides novel insights into distributed online
control of multi-agent networks \cite{brunton2019data}. Lastly, but
certainly not least, as opposed to Laplacian eigenvalues (e.g., algebraic
connectivity \cite{fiedler1973algebraic}) that have been extensively
examined in graph theory literature and for consensus problems \cite{olfati2004consensus,pirani2016smallest,clark2018maximizing},
this paper underscores the utility of the Laplacian eigenvectors (namely,
Fiedler vector and its variant) by unveiling the network reachability
that they encode (e.g., further extending the celebrated results of
Fiedler in \cite{Fiedler1975}), a novel application of Fiedler vector
in addition to spectral clustering \cite{Fiedler1975,von2007tutorial,merris1998laplacian}. 

\subsection{Organization}

The remainder of this paper is organized as follows. We introduce
preliminaries covering notation, graph theory, and network dynamics
in $\mathsection$\ref{sec:preliminaries}. A motivational example
is then provided and discussed in $\mathsection$\ref{sec:A-Motivating-Example}.
The main results for semi-autonomous networks in terms of analysis
of reachability of reduced networks after neighbor selection process,
as well as the corresponding convergence rates, are provided in $\mathsection$\ref{sec:SAN};
this is then followed by parallel results for fully autonomous networks
in $\mathsection$\ref{sec:FAN}. Extensions of main results to signed
networks are discussed in $\mathsection$\ref{sec:Extension-to-Signed},
followed by concluding remarks in $\mathsection$\ref{sec:Conclusion-Remarks}. 

\section{Preliminaries \label{sec:preliminaries}}

First a quick note on the notation. Let $\mathbb{R}$ and $\mathbb{Z}_{+}$
denote the set of real numbers and positive integers, respectively.
Denote the set $\left\{ 1,2,\ldots,n\right\} $ as $\underline{n}$,
where $n\in\mathbb{Z}_{+}$; $\mathds{1}_{n}$ and $\boldsymbol{0}_{n\times m}$
denote $n\times1$ vector and $n\times m$ matrix of all ones and
all zeros, respectively. Let $I_{d}$ denote the $d\times d$ identity
matrix and $\boldsymbol{e}_{j}$ denote the $j$th column of $I_{d}$
where $j\in\underline{d}$. The $i$th smallest eigenvalue and the
corresponding normalized eigenvector of a symmetric matrix $M\in\mathbb{R}^{n\times n}$
is signified by $\lambda_{i}(M)$ and $\boldsymbol{v}_{i}(M)$, respectively.
The entry located at the $i$th row and $j$th column in a matrix
$M\in\mathbb{R}^{n\times n}$ is denoted by $[M]_{ij}$ and the $i$th
entry of a vector $\boldsymbol{x}$ by $[\boldsymbol{x}]_{i}$. Let
$\boldsymbol{x}_{ij}$ denote $\frac{[\boldsymbol{x}]_{i}}{[\boldsymbol{x}]_{j}}$
for a vector $\boldsymbol{x}\in\mathbb{R}^{n}$. The Euclidean norm
of a vector $\boldsymbol{x}\in\mathbb{R}^{n}$ is designated by $\|\boldsymbol{x}\|=(\boldsymbol{x}^{\top}\boldsymbol{x})^{\frac{1}{2}}$
. A vector $\boldsymbol{x}\in\mathbb{R}^{n}$ is positive if $[\boldsymbol{x}]_{i}>0$
for all $i\in\underline{n}$. The spectral radius of a matrix $M$
is denoted by $\text{\ensuremath{\rho}}(M)$. 

Next, we provide a few graph-theoretic constructs that will be subsequently
used in the paper. Let $\mathcal{G}=(\mathcal{V},\mathcal{E},W)$
denote a graph with the node set $\mathcal{V}=\left\{ 1,2,\ldots,n\right\} $
and edge set $\mathcal{E\subset V\times V}$. The adjacency matrix
$W=(w_{ij})\in\mathbb{R}^{n\times n}$ is such that the edge weight
between agents $i$ and $j$ satisfies $w_{ij}\not=0$ if and only
if $(i,j)\in\mathcal{E}$ and $w_{ij}=0$ otherwise. A graph $\mathcal{G}$
is undirected if $(i,j)\in\mathcal{E}$ if and only if $(j,i)\in\mathcal{E}$,
otherwise $\mathcal{G}$ is directed. A graph $\mathcal{G}$ is a
signed graph if there exists an edge $(i,j)\in\mathcal{E}$ such that
$w_{ij}<0$, otherwise $\mathcal{G}$ is unsigned.  Let $\mathcal{N}_{i}=\left\{ j\in\mathcal{V}|(i,j)\in\mathcal{E}\right\} $
denote the neighbor set of an agent $i\in\mathcal{V}$. A path from
$i_{p}\text{\ensuremath{\in\mathcal{V}}}$ to $i_{1}\text{\ensuremath{\in\mathcal{V}}}$
in $\mathcal{G}$ is a concatenation of edges $(i_{1},i_{2}),(i_{2},i_{3}),\cdots,(i_{p-1},i_{p})$,
where all nodes $i_{1},i_{2},\ldots,i_{p}$ are distinct; a node $i\in\mathcal{V}$
is reachable from a node $j\in\mathcal{V}$ if there exists a path
from $j$ to $i$ in $\mathcal{G}$. An undirected graph is connected
if each pair of nodes are reachable from each other.  Let $S_{n}$
denote a star graph with $n\in\mathbb{Z}_{+}$ nodes. A subgraph $\mathcal{\tilde{G}}=(\tilde{\mathcal{V}},\tilde{\mathcal{E}})$
of a graph $\mathcal{G}=(\mathcal{V},\mathcal{E})$ is a graph such
that $\tilde{\mathcal{V}}\subset\mathcal{V}$ and $\tilde{\mathcal{E}}\subset\mathcal{E}$.
The subgraph obtained by removing a node set $\mathcal{V}^{\prime}\subset\mathcal{V}$
and all incident edges from a graph $\mathcal{G}=(\mathcal{V},\mathcal{E})$
is denoted by $\mathcal{G}-\mathcal{V}^{\prime}.$ Let $\mathcal{S}\subset\mathcal{V}$
be any subset of nodes in $\mathcal{G}=(\mathcal{V},\mathcal{E})$.
Then the induced subgraph $\mathcal{G}(\mathcal{S})$ is the graph
whose node set is $\mathcal{S}$ and whose edge set consists of all
of the edges incident to nodes in $\mathcal{S}$.
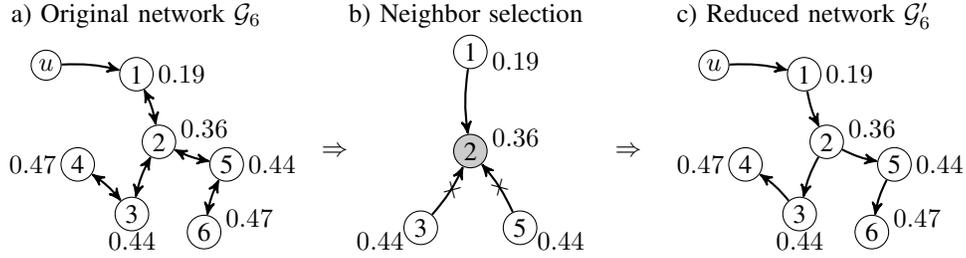
\begin{figure*}[t]
\begin{centering}
\begin{tikzpicture}[scale=0.6, >=stealth',   pos=.8,  photon/.style={decorate,decoration={snake,post length=1mm}} ] 	
	\node (n1) at (-0.5,1.5) [circle,inner sep= 1.5pt,draw] {1}; 	
	\node (n2) at (0,0) [circle,inner sep= 1.5pt,draw] {2};     
	\node (n3) at (-0.6,-1.6) [circle,inner sep= 1.5pt,draw] {3};     
	\node (n4) at (-1.8,-.5) [circle,inner sep= 1.5pt,draw] {4}; 	
	\node (n5) at (1.5,-0.5) [circle,inner sep= 1.5pt,draw] {5}; 	
	\node (n6) at (1,-2) [circle,inner sep= 1.5pt,draw] {6};
	
    \node (u) at (-2.5,1.7) [circle,inner sep= 1.5pt,draw] {$u$};
    \node (a) at (-0.5,2.8) {a) Original network $\mathcal{G}_6$};
    \node (nn1) at (0.5,1.5) {$0.19$};    
	\node (nn2) at (1,0.3) {$0.36$};    
	\node (nn3) at (-0.6,-2.2) {$0.44$};    
	\node (nn4) at (-2.8,-.5) {$0.47$};    
	\node (nn5) at (2.5,-0.5) {$0.44$};    
	\node (nn6) at (2,-1.7) {$0.47$};

	\path[] 	(u) [->,thick, bend left=8] edge node[below] {} (n1);
	\path[] 	(n2) [<->,thick, bend right=8] edge node[below] {} (n1);
	\path[] 	(n2) [<->,thick, bend right=8] edge node[below] {} (n3);
	\path[] 	(n3) [<->,thick, bend right=8] edge node[below] {} (n4);
	\path[] 	(n2) [<->,thick, bend right=8] edge node[below] {} (n5);
	\path[] 	(n5) [<->,thick, bend right=8] edge node[below] {} (n6);
\end{tikzpicture}\,\,\begin{tikzpicture}[scale=0.6, >=stealth',   pos=.8,  photon/.style={decorate,decoration={snake,post length=1mm}} ]
	\node (n1) at (0,2.2) [circle,inner sep= 1.5pt,draw] {1};
	\node (n2) at (0,0) [circle,inner sep= 1.5pt,fill=black!20,draw] {2};
    \node (n3) at (-1.1,-1.7) [circle,inner sep= 1.5pt,draw] {3};
	\node (n5) at (1.1,-1.7) [circle,inner sep= 1.5pt,draw] {5};

   \node (nn1) at (1,2) {$0.19$};
   \node (nn2) at (1,0.3) {$0.36$};
   \node (nn3) at (-2,-2) {$0.44$};
   \node (nn5) at (2,-2) {$0.44$};

    \node (a) at (-0.1,3) {b) Neighbor selection};

    \node (r1) at (-3,0.0) {$\Rightarrow$};
    \node (r2) at (3.5,0.0) {$\Rightarrow$};

	\node (x2) [rotate=60] at (-0.4,-0.85) {$\times$};
	\node (x3) [rotate=30] at  (0.65,-0.8) {$\times$};

	\path[]
	(n1) [->,thick, bend right=8] edge node[below] {} (n2);

	\path[]
	(n3) [->,thick, bend right=8] edge node[below] {} (n2);

	\path[]
	(n5) [->,thick, bend right=8] edge node[below] {} (n2);

\end{tikzpicture}\,\,\,\,\begin{tikzpicture}[scale=0.6, >=stealth',   pos=.8,  photon/.style={decorate,decoration={snake,post length=1mm}} ]
	\node (n1) at (-0.5,1.5) [circle,inner sep= 1.5pt,draw] {1};
	\node (n2) at (0,0) [circle,inner sep= 1.5pt,draw] {2};
    \node (n3) at (-0.6,-1.6) [circle,inner sep= 1.5pt,draw] {3};
    \node (n4) at (-1.8,-.5) [circle,inner sep= 1.5pt,draw] {4};
	\node (n5) at (1.5,-0.5) [circle,inner sep= 1.5pt,draw] {5};
	\node (n6) at (1,-2) [circle,inner sep= 1.5pt,draw] {6};

    \node (u) at (-2.5,1.7) [circle,inner sep= 1.5pt,draw] {$u$};

    \node (a) at (-0.5,2.8) {c) Reduced network $\mathcal{G}_6^{\prime}$};


   \node (nn1) at (0.5,1.5) {$0.19$};
   \node (nn2) at (1,0.3) {$0.36$};
   \node (nn3) at (-0.6,-2.2) {$0.44$};
   \node (nn4) at (-2.8,-.5) {$0.47$};
   \node (nn5) at (2.5,-0.5) {$0.44$};
   \node (nn6) at (2,-1.7) {$0.47$};

	\path[]
	(u) [->,thick, bend left=8] edge node[below] {} (n1);

	\path[]

	(n1) [->,thick, bend right=8] edge node[below] {} (n2);

	\path[]
	(n2) [->,thick, bend right=8] edge node[below] {} (n3);

	\path[]
	(n3) [->,thick, bend right=8] edge node[below] {} (n4);

	\path[]
	(n2) [->,thick, bend right=8] edge node[below] {} (n5);

	\path[]
	(n5) [->,thick, bend right=8] edge node[below] {} (n6);

\end{tikzpicture}
\par\end{centering}
\caption{The orientation of each edge indicates the direction of information
flow or influence. For instance, the edge from $u$ to agent $1$
implies that the information of $u$ can be transmitted from $u$
to agent $1$ and subsequently  agent $1$ can be influenced by $u$.
The bidirectional edges between neighboring agents are identified
by the line with double arrows for simplicity. Entries of $\boldsymbol{v}_{1}(L_{B}(\mathcal{G}_{6}))$
associated to each agent in $\mathcal{G}_{6}$ are indicated by numbers
close to each agent.}

\label{fig:motivation-example-6-node}
\end{figure*}

Lastly, we provide a brief synopsis of multi-agent networks\footnote{We will use \textquotedblleft graphs\textquotedblright{} and \textquotedblleft networks\textquotedblright{}
interchangeably in this paper.}. In a multi-agent network $\mathcal{G}=(\mathcal{V},\mathcal{E},W)$,
each agent $i\in\mathcal{V}$ has the state $\boldsymbol{x}_{i}(t)\in\mathbb{R}^{d}$
(or $\boldsymbol{x}_{i}\in\mathbb{R}^{d}$) at time $t$. In the sequel
we will consider two distinct categories of diffusively coupled networks,
namely, fully-autonomous and semi-autonomous.

In a fully autonomous network (FAN), $n\in\mathbb{Z}_{+}$ agents
evolve their respective states through interactions characterized
by an unsigned graph $\mathcal{G}=(\mathcal{V},\mathcal{E},W)$. In
particular, each agent updates its state by adopting the diffusive
interaction protocol (extensively examined in distributed algorithms
and synchronization problems \cite{chung2018survey,song2020network,barooah2007estimation,dorfler2013synchronization,nedic2020distributed}),
\begin{equation}
\dot{\boldsymbol{x}}_{i}(t)=-\sum_{i=1}^{n}w_{ij}\left(\boldsymbol{x}_{i}(t)-\boldsymbol{x}_{j}(t)\right),\text{ }i\in\mathcal{V}.\label{eq:consensus-protocol}
\end{equation}
In relation with the protocol (\ref{eq:consensus-protocol}), we say
the state of agent $i$ is influenced by its neighbors $j\in\mathcal{N}_{i}$
or, equivalently, agent $i$ follows its neighbors $j\in\mathcal{N}_{i}$.
Denote the graph Laplacian of $\mathcal{G}$ as $L(\mathcal{G})=(l_{ij})\in\mathbb{R}^{n\times n}$
where $l_{ii}=\sum{}_{j=1}^{n}w_{ij}$ for $i\in\mathcal{V}$ and
$l_{ij}=-w_{ij}$ for $i\ne j$. The collective behavior of a FAN
can be characterized as,
\begin{equation}
\dot{\boldsymbol{x}}=-(L(\mathcal{G})\otimes I_{d})\boldsymbol{x},\label{eq:consensus-overall}
\end{equation}
where $\boldsymbol{x}=(\boldsymbol{x}_{1}^{\top}(t),\dots,\boldsymbol{x}_{n}^{\top}(t))^{\top}\in\mathbb{R}^{nd}$.

In semi-autonomous networks (SANs), a subset of agents (referred to
as leaders or informed agents) are selected to receive external control
signals so as to steer the entire network towards a desired state.
In this direction, consider a SAN consisting of $n\in\mathbb{Z}_{+}$
agents whose interaction structure is characterized by an unsigned
graph $\mathcal{G}=(\mathcal{V},\mathcal{E},W)$. In a SAN, the leaders,
denoted by $\mathcal{V}_{\text{leader}}\subset\mathcal{V}$, can be
directly influenced by the external input signals and the remaining
agents are referred to as followers, denoted by $\mathcal{V}_{\text{follower}}=\mathcal{V}\setminus\mathcal{V}_{\text{leader}}$.
In this paper, the set of external inputs is denoted by $\mathcal{U}=\left\{ \boldsymbol{u}_{1},\dots,\boldsymbol{u}_{m}\right\} $,
where $\boldsymbol{u}_{l}\in\mathbb{R}^{d}$, $l\in\underline{m}$
and $m\in\mathbb{Z}_{+}$. Then $\mathcal{U}$ is homogeneous if $\boldsymbol{u}_{i}=\boldsymbol{u}_{j}$
for all $i\ne j\in\underline{m}$ and heterogeneous if otherwise.
In this setup, it is assumed that each leader is at most influenced
by one external input. In a SAN, the interaction protocol for each
agent $i\in\mathcal{V}$ admits the form,
\begin{align}
\dot{\boldsymbol{x}}_{i}(t) & =-\sum_{i=1}^{n}w_{ij}(\boldsymbol{x}_{i}(t)-\boldsymbol{x}_{j}(t))-\sum_{l=1}^{m}b_{il}(\boldsymbol{x}_{i}(t)-\boldsymbol{u}_{l}),\label{eq:LF-unsigned-protocol}
\end{align}
where $b_{il}=1$ if and only if $i\in\mathcal{V}_{\text{leader}}$
and $b_{il}=0$ otherwise\footnote{The SAN (\ref{eq:LF-unsigned-protocol}) is also known as Taylor\textquoteright s
model in social network analysis \cite{proskurnikov2018tutorial}.}. Subsequently, the collective behavior of SAN (\ref{eq:LF-unsigned-protocol})
can be characterized as, 
\begin{equation}
\dot{\boldsymbol{x}}=-(L_{B}(\mathcal{G})\otimes I_{d})\boldsymbol{x}+(B\otimes I_{d})\boldsymbol{u},\label{eq:unsigned-LF-overall}
\end{equation}
where $\boldsymbol{x}=(\boldsymbol{x}_{1}^{\top}(t),\dots,\boldsymbol{x}_{n}^{\top}(t))^{\top}\in\mathbb{R}^{nd}$,
$B=(b_{il})\in\mathbb{R}^{n\times m}$, $\boldsymbol{u}=(\boldsymbol{u}_{1}^{\top},\dots,\boldsymbol{u}_{m}^{\top})^{\top}\in\mathbb{R}^{md}$
and 
\begin{equation}
L_{B}(\mathcal{G})=L(\mathcal{G})+\text{{\bf diag}}(B\mathds{1}_{m}),\label{eq:LB-matrix}
\end{equation}
which is referred to as perturbed Laplacian since $L_{B}(\mathcal{G})$
(or $L_{B}$ for brevity) is obtained from a perturbation on the Laplacian
matrix by a diagonal matrix $\text{{\bf diag}}(B\mathds{1}_{m})$
\cite{pirani2016smallest,xia2017analysis,clark2018maximizing,cao2012distributed,Airlie2013TAC,dorfler2018electrical}.
The FAN (\ref{eq:consensus-overall}) or SAN (\ref{eq:unsigned-LF-overall})
are said to achieve consensus if ${\displaystyle \lim_{t\rightarrow\infty}}\|\boldsymbol{x}_{i}(t)-\boldsymbol{x}_{j}(t)\|=0$
for all $i,j\in\mathcal{V}$ and some norm on $\mathbb{R}^{d}$ \cite{olfati2004consensus,cao2012distributed}.
We assume throughout this paper that the underlying networks of FAN
(\ref{eq:consensus-overall}) and SAN (\ref{eq:unsigned-LF-overall})
are all undirected and connected before implementing neighbor selection.

\section{A Motivational Scenario \label{sec:A-Motivating-Example}}

We provide an example to motivate this work. Consider a SAN on a connected
unsigned network $\mathcal{G}_{6}$ in Figure \ref{fig:motivation-example-6-node}a
(referred to as original network), where agent $1$ is a leader with
an external input $u=0.9$.  We know that reachability (existence
of a directed path) from the external input to each agent is a prerequisite
for the agents to track this external input $u$. This observation
motivates us to inquire whether the remaining edges, apart from those
that can guarantee the leader-follower reachability of the network,
are necessary for reaching a consensus. For instance, the network
$\mathcal{G}_{6}^{\prime}$ (Figure \ref{fig:motivation-example-6-node}c)
is the minimal subgraph of $\mathcal{G}_{6}$ (in terms of the number
of edges) that can guarantee the reachability from external input
$u$ to all the agents, the corresponding  convergence performance
is significantly enhanced compared with that of the original network
$\mathcal{G}_{6}$ (see Figure \ref{fig:motivation-trajectory-6-node}).
\begin{figure}[tbh]
\begin{centering}
\includegraphics[width=8cm]{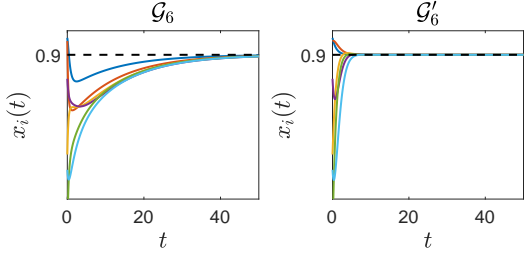}
\par\end{centering}
\caption{State trajectories of agents in a SAN (\ref{eq:unsigned-LF-overall})
evolving on networks $\mathcal{G}_{6}$ (left) and $\mathcal{G}_{6}^{\prime}$
(right) in Figure \ref{fig:motivation-example-6-node}, respectively.}
\label{fig:motivation-trajectory-6-node}
\end{figure}
 Apparently,  the reduced network $\mathcal{G}_{6}^{\prime}$ can
be constructed from $\mathcal{G}_{6}$ by eliminating one of the bidirectional
edges between neighboring agents in $\mathcal{G}_{6}$ (see Figure
\ref{fig:motivation-example-6-node}b), how the local accessible information
of each agent can be employed to guide this neighbor selection process
is challenging. 

Recall that the eigenvector associated with the smallest non-zero
eigenvalue of perturbed Laplacian can be chosen to be positive. An
important observation is that the entries of this eigenvector, along
the directed paths from leader agents to all follower agents, are
monotonically increasing (subsequently, we will show that this is
not accidental). We will subsequently see that this monotonicity property
plays an important role in the distributed neighbor selection process
(see Figure \ref{fig:motivation-example-6-node}b). In the sequel
we first examine this observation analytically for SANs, followed
by its implications for FANs.

\section{Semi-Autonomous Networks \label{sec:SAN}}

In this section, a neighbor selection algorithm, based on the monotonicity
of the eigenvector entries associated with the perturbed Laplacian,
is proposed. Subsequently, the convergence rate of the multi-agent
system on the reduced network, post neighbor selection process, will
be examined. Furthermore, the distributed implementation of the neighbor
selection process is discussed.

\subsection{Reachability Analysis}

We shall first examine the reachability property of SANs, as encoded
in a Laplacian eigenvector of the underlying network. The eigen-pair
$(\lambda_{\text{1}}(L_{B}),\boldsymbol{v}_{\text{1}}(L_{B}))$ associated
with the perturbed Laplacian $L_{B}$ of a SAN, with input matrix
$B$, turns out to be an important algebraic construct revealing graph-theoretic
properties of SANs. As such, we shall unveil the network reachability,
encoded in the eigenvector $\boldsymbol{v}_{\text{1}}(L_{B})$, providing
useful insights for designing neighbor selection algorithm for SANs. 

First, we provide some preliminary properties of $(\lambda_{\text{1}}(L_{B}),\boldsymbol{v}_{\text{1}}(L_{B}))$. 
\begin{lem}
\label{lem:uniqueness-unsigned} Let $\lambda_{1}(L_{B})$ and $\boldsymbol{v}_{1}(L_{B})$
denote the smallest eigenvalue and the corresponding normalized  eigenvector
of $L_{B}$ in (\ref{eq:LB-matrix}), respectively. Then, $\lambda_{1}(L_{B})>0$
is a simple eigenvalue of $L_{B}$ and $\boldsymbol{v}_{1}(L_{B})$
can be chosen to be (component-wise) positive.
\end{lem}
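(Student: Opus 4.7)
The plan is to establish the two conclusions separately, first the strict positivity of $\lambda_1(L_B)$ by a quadratic-form argument, and then the simplicity of $\lambda_1(L_B)$ together with the positivity of its eigenvector by invoking the Perron--Frobenius theorem on a suitably shifted matrix.

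For the first claim, I would start from the fact that $L(\mathcal{G})$ is symmetric positive semidefinite with one-dimensional kernel spanned by $\mathds{1}_n$ (since $\mathcal{G}$ is undirected and connected), while $\mathrm{\mathbf{diag}}(B\mathds{1}_m)$ is diagonal with nonnegative entries and at least one strictly positive entry (because $\mathcal{V}_{\text{leader}}\neq\emptyset$). For any $\boldsymbol{x}\in\mathbb{R}^n$,
\begin{equation*}
\boldsymbol{x}^{\top}L_{B}\boldsymbol{x}=\boldsymbol{x}^{\top}L(\mathcal{G})\boldsymbol{x}+\boldsymbol{x}^{\top}\mathrm{\mathbf{diag}}(B\mathds{1}_{m})\boldsymbol{x}\ge 0,
\end{equation*}
so $L_B$ is PSD. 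If $L_B\boldsymbol{x}=0$, then both terms above vanish; the first forces $\boldsymbol{x}=c\mathds{1}_n$ by connectedness, and the second then forces $c^2\sum_i[B\mathds{1}_m]_i=0$, whence $c=0$. Thus $L_B$ is positive definite and $\lambda_1(L_B)>0$.

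For the second claim, I would shift $L_B$ to apply Perron--Frobenius. Choose $c>\max_i[L_B]_{ii}$ and set $M:=cI-L_B$. By construction $M$ has strictly positive diagonal entries and off-diagonals $[M]_{ij}=w_{ij}\ge 0$ for $i\neq j$, so $M$ is entrywise nonnegative. Since $\mathcal{G}$ is connected, the sparsity pattern of $M$ coincides (off-diagonal) with that of the adjacency of a connected graph, so $M$ is irreducible. The Perron--Frobenius theorem for irreducible nonnegative matrices then yields a simple, strictly largest eigenvalue $\rho(M)$ with a (unique up to scaling) positive eigenvector $\boldsymbol{v}$. Translating back, $\boldsymbol{v}$ is an eigenvector of $L_B$ with eigenvalue $c-\rho(M)$, and this eigenvalue is the smallest eigenvalue of $L_B$ and is simple. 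Hence $\lambda_1(L_B)$ is simple and $\boldsymbol{v}_1(L_B)$ can be chosen component-wise positive.

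The only step requiring care is verifying irreducibility of $M$: this reduces to observing that for any two indices $i,j$ a path in $\mathcal{G}$ between them translates into a sequence of nonzero off-diagonal entries of $M$, which follows directly from the adjacency pattern inherited from $\mathcal{G}$. With that in hand, both conclusions follow immediately, and the positivity of $\lambda_1(L_B)$ established in the first part rules out the possibility that the Perron eigenvalue corresponds to a zero shift, keeping the argument self-contained.
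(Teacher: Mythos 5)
Your proof is correct and takes essentially the same route as the paper: a quadratic-form/kernel argument showing $L_B$ is positive definite (hence $\lambda_1(L_B)>0$), followed by shifting to a nonnegative irreducible matrix and invoking Perron--Frobenius for simplicity and positivity of the eigenvector. The only cosmetic difference is that the paper applies Perron--Frobenius to $e^{-L_B}$ while you apply it directly to $cI-L_B$; both reductions are equivalent here.
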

\begin{proof}
Refer to the Appendix.
\end{proof}
For a SAN $\mathcal{G}$ with the input matrix $B$, we proceed to
construct a reduced subgraph of $\mathcal{G}$ by eliminating a subset
of edges between an agent and its neighboring agents, using information
encoded in $\boldsymbol{v}_{\text{1}}(L_{B})$, namely, realizing
neighbor selection. We shall refer to this class of reduced subgraphs
as following the slower neighbor (FSN) networks of $\mathcal{G}$,
since it is implied that each agent follows (or chooses to be influenced
by) those neighbors whose rate of change in states are relatively
slower; this statement will be made rigorous in $\mathsection$\ref{subsec:NS-SAN}.
\begin{defn}[\textbf{FSN network of SANs}]
\label{def:fsn-network-SAN} Let $\mathcal{G}=(\mathcal{V},\mathcal{E},W)$
be an unsigned SAN with the input matrix $B$. The FSN network of
$\mathcal{G}$, denoted by $\bar{\mathcal{G}}=(\mathcal{\bar{V}},\bar{\mathcal{E}},\bar{W})$,
is a subgraph of $\mathcal{G}$ such that $\mathcal{\bar{V}}=\mathcal{V}$,
$\mathcal{\bar{E}}\subseteq\mathcal{E}$ and $\bar{W}=(\bar{w}_{ij})\in\mathbb{R}^{n\times n}$,
where $\bar{w}_{ij}=w_{ij}$ if $\boldsymbol{v}_{\text{1}}(L_{B})_{ij}>1$
and $\bar{w}_{ij}=0$ when $\boldsymbol{v}_{\text{1}}(L_{B})_{ij}\leq1$.
\end{defn}
According to  Definition \ref{def:fsn-network-SAN}, the FSN network
of a SAN is determined by the perturbed Laplacian, specifically by
the corresponding eigenvector $\boldsymbol{v}_{\text{1}}(L_{B})$.
Note that the construction of the FSN network is essentially achieved
by comparing $\boldsymbol{v}_{\text{1}}(L_{B})_{ij}$ and $1$; hence
this process can be regarded as a ``rule'' of neighbor selection.
We shall now proceed to reveal the leader-to-follower reachability
(LF-reachability) of FSN networks. 
\begin{thm}
\label{thm:reachability-FSN-SAN} Let $\bar{\mathcal{G}}=(\bar{\mathcal{V}},\mathcal{\bar{E}},\bar{W})$
be the FSN network of the SAN (\ref{eq:unsigned-LF-overall}) on the
unsigned connected network $\mathcal{G}=(\mathcal{V},\mathcal{E},W)$.
Then for each agent $i\in\bar{\mathcal{V}}$, there exists $l\in\underline{m}$
such that $i$ is reachable from $\boldsymbol{u}_{l}$ in $\bar{\mathcal{G}}$. 
\end{thm}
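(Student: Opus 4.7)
The plan is to exploit the fact that, by construction, the surviving edges of $\bar{\mathcal{G}}$ point from smaller to larger entries of $\boldsymbol{v}_{1}(L_{B})$: an edge $(i,j)$ of the original unsigned graph is retained in the FSN network (with $\bar{w}_{ij}=w_{ij}$, meaning that $j$ continues to influence $i$ in the diffusive protocol) precisely when $[\boldsymbol{v}_{1}(L_{B})]_{i}>[\boldsymbol{v}_{1}(L_{B})]_{j}$. Consequently, along any directed path in $\bar{\mathcal{G}}$ the entries of $\boldsymbol{v}_{1}(L_{B})$ are strictly increasing. Fix an arbitrary agent $i\in\bar{\mathcal{V}}$ and let $A(i)$ consist of $i$ together with all agents from which $i$ is reachable in $\bar{\mathcal{G}}$. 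The theorem will follow once I exhibit a leader $j^{*}\in A(i)$, because then the virtual source $\boldsymbol{u}_{l}$ with $b_{j^{*}l}=1$ reaches $j^{*}$ by definition of $B$, and a directed path from $j^{*}$ to $i$ already exists in $\bar{\mathcal{G}}$.

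The key step is to pick $j^{*}\in A(i)$ minimizing $[\boldsymbol{v}_{1}(L_{B})]_{j^{*}}$ (well-defined since $A(i)$ is finite and nonempty, and $\boldsymbol{v}_{1}(L_{B})$ is component-wise positive by Lemma \ref{lem:uniqueness-unsigned}), and to argue by contradiction that $j^{*}$ must be a leader. Suppose instead that $(B\mathds{1}_{m})_{j^{*}}=0$. Then reading off the $j^{*}$-th row of $L_{B}\boldsymbol{v}_{1}(L_{B})=\lambda_{1}(L_{B})\boldsymbol{v}_{1}(L_{B})$ and using $L_{B}=L(\mathcal{G})+\text{{\bf diag}}(B\mathds{1}_{m})$ gives
\begin{equation*}
\sum_{j\ne j^{*}}w_{j^{*}j}\bigl([\boldsymbol{v}_{1}(L_{B})]_{j^{*}}-[\boldsymbol{v}_{1}(L_{B})]_{j}\bigr)=\lambda_{1}(L_{B})\,[\boldsymbol{v}_{1}(L_{B})]_{j^{*}}.
\end{equation*}
Lemma \ref{lem:uniqueness-unsigned} ensures that the right-hand side is strictly positive, and since the weights $w_{j^{*}j}$ are nonnegative, there must exist some $j$ with $w_{j^{*}j}>0$ and $[\boldsymbol{v}_{1}(L_{B})]_{j}<[\boldsymbol{v}_{1}(L_{B})]_{j^{*}}$. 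By Definition \ref{def:fsn-network-SAN}, the edge $(j^{*},j)$ of $\mathcal{G}$ then produces the directed edge $j\to j^{*}$ in $\bar{\mathcal{G}}$, so $j\in A(i)$, contradicting the minimality of $j^{*}$.

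The main obstacle is making sure the eigen-equation at a non-leader forces a \emph{strict} drop in eigenvector entry to some neighbor, not merely a weak inequality; this hinges on $\lambda_{1}(L_{B})>0$ and $[\boldsymbol{v}_{1}(L_{B})]_{j^{*}}>0$, both supplied by Lemma \ref{lem:uniqueness-unsigned}, together with the assumption that $\mathcal{G}$ is connected, which guarantees that every agent has at least one neighbor so that the sum above is nonempty. Ties, i.e., neighbors $j$ of $j^{*}$ with equal eigenvector entries, contribute zero to the left-hand side of the displayed identity and are harmlessly dropped from $\bar{\mathcal{G}}$ by Definition \ref{def:fsn-network-SAN}, so they do not interfere with the minimization argument. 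Once the existence of such a leader $j^{*}$ is established, the desired directed path $\boldsymbol{u}_{l}\to j^{*}\to\cdots\to i$ in $\bar{\mathcal{G}}$ is immediate, completing the proof.
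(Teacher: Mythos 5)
Your proof is correct and rests on the same two ingredients as the paper's own argument: the positivity of $\lambda_{1}(L_{B})$ and of $\boldsymbol{v}_{1}(L_{B})$ from Lemma \ref{lem:uniqueness-unsigned}, and the $j^{*}$-th row of the eigen-equation evaluated at a node minimizing the eigenvector entry over a suitably chosen set. The only cosmetic difference is that you minimize over the ancestor set $A(i)$ of a fixed agent and descend directly to a leader, whereas the paper minimizes over a weakly connected component of hypothetically unreachable followers and extracts the contradiction $\lambda_{1}\le 0$; the underlying computation is identical.
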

\begin{proof}
Let  $\mathcal{V}_{\text{\text{leader}}}$ and $\mathcal{V}_{\text{\text{follower}}}$
be the leader and the follower sets of the SAN (\ref{eq:unsigned-LF-overall}),
respectively. According to Lemma \ref{lem:steady-state-unsigned-semi}
in Appendix, it is sufficient to show that for an arbitrary $i\in\mathcal{V}_{\text{follower}}$,
there exists a leader agent $l\in\mathcal{V}_{\text{leader}}$ such
that $i$ is reachable from $l$. 

By contradiction, assume that there exists a subset of agents $\left\{ i_{1},i_{2},\cdots,i_{s}\right\} \subset\mathcal{V}_{\text{follower}}$
in the FSN network $\bar{\mathcal{G}}$ such that $i_{k}$ is not
reachable from any $l\in\mathcal{V}_{\text{leader}}$, where $k\in\underline{s}$
and $s\in\mathbb{Z}_{+}$. Let $\lambda_{1}$ be the smallest eigenvalue
of the perturbed Laplacian matrix $L_{B}(\mathcal{G})$ with the corresponding
eigenvector $\boldsymbol{v}_{1}$. According to Lemma \ref{lem:uniqueness-unsigned},
one has $\lambda_{1}>0$ and the corresponding eigenvector $\boldsymbol{v}_{1}$
is positive. Now consider the following two cases:

Case 1: There exists an isolated agent $i^{\prime}\in\left\{ i_{1},i_{2},\cdots,i_{s}\right\} $
such that agent $i^{\prime}$ is not reachable from any leader agent
in the FSN network $\bar{\mathcal{G}}$. Then, according to Definition
\ref{def:fsn-network-SAN}, one has,
\begin{equation}
[\boldsymbol{v}_{1}]_{i^{\prime}}\le[\boldsymbol{v}_{1}]_{j},\label{eq:eigenvector inequality}
\end{equation}
for all $j\in\mathcal{N}_{i^{\prime}}$. Examining the $i^{\prime}$th
row in eigen-equation $L_{B}(\mathcal{G})\boldsymbol{v}_{1}=\lambda_{1}\boldsymbol{v}_{1}$
yields,
\begin{equation}
\left({\displaystyle \sum_{j\in\mathcal{N}_{i^{\prime}}}}w_{i^{\prime}j}\right)[\boldsymbol{v}_{1}]_{i^{\prime}}-{\displaystyle \sum_{j\in\mathcal{N}_{i^{\prime}}}}w_{i^{\prime}j}[\boldsymbol{v}_{1}]_{j}=\lambda_{1}[\boldsymbol{v}_{1}]_{i^{\prime}}.\label{eq:eigenvector equality}
\end{equation}
Combining (\ref{eq:eigenvector inequality}) and (\ref{eq:eigenvector equality}),
now yields the following inequality,
\begin{equation}
\left({\displaystyle \sum_{j\in\mathcal{N}_{i^{\prime}}}}w_{i^{\prime}j}\right)[\boldsymbol{v}_{1}]_{i^{\prime}}-{\displaystyle \sum_{j\in\mathcal{N}_{i^{\prime}}}}w_{i^{\prime}j}[\boldsymbol{v}_{1}]_{i^{\prime}}\geq\lambda_{1}[\boldsymbol{v}_{1}]_{i^{\prime}}.
\end{equation}
By eliminating $[\boldsymbol{v}_{1}]_{i^{\prime}}>0$ from both sides
of the above inequality, it follows that $\lambda_{1}\leq0$, establishing
a contradiction.

Case 2: There exists a weak connected component $\bar{\mathcal{G}}(\left\{ i_{1},i_{2},\cdots,i_{s_{0}}\right\} )$
in $\left\{ i_{1},i_{2},\cdots,i_{s}\right\} $, such that any agent
in this weak connected component is not reachable from any leader
agent, where $s_{0}\in\mathbb{Z}_{+}$ and $s_{0}\le s$. Let 
\begin{equation}
[\boldsymbol{v}_{1}]_{i^{\prime}}=\underset{k\in\left\{ i_{1},i_{2},\cdots,i_{s_{0}}\right\} }{\min}\left\{ [\boldsymbol{v}_{1}]_{k}\right\} .
\end{equation}
Then, one has $[\boldsymbol{v}_{1}]_{j}\ge[\boldsymbol{v}_{1}]_{i^{\prime}}$
for all $j\in\mathcal{N}_{i^{\prime}}$. Again, one can conclude the
contradiction $\lambda_{1}\leq0$ by applying a similar procedure
as in Case 1. 
\end{proof}
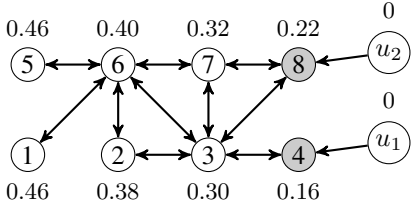
\begin{figure}[tbh]
\begin{centering}
\begin{tikzpicture}[scale=0.8, >=stealth',   pos=.8,  photon/.style={decorate,decoration={snake,post length=1mm}} ]
	\node (n1) at (0,0) [circle,inner sep= 1.5pt,draw] {1};
	\node (n2) at (1.5,0) [circle,inner sep= 1.5pt,draw] {2};
    \node (n3) at (3,0) [circle,inner sep= 1.5pt,draw] {3};
    \node (n4) at (4.5,0) [circle,inner sep= 1.5pt,fill=black!20,draw] {4};
	\node (n5) at (0,1.5) [circle,inner sep= 1.5pt,draw] {5};
	\node (n6) at (1.5,1.5) [circle,inner sep= 1.5pt,draw] {6};
    \node (n7) at (3,1.5) [circle,inner sep= 1.5pt,draw] {7};
    \node (n8) at (4.5,1.5) [circle,inner sep= 1.5pt,fill=black!20,draw] {8};

	\node (TV1) at (0,-0.6)   {{\small $0.46$}};
	\node (TV2) at (1.5,-0.6)   {{\small $0.38$}};
	\node (TV3) at (3,-0.6)   {{\small $0.30$}};
	\node (TV4) at (4.5,-0.6)   {{\small $0.16$}};
	\node (TV5) at (0,2.1)   {{\small $0.46$}};
	\node (TV6) at (1.5,2.1)   {{\small $0.40$}};
	\node (TV7) at (3,2.1)   {{\small $0.32$}};
	\node (TV8) at (4.5,2.1)   {{\small $0.22$}};

    \node (u1) at (6,0.2) [circle,inner sep= 1.5pt,draw] {$u_1$};
    \node (u2) at (6,1.7) [circle,inner sep= 1.5pt,draw] {$u_2$};

	\node (u11) at (6,2.4)   {{\small $0$}};
	\node (u22) at (6,0.9)   {{\small $0$}};

	\path[]
	(u1) [->,thick] edge node[below] {} (n4)	
    (u2) [->,thick] edge node[below] {} (n8);

	\path[]	(n2) [<->,thick] edge node[below] {} (n3);
	\path[]	(n6) [<->,thick] edge node[below] {} (n5);
	\path[] 	(n6) [<->,thick] edge node[below] {} (n1);
	\path[] 	(n2) [<->,thick] edge node[below] {} (n6); 
	\path[] 	(n7) [<->,thick] edge node[below] {} (n6); 
	\path[] 	(n3) [<->,thick] edge node[below] {} (n6);
	\path[] 	(n3) [<->,thick] edge node[below] {} (n7);
	\path[] 	(n8) [<->,thick] edge node[below] {} (n7);
	\path[] 	(n8) [<->,thick] edge node[below] {} (n3);
	\path[] 	(n4) [<->,thick] edge node[below] {} (n3);
\end{tikzpicture}
\par\end{centering}
\caption{An eight-node SAN $\mathcal{G}_{8}$. The entries of $\boldsymbol{v}_{1}(L_{B})$
corresponding to each agent are shown close to each node (with a two
decimal point accuracy).}

\label{fig:8-node-network}
\end{figure}

It turns out that the entries of the eigenvector\textbf{ $\boldsymbol{v}_{1}(L_{B})$}
are influenced by the selection of leader agents. We provide an example
to demonstrate the utility of Theorem \ref{thm:reachability-FSN-SAN}.
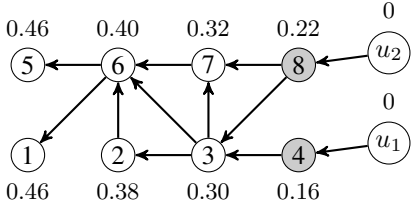
\begin{figure}[tbh]
\centering{}\begin{tikzpicture}[scale=0.8, >=stealth',   pos=.8,  photon/.style={decorate,decoration={snake,post length=1mm}} ]
	\node (n1) at (0,0) [circle,inner sep= 1.5pt,draw] {1};
	\node (n2) at (1.5,0) [circle,inner sep= 1.5pt,draw] {2};
    \node (n3) at (3,0) [circle,inner sep= 1.5pt,draw] {3};
    \node (n4) at (4.5,0) [circle,inner sep= 1.5pt,fill=black!20,draw] {4};
	\node (n5) at (0,1.5) [circle,inner sep= 1.5pt,draw] {5};
	\node (n6) at (1.5,1.5) [circle,inner sep= 1.5pt,draw] {6};
    \node (n7) at (3,1.5) [circle,inner sep= 1.5pt,draw] {7};
    \node (n8) at (4.5,1.5) [circle,inner sep= 1.5pt,fill=black!20,draw] {8};

	\node (TV1) at (0,-0.6)   {{\small $0.46$}};
	\node (TV2) at (1.5,-0.6)   {{\small $0.38$}};
	\node (TV3) at (3,-0.6)   {{\small $0.30$}};
	\node (TV4) at (4.5,-0.6)   {{\small $0.16$}};
	\node (TV5) at (0,2.1)   {{\small $0.46$}};
	\node (TV6) at (1.5,2.1)   {{\small $0.40$}};
	\node (TV7) at (3,2.1)   {{\small $0.32$}};
	\node (TV8) at (4.5,2.1)   {{\small $0.22$}};

    \node (u1) at (6,0.2) [circle,inner sep= 1.5pt,draw] {$u_1$};     
    \node (u2) at (6,1.7) [circle,inner sep= 1.5pt,draw] {$u_2$};
	\node (u11) at (6,2.4)   {{\small $0$}}; 	
    \node (u22) at (6,0.9)   {{\small $0$}};
	\path[] 	
    (u1) [->,thick] edge node[below] {} (n4) 	
    (u2) [->,thick] edge node[below] {} (n8);
	\path[] 	(n3) [->,thick] edge node[below] {} (n2);
	\path[] 	(n6) [->,thick] edge node[below] {} (n5);
	\path[] 	(n6) [->,thick] edge node[below] {} (n1);
	\path[] 	(n2) [->,thick] edge node[below] {} (n6);
	\path[] 	(n7) [->,thick] edge node[below] {} (n6);
	\path[] 	(n3) [->,thick] edge node[below] {} (n6);
	\path[] 	(n3) [->,thick] edge node[below] {} (n7);
	\path[] 	(n8) [->,thick] edge node[below] {} (n7);
	\path[] 	(n8) [->,thick] edge node[below] {} (n3); 
	\path[] 	(n4) [->,thick] edge node[below] {} (n3);

\end{tikzpicture}\caption{The FSN network corresponding to the network $\mathcal{G}_{8}$ in
Figure \ref{fig:8-node-network}.}
\label{fig:8-node-network-FSN}
\end{figure}

\begin{example}
\label{exa:SAN} Consider a SAN on the network $\mathcal{G}_{8}$
shown in Figure \ref{fig:8-node-network}; each agent holds a three-dimensional
state and agents $4$ and $8$ are leaders that are directly influenced
by the homogeneous input $\boldsymbol{u}=(\boldsymbol{u}_{1}^{\top},\thinspace\boldsymbol{u}_{2}^{\top})^{\top}$,
where $\boldsymbol{u}_{1}=\boldsymbol{u}_{2}=(0.7,\thinspace0.8,\thinspace0.9)^{\top}\in\mathbb{R}^{3}$.
The initial states of agents are randomly selected from $[0,1]\times[0,1]\times[0,1]$.
Computing $\boldsymbol{v}_{\text{1}}(L_{B})$ corresponding to the
perturbed Laplacian in this example yields, 
\[
\boldsymbol{v}_{\text{1}}(L_{B})=(0.46,\thinspace0.38,\,0.30,\,0.16,\,0.46,\,0.40,\,0.32,\,0.22){}^{\top}\text{.}
\]
One can observe from Figure \ref{fig:8-node-network} that for each
agent $i\in\mathcal{V}$, there exists a directed path from $\boldsymbol{u}_{1}$
or $\boldsymbol{u}_{2}$ to $i$ such that the entries in $\boldsymbol{v}_{1}(L_{B})$
along this path are monotonically increasing. Therefore, the associated
FSN network according to Definition \ref{def:fsn-network-SAN}, is
as shown in Figure \ref{fig:8-node-network-FSN}. One can observe
from Figure \ref{fig:unsigned-3d-x} that each agent tends to track
the external input directly in the FSN network (see Figure \ref{fig:unsigned-3d-x}b)
instead of aggregating and moving together towards the external input,
this is shown in Figure \ref{fig:unsigned-3d-x}a. 
\end{example}
Theorem \ref{thm:reachability-FSN-SAN} ensures that all agents in
the FSN network of a SAN are influenced by the external inputs, namely,
LF-reachability can be guaranteed. Therefore, a SAN (\ref{eq:unsigned-LF-overall})
can exhibit either consensus or clustering over the corresponding
FSN network, depending on heterogeneity of the external input; see
Lemma \ref{lem:steady-state-unsigned-semi} in Appendix. One can verify
that constructing the FSN network using eigenvectors other than $\boldsymbol{v}_{1}(L_{B})$
do not ensure the LF-reachability according to Definition \ref{def:fsn-network-SAN}.

Inspired by Theorem \ref{thm:reachability-FSN-SAN}, we postulate
that if one reverses the construction of FSN network for SANs (each
agent now follows neighbors whose respective rates of change in state
are relatively faster), the influence of external input exerted on
the network can be weaken or even eliminated. One can refer to the
resulting reduced network as following the faster neighbor (FFN) network.
In this case, agents in the FFN network are not reachable from the
external input. This can be useful when the external input, say, represents
epidemics or rumors, and the network structure is rearranged in a
distributed manner by each agent to attenuate the spreading process.
For example, the FFN network of $\mathcal{G}_{8}$ in Figure \ref{fig:8-node-network}
is shown in Figure \ref{fig:8-node-network-FFN}; in this case, the
influence from external inputs to leaders can be eliminated since
the rate of change in state of external inputs can be viewed as zero.
The trajectory of SAN on FFN network is shown on the right-hand plot
in Figure \ref{fig:unsigned-3d-x}. In FFN networks, the influence
structure is reversed in contrast to FSN network. Therefore, only
leader agents (agents $4$ and $8$) are influenced by external inputs
and as a result, the influence of external inputs on the follower
agents have been eliminated.\textbf{ }In this paper, we shall concentrate
on FSN networks; such networks closely abstract means of enhancing
the spreading process on a network.

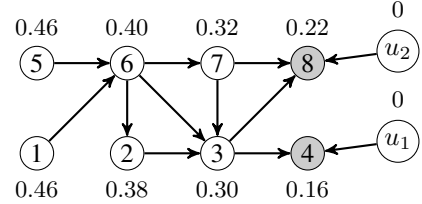
\begin{figure}[tbh]
\centering{}\begin{tikzpicture}[scale=0.8, >=stealth',   pos=.8,  photon/.style={decorate,decoration={snake,post length=1mm}} ]
	\node (n1) at (0,0) [circle,inner sep= 1.5pt,draw] {1};
	\node (n2) at (1.5,0) [circle,inner sep= 1.5pt,draw] {2};
    \node (n3) at (3,0) [circle,inner sep= 1.5pt,draw] {3};
    \node (n4) at (4.5,0) [circle,inner sep= 1.5pt,fill=black!20,draw] {4};
	\node (n5) at (0,1.5) [circle,inner sep= 1.5pt,draw] {5};
	\node (n6) at (1.5,1.5) [circle,inner sep= 1.5pt,draw] {6};
    \node (n7) at (3,1.5) [circle,inner sep= 1.5pt,draw] {7};
    \node (n8) at (4.5,1.5) [circle,inner sep= 1.5pt,fill=black!20,draw] {8};

	\node (TV1) at (0,-0.6)   {{\small $0.46$}};
	\node (TV2) at (1.5,-0.6)   {{\small $0.38$}};
	\node (TV3) at (3,-0.6)   {{\small $0.30$}};
	\node (TV4) at (4.5,-0.6)   {{\small $0.16$}};
	\node (TV5) at (0,2.1)   {{\small $0.46$}};
	\node (TV6) at (1.5,2.1)   {{\small $0.40$}};
	\node (TV7) at (3,2.1)   {{\small $0.32$}};
	\node (TV8) at (4.5,2.1)   {{\small $0.22$}};

    \node (u1) at (6,0.2) [circle,inner sep= 1.5pt,draw] {$u_1$};     
    \node (u2) at (6,1.7) [circle,inner sep= 1.5pt,draw] {$u_2$};
	\node (u11) at (6,2.4)   {{\small $0$}}; 	
    \node (u22) at (6,0.9)   {{\small $0$}};
   \path[] 	
    (u1) [->,thick] edge node[below] {} (n4) 	
    (u2) [->,thick] edge node[below] {} (n8);
	\path[] 	(n3) [<-,thick] edge node[below] {} (n2);
	\path[] 	(n6) [<-,thick] edge node[below] {} (n5);
	\path[] 	(n6) [<-,thick] edge node[below] {} (n1);
	\path[] 	(n2) [<-,thick] edge node[below] {} (n6);
	\path[] 	(n7) [<-,thick] edge node[below] {} (n6);
	\path[] 	(n3) [<-,thick] edge node[below] {} (n6);
	\path[] 	(n3) [<-,thick] edge node[below] {} (n7);
	\path[] 	(n8) [<-,thick] edge node[below] {} (n7);
	\path[] 	(n8) [<-,thick] edge node[below] {} (n3); 
	\path[] 	(n4) [<-,thick] edge node[below] {} (n3);

\end{tikzpicture}\caption{The FFN network corresponding to the network $\mathcal{G}_{8}$ in
Figure \ref{fig:8-node-network}. }
\label{fig:8-node-network-FFN}
\end{figure}

\begin{figure*}[tbh]
\begin{centering}
\includegraphics[width=5.5cm]{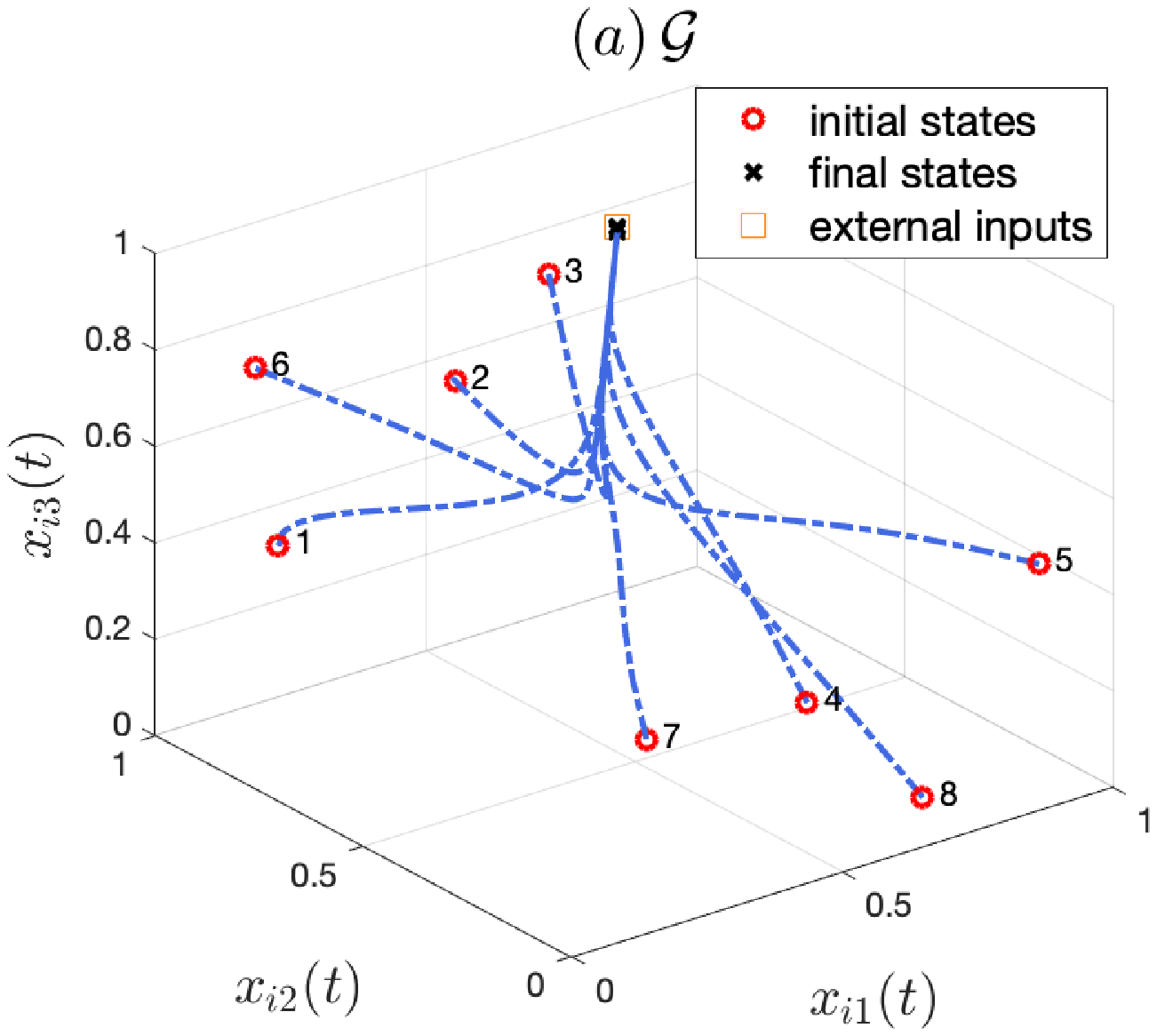}\,\,\includegraphics[width=5.5cm]{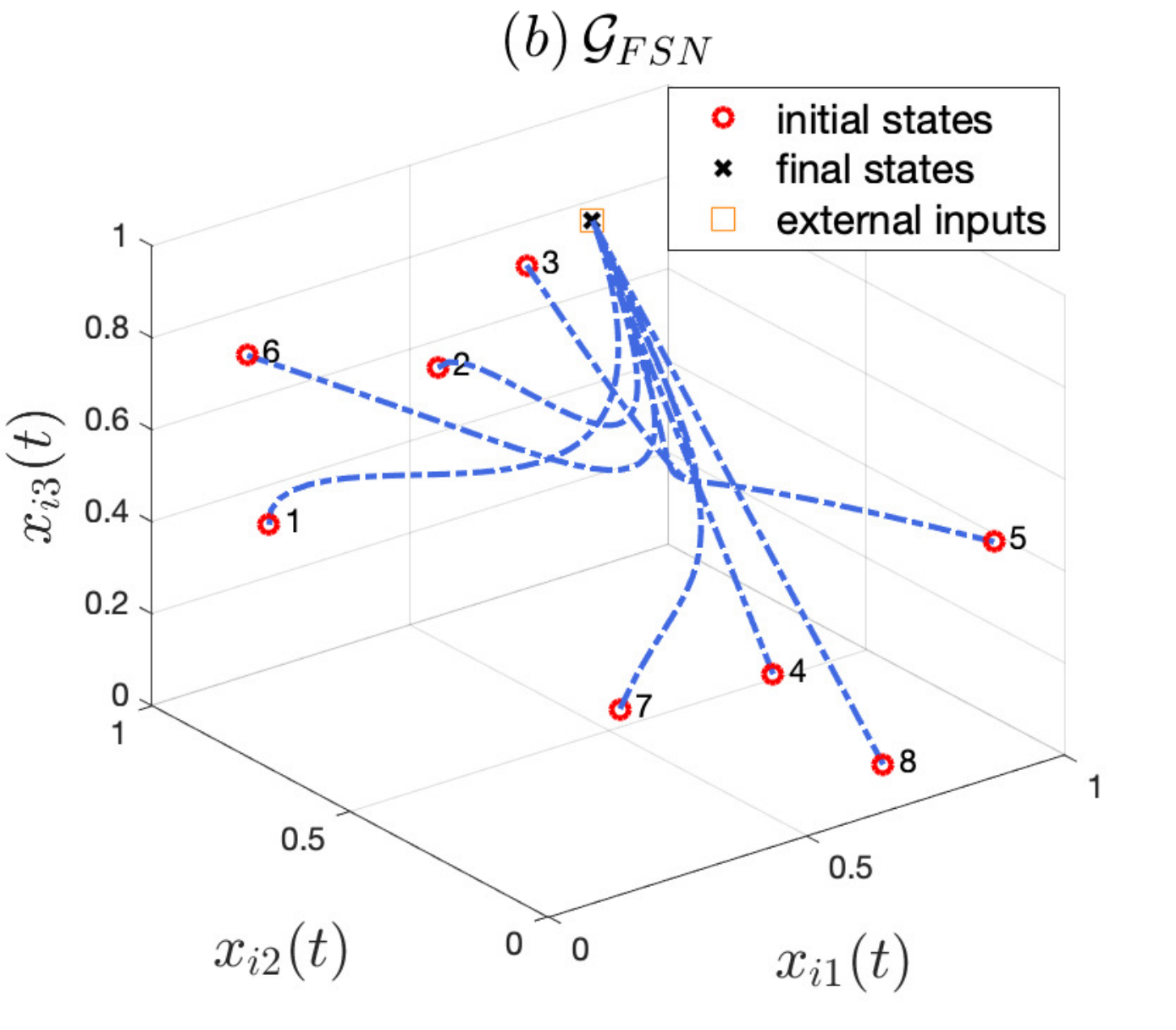}\,\,\includegraphics[width=5.5cm]{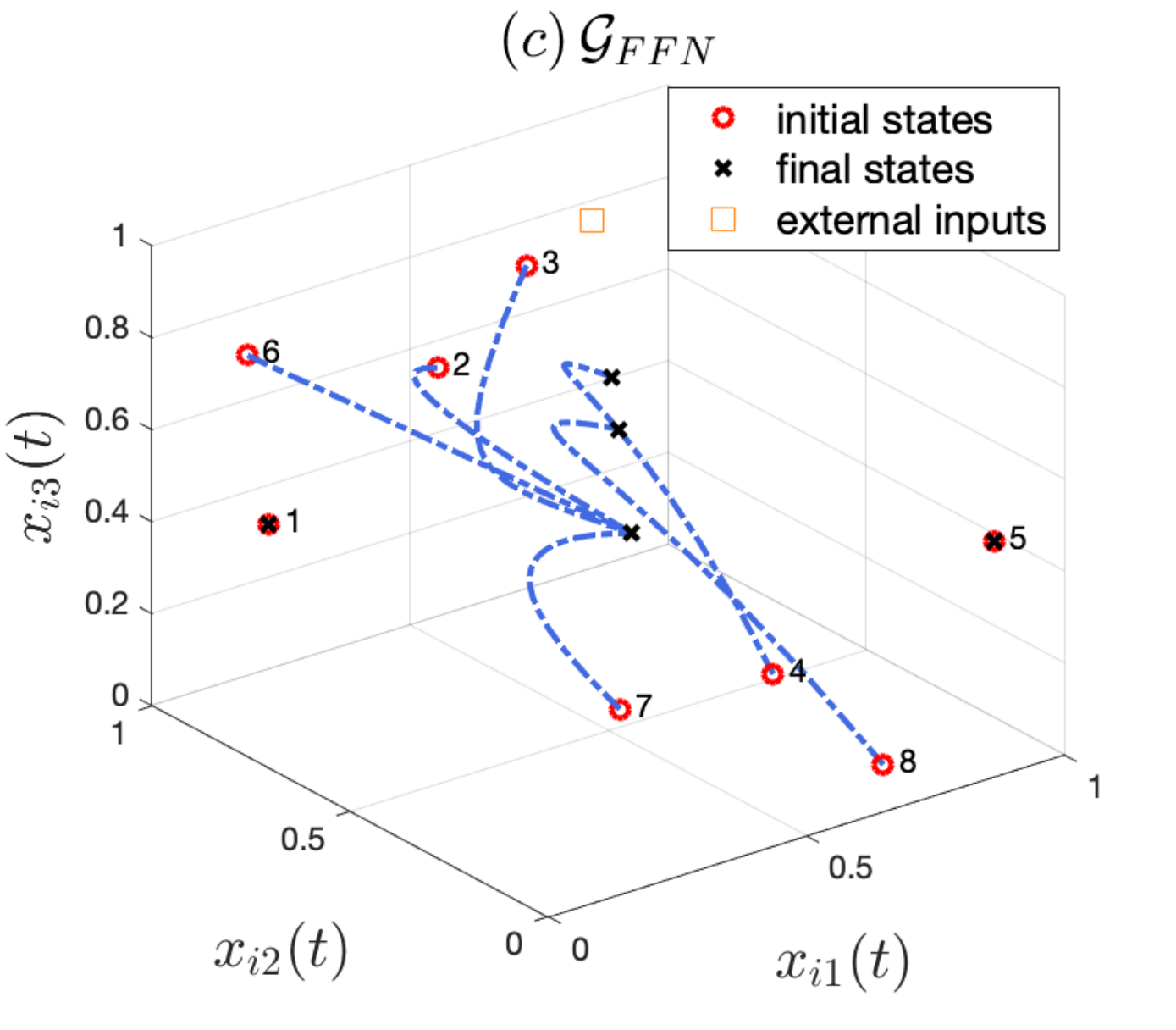}
\par\end{centering}
\centering{}\caption{State trajectories of agents in SAN on the network $\mathcal{G}_{8}$
shown in Figure \ref{fig:8-node-network} as well as its associated
FSN network (Figure \ref{fig:8-node-network-FSN}) and FFN network
(Figure \ref{fig:8-node-network-FFN}), respectively.}
\label{fig:unsigned-3d-x}
\end{figure*}

\subsection{Convergence Rate Enhancement }

In order to evaluate the performance of neighbor selection based on
$\boldsymbol{v}_{\text{1}}(L_{B})$, we now proceed to examine the
convergence rate of SANs on the resultant FSN networks. Note that
the smallest non-zero eigenvalue of the perturbed Laplacian of a SAN
characterizes the convergence rate of the multi-agent system towards
its steady-state, either consensus or clustering \cite{clark2018maximizing,xia2017analysis,pirani2016smallest}.
We provide the following result on the convergence rate of SAN on
connected unsigned networks and the corresponding FSN networks. 
\begin{thm}
\label{thm:convergence-rate-FSN-SAN} Let $\bar{\mathcal{G}}=(\mathcal{V},\mathcal{\bar{E}},\bar{W})$
denote the FSN network of a SAN $\mathcal{G}=(\mathcal{V},\mathcal{E},W)$
with the input matrix $B$. Then 
\[
\lambda_{1}(L_{B}(\bar{\mathcal{G}}))\ge\lambda_{1}(L_{B}(\mathcal{G})),
\]
where equality holds only when all agents are leaders.
\end{thm}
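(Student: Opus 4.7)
The plan is to test the positive eigenvector $v_1 := \boldsymbol{v}_1(L_B(\mathcal{G}))$ (positive by Lemma 1) against $L_B(\bar{\mathcal{G}})$ and then apply a Collatz--Wielandt argument. Because $\bar{\mathcal{G}}$ is directed, $L_B(\bar{\mathcal{G}})$ is generally non-symmetric; however, its off-diagonal entries are non-positive and its diagonal entries are non-negative, so for any sufficiently large $s$ the shifted matrix $M_s := sI - L_B(\bar{\mathcal{G}})$ is entrywise non-negative. Perron--Frobenius then gives a real non-negative spectral radius $\rho(M_s)$, and the eigenvalue of $L_B(\bar{\mathcal{G}})$ with smallest real part equals $s - \rho(M_s)$; this is the natural interpretation of $\lambda_1(L_B(\bar{\mathcal{G}}))$ in the non-symmetric setting.

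The central identity is obtained by expanding the $i$th row. Starting from $L_B(\mathcal{G}) v_1 = \lambda_1 v_1$ and using $\bar{w}_{ij}=w_{ij}$ on retained edges and $\bar{w}_{ij}=0$ on cut edges, a direct computation gives
\begin{equation*}
[L_B(\bar{\mathcal{G}}) v_1]_i - \lambda_1 [v_1]_i \;=\; \sum_{j \in \mathcal{N}_i,\, \bar{w}_{ij}=0} w_{ij}\bigl([v_1]_j - [v_1]_i\bigr).
\end{equation*}
By Definition 1, $\bar{w}_{ij}=0$ precisely when $[v_1]_i \le [v_1]_j$, so every summand is non-negative and hence $L_B(\bar{\mathcal{G}}) v_1 \ge \lambda_1 v_1$ entrywise. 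Equivalently $M_s v_1 \le (s-\lambda_1) v_1$ with $v_1 > 0$, and because $M_s \ge 0$, Collatz--Wielandt yields $\rho(M_s) \le s - \lambda_1$, which after unshifting is precisely $\lambda_1(L_B(\bar{\mathcal{G}})) \ge \lambda_1(L_B(\mathcal{G}))$.

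For the equality clause I argue contrapositively: suppose some agent $k$ is a follower, so $b_k=0$. The eigen-equation at $k$ reads $\sum_{j \in \mathcal{N}_k} w_{kj}([v_1]_k - [v_1]_j) = \lambda_1 [v_1]_k > 0$, so at least one neighbor $j$ of $k$ satisfies $[v_1]_j < [v_1]_k$ strictly. Then at row $j$ the identity above contains the strictly positive contribution $w_{jk}([v_1]_k - [v_1]_j)$, so $[L_B(\bar{\mathcal{G}}) v_1]_j > \lambda_1 [v_1]_j$. To promote this entrywise strict inequality to a strict eigenvalue inequality, I observe that each retained edge in $\bar{\mathcal{G}}$ goes from a smaller to a larger $v_1$-value, so the auxiliary directed graph (taken modulo classes of equal $v_1$-value) is acyclic; placing $L_B(\bar{\mathcal{G}})$ in the corresponding block-triangular (Frobenius normal) form and pairing with the left Perron eigenvector of $M_s$, the equality condition in Perron--Frobenius forces the residual to vanish on the support of that left eigenvector, which combined with the DAG ordering forces $\rho(M_s) < s-\lambda_1$. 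Conversely, if every agent is a leader then $B = I$, $v_1 = \mathds{1}_n/\sqrt{n}$ and every $\boldsymbol{v}_{1}(L_B)_{ij}=1$, so $\bar{\mathcal{G}}$ has no edges and $L_B(\bar{\mathcal{G}}) = I$, giving $\lambda_1(L_B(\bar{\mathcal{G}})) = 1 = \lambda_1(L_B(\mathcal{G}))$.

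The principal obstacle is the loss of self-adjointness: ordinary Rayleigh-quotient monotonicity does not apply, and I must substitute Collatz--Wielandt on the shifted non-negative matrix $M_s$, made viable by the entrywise positivity of $v_1$ from Lemma 1. A secondary subtlety lies in the equality clause, where a naive application of Collatz--Wielandt gives only $\rho(M_s) \le s - \lambda_1$ without strictness; here the DAG structure of $\bar{\mathcal{G}}$ inherited from the $v_1$-ordering is what allows the entrywise strict inequality (guaranteed whenever any follower exists) to propagate into a strict separation of the Perron roots.
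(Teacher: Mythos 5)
Your proof of the weak inequality $\lambda_{1}(L_{B}(\bar{\mathcal{G}}))\ge\lambda_{1}(L_{B}(\mathcal{G}))$ is correct and takes a genuinely different route from the paper's. The paper compares both quantities to the constant $1$: it relabels agents so that $L_{B}(\bar{\mathcal{G}})$ is lower triangular (the retained-edge relation is acyclic because each retained edge strictly decreases the $\boldsymbol{v}_{1}$-value), reads off $\lambda_{1}(L_{B}(\bar{\mathcal{G}}))=\min_{i}[L_{B}(\bar{\mathcal{G}})]_{ii}\ge1$ using Theorem \ref{thm:reachability-FSN-SAN}, and then uses Weyl's inequality plus an eigenvector-coincidence argument to get $\lambda_{1}(L_{B}(\mathcal{G}))\le1$, with equality only when all agents are leaders. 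Your residual identity combined with Collatz--Wielandt on $M_{s}=sI-L_{B}(\bar{\mathcal{G}})$ compares the two eigenvalues directly through the common test vector $\boldsymbol{v}_{1}$; this is arguably cleaner and avoids the paper's implicit unit-weight assumption behind the claim $[L_{B}(\bar{\mathcal{G}})]_{ii}\ge1$.

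The strictness clause, however, has a genuine gap. For a \emph{reducible} non-negative matrix, $M_{s}v_{1}\le(s-\lambda_{1})v_{1}$ with $v_{1}>0$ and strict inequality in one coordinate does not force $\rho(M_{s})<s-\lambda_{1}$: take $M_{s}=\mathrm{diag}(2,1)$ and $v_{1}=\mathds{1}_{2}$, where the second coordinate is strict yet $\rho(M_{s})=2$. Strictness follows only if the residual is strictly negative somewhere on the support of a left Perron eigenvector, and the coordinate $j$ you exhibit (a neighbor of a follower with $[v_{1}]_{j}<[v_{1}]_{k}$) has no reason to lie in that support; the ``DAG ordering'' does not bridge this. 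What actually matters is the row $i^{*}$ attaining $\min_{i}[L_{B}(\bar{\mathcal{G}})]_{ii}$: running your row identity there shows that equality $\rho(M_{s})=s-\lambda_{1}$ forces $\sum_{j}\bar{w}_{i^{*}j}[v_{1}]_{j}=0$, hence $i^{*}$ has no retained out-neighbors, hence (to avoid contradicting $L_{B}(\bar{\mathcal{G}})v_{1}\ge\lambda_{1}v_{1}$ at row $i^{*}$) $b_{i^{*}}=1$ and $\lambda_{1}=1$. So you still need the separate fact that $\lambda_{1}(L_{B}(\mathcal{G}))<1$ whenever a follower exists---obtainable from the Rayleigh quotient at $\mathds{1}_{n}$, which gives $\lambda_{1}\le\mathds{1}_{n}^{\top}L_{B}(\mathcal{G})\mathds{1}_{n}/n<1$ as soon as one diagonal entry of $\text{{\bf diag}}(B\mathds{1}_{m})$ vanishes, or from the paper's Weyl argument. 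Without that ingredient the forward direction of the equality claim is not established; your converse direction (all leaders implies equality) is fine.
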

\begin{proof}
Denote the perturbed Laplacian matrix $L_{B}(\bar{\mathcal{G}})$
and $L_{B}(\mathcal{G})$ as $\bar{L}_{B}$ and $L_{B}$, respectively.
According to Definition \ref{def:fsn-network-SAN}, the FSN network
$\bar{\mathcal{G}}$ is a directed acyclic network. Therefore, one
can relabel the agents in $\bar{\mathcal{G}}$ such that $\bar{W}$
is a lower triangular matrix, implying that $\bar{L}_{B}$ is also
lower triangular. Thus, the eigenvalues of $\bar{L}_{B}$ are exactly
the entries on its diagonal. 

Note that each agent in the FSN network has at least one in-degree
neighbor, namely, the diagonal entries, satisfying that $[\bar{L}_{B}]_{ii}\geq1$
for all $i\in\mathcal{V}$, which in turn, implies that $\lambda_{1}(\bar{L}_{B})\geq1.$
In particular, if all agents are leaders ($\text{{\bf diag}}(B\mathds{1}_{m})=I$),
the eigenvector corresponding to $\lambda_{1}(L_{B})$ is $a_{0}\mathds{1}_{n}$
($a_{0}\in\mathbb{R}$). Then according to Definition \ref{def:fsn-network-SAN},
all the edges in the graph $\mathcal{G}$ will be eliminated. Therefore,
$\lambda_{1}(\bar{L}_{B})=1$. Recall that $L_{B}=L+\text{{\bf diag}}(B\mathds{1}_{m}).$
Applying Weyl theorem (\cite[Theorem 4.3.1,  p.239]{horn2012matrix}),
one has, 
\begin{equation}
\lambda_{1}(L_{B})\leq\lambda_{1}(L)+\lambda_{n}(\text{{\bf diag}}(B\mathds{1}_{m}));
\end{equation}
due to the fact $\lambda_{1}(L)=0$ and $\lambda_{n}(\text{{\bf diag}}(B\mathds{1}_{m}))=1$,
it follows that $\lambda_{1}(L_{B})\leq1.$

On the one hand, if $\text{{\bf diag}}(B\mathds{1}_{m})=I$, again
using Weyl theorem, it follows that,

\begin{equation}
\lambda_{1}(L_{B})\geq\lambda_{1}(L)+\lambda_{1}(I)=1;
\end{equation}
hence, $\lambda_{1}(L_{B})=1.$ Now suppose that $\text{{\bf diag}}(B\boldsymbol{1}_{m})\neq I$.
Let
\begin{equation}
L_{B}=L+I+\triangle,\label{eq:proof-theorem-2-1}
\end{equation}
where $\triangle\in\mathbb{R}^{n\times n}$ is a non-zero diagonal
matrix whose diagonal entries are either $-1$ or $0$. In fact, (\ref{eq:proof-theorem-2-1})
produces all possible perturbed Laplacians apart from the case that
all agents are leaders. Without loss of generality, we choose $\triangle=\text{{\bf diag}}(-1,0,\ldots,0)^{\top}.$
Then, by applying Weyl theorem one more time, we have,
\begin{equation}
\lambda_{1}(L+I+\triangle)\leq\lambda_{1}(L+I)+\lambda_{n}(\triangle)=1.\label{eq:proof-theorem-2-2}
\end{equation}

According to $(L+I)(a\mathds{1}_{n})=a\mathds{1}_{n}$, where $a\in\mathbb{R}$,
it follows that $\text{{\bf span}}\left\{ \mathds{1}_{n}\right\} $
is an eigenspace of the matrix $L+I$ corresponding to the eigenvalue
$\lambda_{1}(L+I)$. 

For the matrix $\triangle$ in (\ref{eq:proof-theorem-2-2}), assume
that there exists $a_{0}\in\mathbb{R}$ such that,
\begin{equation}
\triangle(a_{0}\mathds{1}_{n})=\lambda_{n}(\triangle)a_{0}\mathds{1}_{n};
\end{equation}
as $\triangle(a_{0}\mathds{1}_{n})=(-a_{0},0,\ldots,0)^{\top}$ and
$\lambda_{n}(\triangle)a_{0}\mathds{1}_{n}=(0,0,\ldots,0)^{\top}$,
one has $a_{0}=0$. 

Thus, there does not exist a common non-zero eigenvector corresponding
to $\lambda_{1}(L+I+\triangle)$, $\lambda_{1}(L+I)$ and $\lambda_{n}(\triangle)$,
respectively. According to Weyl theorem, one has,
\begin{equation}
\lambda_{1}(L+I+\triangle)<1.
\end{equation}

Thus, one can conclude that $\lambda_{1}(\bar{L}_{B})>\lambda_{1}(L_{B})$
if $\text{{\bf diag}}(B\mathds{1}_{m})\neq I$ and $\lambda_{1}(\bar{L}_{B})=\lambda_{1}(L_{B})=1$
when $\text{{\bf diag}}(B\mathds{1}_{m})=I$. 
\end{proof}
Theorem \ref{thm:convergence-rate-FSN-SAN} provides theoretical guarantees
on the convergence rate of FSN network $\bar{\mathcal{G}}$ as compared
with the original network $\mathcal{G}$. Let us continue to employ
Example \ref{exa:SAN} to demonstrate the convergence rate of SAN
on the original network $\mathcal{G}$ and its related FSN network
$\bar{\mathcal{G}}$. The convergence rate comparison of SAN in Example
\ref{exa:SAN} on both original network and the associated FSN network
is demonstrated in Figure \ref{fig:trajectory-original-8-node}. In
this case, the convergence rate of SAN on the original network $\mathcal{G}$
and FSN network $\mathcal{G}_{FSN}$ are $\lambda_{\text{1}}(L_{B}(\mathcal{G}))=0.1414$
and $\lambda_{\text{1}}(L_{B}(\mathcal{G}_{FSN}))=1$, respectively.

\begin{figure}[tbh]
\begin{centering}
\includegraphics[width=9cm]{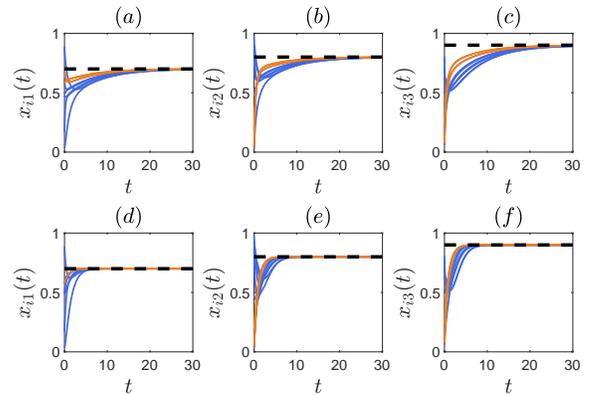}
\par\end{centering}
\caption{State trajectories of agents in the SAN (\ref{eq:unsigned-LF-overall})
on the network in Figure \ref{fig:8-node-network} ((a)-(c)). State
trajectories of agents in the SAN (\ref{eq:unsigned-LF-overall})
on the FSN network in Figure \ref{fig:8-node-network-FSN} ((d)-(f)).
The orange and blue lines in each plot are  trajectories of leader
and follower agents, respectively. The dotted line in each panel represents
external input.}
\label{fig:trajectory-original-8-node}
\end{figure}

\subsection{Neighbor Selection in SANs\label{subsec:NS-SAN}}

So far, we have presented a neighbor selection framework with guaranteed
performance using the eigenvector of perturbed Laplacian. However,
this eigenvector is a network-level quantity, hindering the direct
applicability of this setup for large-scale networks. For such networks,
it is desirable that decision-making relies only on local observations
\cite{li2014designing}. 

In this section, we establish a quantitative link between the Laplacian
eigenvector and the relative rate of change in the state of neighboring
agents, referred to as the relative tempo. Using this approach, we
connect the global property of the network to a locally measurable
quantity-- as such, we are able to propose a fully distributed neighbor
selection algorithm. In order to simplify our derivation, we first
introduce the so-called selection matrix.

The selection matrix of a subset of agents $\mathcal{V}^{\prime}=\left\{ i_{1},\ldots,i_{s}\right\} \subset\mathcal{V}$
is defined as $\phi(\mathcal{V}^{\prime})=(\boldsymbol{e}_{i_{1}},\cdots,\boldsymbol{e}_{i_{s}})^{\top}\in\mathbb{R}^{s\times n}.$
We now proceed to introduce the notion of relative tempo, characterizing
the steady-state of the relative rate of change in state between two
subsets of agents. 
\begin{defn}
\label{def:relative-tempo}  Let $\mathcal{V}_{1}\subset\mathcal{V}$
and $\mathcal{V}_{2}\subset\mathcal{V}$ be two subsets of agents
in multi-agent network (\ref{eq:unsigned-LF-overall}) (or (\ref{eq:consensus-overall})).
Then the relative tempo between agents in $\mathcal{V}_{1}$ and $\mathcal{V}_{2}$
is defined as the limiting ratio,
\begin{equation}
{\color{blue}\mathbb{L}}(\mathcal{V}_{1},\mathcal{V}_{2})=\lim_{t\rightarrow\infty}\frac{\|\phi(\mathcal{V}_{1})\otimes I_{d}\dot{\boldsymbol{x}}(t)\|}{\|\phi(\mathcal{V}_{2})\otimes I_{d}\dot{\boldsymbol{x}}(t)\|},\label{eq:relative-tempo-definition}
\end{equation}
where $\phi(\mathcal{V}_{1})$ and $\phi(\mathcal{V}_{2})$ are selection
matrices associated with $\mathcal{V}_{1}$ and $\mathcal{V}_{2}$,
respectively. 
\end{defn}
The relative tempo in Definition \ref{def:relative-tempo} was initially
examined in \cite{HaibinAcc14}, characterizing relative influence
of agents in consensus-type networks, and subsequently being employed
to construct a centrality measure that can be inferred from network
data \cite{shao2017inferring}. This paper provides a more systematic
treatment for the application of relative tempo in the distributed
neighbor selection problem. As we shall see subsequently, the limit
in (\ref{eq:relative-tempo-definition}) exists, implying that the
relative tempo is well-defined. We now proceed to formally provide
a quantitative connection between relative tempo and the Laplacian
eigenvector. 
\begin{thm}
\label{thm:relative-tempo-SAN}Let $\mathcal{V}_{1}\subset\mathcal{V}$
and $\mathcal{V}_{2}\subset\mathcal{V}$ be two subsets of agents
in the SAN (\ref{eq:unsigned-LF-overall}). Then
\[
{\color{blue}\mathbb{L}}(\mathcal{V}_{1},\mathcal{V}_{2})=\frac{\|\phi(\mathcal{V}_{1})\boldsymbol{v_{1}}(L_{B})\|}{\|\phi(\mathcal{V}_{2})\boldsymbol{v_{1}}(L_{B})\|}.
\]
\end{thm}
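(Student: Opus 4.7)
The plan is to derive the claim from the solution of the linear ODE \eqref{eq:unsigned-LF-overall} by projecting the state onto the eigenbasis of $L_B$ and isolating the slowest mode. By Lemma \ref{lem:uniqueness-unsigned}, $L_B$ is positive definite, so it is invertible and the SAN admits a unique equilibrium $\boldsymbol{x}^{\ast} = (L_B^{-1}B \otimes I_d)\boldsymbol{u}$. First I would introduce the error variable $\boldsymbol{y}(t) = \boldsymbol{x}(t) - \boldsymbol{x}^{\ast}$, which satisfies the homogeneous equation $\dot{\boldsymbol{y}} = -(L_B \otimes I_d)\boldsymbol{y}$, so that $\dot{\boldsymbol{x}}(t) = \dot{\boldsymbol{y}}(t) = -(L_B \otimes I_d)e^{-(L_B\otimes I_d)t}\boldsymbol{y}(0)$.

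Next I would diagonalize. Let $\lambda_1 < \lambda_2 \le \cdots \le \lambda_n$ be the eigenvalues of $L_B$ with orthonormal eigenvectors $\boldsymbol{v}_1, \ldots, \boldsymbol{v}_n$, and expand
\begin{equation}
\boldsymbol{y}(0) = \sum_{i=1}^{n} \boldsymbol{v}_i \otimes \boldsymbol{c}_i,
\end{equation}
for some coefficient vectors $\boldsymbol{c}_i \in \mathbb{R}^d$. Using the mixed-product rule for the Kronecker product and $L_B \boldsymbol{v}_i = \lambda_i \boldsymbol{v}_i$, I obtain
\begin{equation}
\dot{\boldsymbol{x}}(t) = -\sum_{i=1}^{n} \lambda_i\, e^{-\lambda_i t}\,(\boldsymbol{v}_i \otimes \boldsymbol{c}_i).
\end{equation}
Applying the block-selection $\phi(\mathcal{V}_k)\otimes I_d$ commutes with the Kronecker structure and gives
\begin{equation}
(\phi(\mathcal{V}_k)\otimes I_d)\dot{\boldsymbol{x}}(t) = -\sum_{i=1}^{n} \lambda_i\, e^{-\lambda_i t}\,(\phi(\mathcal{V}_k)\boldsymbol{v}_i \otimes \boldsymbol{c}_i).
\end{equation}

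Then I would factor out $\lambda_1 e^{-\lambda_1 t}$ and take the limit $t \to \infty$. Since $\lambda_1$ is simple and strictly smaller than $\lambda_2,\ldots,\lambda_n$ by Lemma \ref{lem:uniqueness-unsigned}, every remaining exponential $e^{-(\lambda_i-\lambda_1)t}$ for $i\ge 2$ decays to zero, leaving only the $i=1$ term asymptotically. Using the identity $\|\boldsymbol{a}\otimes \boldsymbol{b}\| = \|\boldsymbol{a}\|\,\|\boldsymbol{b}\|$ for the Euclidean norm, the common factors $\lambda_1 e^{-\lambda_1 t}\|\boldsymbol{c}_1\|$ cancel between numerator and denominator of \eqref{eq:relative-tempo-definition}, yielding
\begin{equation}
\mathbb{L}(\mathcal{V}_1,\mathcal{V}_2) = \frac{\|\phi(\mathcal{V}_1)\boldsymbol{v}_1(L_B)\|}{\|\phi(\mathcal{V}_2)\boldsymbol{v}_1(L_B)\|}.
\end{equation}

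The main obstacle is justifying that the slowest mode is actually excited, i.e.\ that $\boldsymbol{c}_1 \neq 0$ so we may cancel $\|\boldsymbol{c}_1\|$. This requires arguing that the initial error $\boldsymbol{y}(0) = \boldsymbol{x}(0) - \boldsymbol{x}^{\ast}$ has a nonzero component along $\boldsymbol{v}_1 \otimes \mathbb{R}^d$; because $\boldsymbol{v}_1(L_B)$ is strictly positive (Lemma \ref{lem:uniqueness-unsigned}) while $\boldsymbol{x}^{\ast}$ lies in a specific affine subspace determined by $\boldsymbol{u}$, this holds for all $\boldsymbol{x}(0)$ outside a proper subspace; so the result should be stated for generic initial conditions, and a brief remark to this effect would close the argument. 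A secondary subtlety is verifying that the two denominators $\|\phi(\mathcal{V}_k)\boldsymbol{v}_1(L_B)\|$ are nonzero, which again follows immediately from positivity of $\boldsymbol{v}_1(L_B)$.
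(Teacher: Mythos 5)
Your argument is correct and is essentially the paper's own: the paper first proves a general spectral lemma (Lemma \ref{lem:relative-tempo-symmetric-matrix}) for the limiting ratio $\lim_{t\rightarrow\infty}\|\psi_{1}\dot{\boldsymbol{x}}(t)\|/\|\psi_{2}\dot{\boldsymbol{x}}(t)\|$ along a diagonalizable linear flow and then applies it to the augmented system matrix that carries the constant inputs as extra zero-dynamics states, whereas you eliminate the input by shifting to the equilibrium $\boldsymbol{x}^{\ast}=(L_{B}^{-1}B\otimes I_{d})\boldsymbol{u}$ before diagonalizing --- the same dominant-mode extraction in a slightly different, arguably cleaner, packaging. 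Your closing caveat that $\boldsymbol{c}_{1}\neq0$ must hold before $\|\boldsymbol{c}_{1}\|$ can be cancelled is legitimate and applies equally to the paper's proof (the formula in Lemma \ref{lem:relative-tempo-symmetric-matrix} degenerates to $0/0$ when the initial condition has no component along the dominant eigenspace), so making the genericity assumption explicit is a refinement rather than a gap.
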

\begin{proof}
Refer to the Appendix
\end{proof}
\begin{rem}
Theorem \ref{thm:relative-tempo-SAN} provides a quantitative connection
between the relative tempo (constructed from local observations of
each agent) and the Laplacian eigenvector of the underlying network.
According to Theorem \ref{thm:reachability-FSN-SAN} and Theorem \ref{thm:convergence-rate-FSN-SAN},
such a connection enables a distributed implementation of neighbor
selection for enhancing the convergence rate of the network. We provide
an example to illustrate Theorem \ref{thm:relative-tempo-SAN}.
\end{rem}
\begin{example}
Consider the following quantity
\begin{equation}
g_{ij}(t)=\frac{\|\dot{\boldsymbol{x}}_{i}(t)\|}{\|\dot{\boldsymbol{x}}_{j}(t)\|},\thinspace i\in\mathcal{V},j\in\mathcal{N}_{i},\label{eq:g_i}
\end{equation}
which satisfies ${\displaystyle \lim_{t\rightarrow\infty}}g_{ij}(t)={\color{blue}\mathbb{L}}(i,j)$
(by Definition \ref{def:relative-tempo}). Let us continue to examine
Example \ref{exa:SAN}. The trajectories of $g_{ij}(t)$ for $i=7$
and $j\in\{3,6,8\}$ are shown in Figure \ref{fig:g-trajectory-semi-unsigned}.
The steady-states of $g_{ij}(t)$ are archived at around $t=10$,
particularly, $g_{73}(10)=1.057$, $g_{76}(10)=0.8123$ and $g_{78}(10)=1.47$,
respectively. In the meanwhile, one has $\boldsymbol{v}_{\text{1}}(L_{B})_{73}=1.0577$,
$\boldsymbol{v}_{\text{1}}(L_{B})_{76}=0.8113$ and $\boldsymbol{v}_{\text{1}}(L_{B})_{78}=1.4694$.
Note that such correspondences are sufficient for the construction
of the associated FSN network (shown in Figure \ref{fig:8-node-network-FSN})
using Definition \ref{def:fsn-network-SAN}. 
\begin{figure}[tbh]
\begin{centering}
\includegraphics[width=7cm]{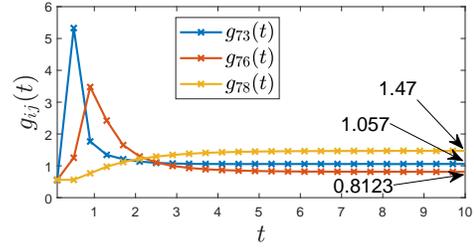}
\par\end{centering}
\caption{Trajectories of $g_{ij}(t)$ for agent $i=7$ and its neighbors $j\in\{3,6,8\}$
in the SAN shown in Figure \ref{fig:8-node-network}.}

\label{fig:g-trajectory-semi-unsigned}
\end{figure}
\end{example}
\begin{rem}
Examining the convergence rate of agents' states in the original network
shown in Figure \ref{fig:trajectory-original-8-node} and that of
$g_{ij}(t)$ in Figure \ref{fig:g-trajectory-semi-unsigned}, one
observes that the SAN (\ref{eq:unsigned-LF-overall}) achieves an
\textquotedbl ordered\textquotedbl{} state characterized by the relative
tempo, prior to the final consensus. 
\end{rem}
By Definition \ref{def:fsn-network-SAN} and Theorem \ref{thm:relative-tempo-SAN},
the reduced neighbor set for each agent for constructing the FSN network
 can be defined as follows. 
\begin{defn}
The reduced neighbor set for agent $i\text{\ensuremath{\in\mathcal{V}}}$
to construct the associated FSN network for SAN (\ref{eq:unsigned-LF-overall})
is defined as
\begin{equation}
\mathcal{N}_{i}^{\text{FSN}}=\left\{ j\in\mathcal{N}_{i}\mid\boldsymbol{v}_{\text{1}}(L_{B})_{ij}>1\right\} =\left\{ j\in\mathcal{N}_{i}\mid{\color{blue}\mathbb{L}}(i,j)>1\right\} .
\end{equation}
\end{defn}
According to Definition \ref{def:relative-tempo}, if ${\color{blue}\mathbb{L}}(i,j)>1$
for a pair of neighboring agents $i,j\in\mathcal{V}$, then the state
of agent $i$ evolves towards the external input in a relatively faster
rate than that of agent $j$. Therefore, agents in a SAN are involved
in a sort of hierarchy encoded in $\boldsymbol{v}_{1}(L_{B})$ according
to Theorem \ref{thm:relative-tempo-SAN}. As such, each agent can
select a specific group of neighbors to interact with for a given
task. The main insight from our discussion is that $\boldsymbol{v}_{\text{1}}(L_{B})_{ij}$
can be estimated for agent $i\in\mathcal{V}$ and $j\in\mathcal{N}_{i}$
via only local measurements, and the obtained reduced network exhibits
LF-reachability property, as stated by Theorem \ref{thm:reachability-FSN-SAN}.

To end the discussion on SANs, we provide the following algorithm
to summarize the flowchart of distributed neighbor selection process
for constructing FSN networks. For the algorithm implementation, we
choose the sampling step size $\delta>0$ to discretize agent state
evolution as $\ensuremath{\tilde{x}_{i}(k)=x_{i}(t_{0}+k\delta)}$
for all $i\in\mathcal{V}$, where $t_{0}=0$ and $k=0,1,\ldots$. 

\begin{algorithm}[h]
\begin{algorithmic}[1] 
\Require{} 
\State{set $k=1$}
\For {each agent $i\in\mathcal{V}$}
\State{choose the termination threshold $\varepsilon_i>0$}
\State{receives $\tilde{x}_{i}(0)$ and $\tilde{x}_{i}(1)$ from $j\in\mathcal{N}_i$}  
\State{computes $g_{ij}(k)=\frac{\|\tilde{x}_{i}(k)-\tilde{x}_{i}(k-1)\|}{\|\tilde{x}_{j}(k)-\tilde{x}_{j}(k-1)\|}$}
\EndFor
\Ensure{}
\Repeat{}
\State{set $k=k+1$}
\For {each agent $i\in\mathcal{V}$}
\State{receives $\tilde{x}_{i}(k)$ from $j\in\mathcal{N}_i$}  
\State{computes $g_{ij}(k)=\frac{\|\tilde{x}_{i}(k)-\tilde{x}_{i}(k-1)\|}{\|\tilde{x}_{j}(k)-\tilde{x}_{j}(k-1)\|}$}
\EndFor
\Until{$\|g_{ij}(k)-g_{ij}(k-1)\|<\varepsilon_i, \forall j\in\mathcal{N}_i$}
\State{$\ensuremath{\bar{w}_{ij}=\begin{cases} w_{ij}, & \ensuremath{g_{ij}(k)>1,}\\ 0, & \ensuremath{g_{ij}(k)\le 1}. \end{cases}}$} 
\end{algorithmic}

\caption{Distributed neighbor selection for SANs.}
\label{algorightm 1}
\end{algorithm}

Note from Algorithm \ref{algorightm 1} that each agent only uses
local accessible state information to construct FSN networks.

\section{Fully-autonomous Networks\label{sec:FAN}}

In this section, we proceed to investigate parallel results for FANs-
the corresponding analysis turns out to be more intricate than those
for SANs. 

\subsection{Reachability Analysis}

Recall that for the eigenvector associated with perturbed Laplacian
matrix $L_{B}$ in SANs (\ref{eq:unsigned-LF-overall}), all elements
of $\boldsymbol{v}_{\text{1}}(L_{B})$ have the same sign. However,
in the case of FANs, the entries in eigenvectors of graph Laplacian
can be positive, negative or equal to zero; this is also valid for
the Fiedler vector $\boldsymbol{v}_{\text{2}}(L)$, the eigenvector
corresponding to the second smallest eigenvalue of graph Laplacian
\cite{fiedler1973algebraic,Fiedler1975,shao2015fiedler}. This situation
renders the extension of the aforementioned neighbor selection framework
-from SANs to FANs- non-trivial. 

In this section, we shall first examine the property of Laplacian
eigenvectors related to SANs, specifically the Fiedler vector $\boldsymbol{v}_{\text{2}}(L)$,
and then proceed to provide the neighbor selection algorithm to construct
the FSN network of FANs for fast convergence. In the following discussions,
we shall refer to a node corresponding to positive, negative or zero
entry in $\boldsymbol{v}_{\text{2}}(L)$ as a positive node, negative
node and zero node, respectively.

In FANs, the structural properties of the network turn out to be critical
in the analysis and design of the neighbor selection algorithm. We
introduce the following results related to the block decomposition
of a graph \cite{harary1966block,Fiedler1975}.

A cut node $i\in\mathcal{V}$ of a graph $\mathcal{G}=(\mathcal{V},\mathcal{E},W)$
is a node such that $\mathcal{G}-\left\{ i\right\} $ is disconnected.
A block of a graph $\mathcal{G}$ is a maximal connected subgraph
of $\mathcal{G}$ with no cut nodes. Two blocks of $\mathcal{G}$
are the neighboring blocks if they are connected via a cut node. Consider
a connected graph $\mathcal{G}$ with blocks $\left\{ B_{i}\right\} $
and cut nodes $\left\{ c_{j}\right\} $, where $i,j\in\mathbb{Z}_{+}$.
The block-cut graph of $\mathcal{G}$, denoted by $\mathcal{B}(\mathcal{G})$,
is defined as the graph with node set composed of blocks and cut nodes,
namely, $\mathcal{V}(\mathcal{B}(\mathcal{G}))=\left\{ B_{i}\right\} \cup\left\{ c_{j}\right\} $,
where two nodes are adjacent if one corresponds to a block $B_{i}$
and the other to a cut node such that $c_{j}\in B_{i}$. 
\begin{lem}
\cite{harary1966block} Let $\mathcal{G}=(\mathcal{V},\mathcal{E},W)$
be a connected graph. Then the block-cut graph of $\mathcal{G}$ is
a tree. 
\end{lem}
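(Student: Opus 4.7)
The plan is to show that the block-cut graph $\mathcal{B}(\mathcal{G})$ of a connected graph $\mathcal{G}$ is both connected and acyclic, which are the two defining properties of a tree. I will exploit the bipartite structure of $\mathcal{B}(\mathcal{G})$ (every edge joins a block-node to a cut-node) and the maximality characterization of blocks (a block is a maximal subgraph containing no cut node of itself).

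First, I would establish connectedness. Given two nodes of $\mathcal{B}(\mathcal{G})$, each is either a block $B$ or a cut node $c$. It suffices to pick a representative vertex in each (any vertex of the block, or the cut node itself) and use a path in $\mathcal{G}$ between these representatives, which exists since $\mathcal{G}$ is connected. Traversing this path edge by edge, every edge of $\mathcal{G}$ belongs to exactly one block, and two consecutive edges along the path either lie in the same block or share their common endpoint as a cut node of $\mathcal{G}$; hence the sequence of blocks encountered, interleaved by the cut nodes through which we pass, yields a walk in $\mathcal{B}(\mathcal{G})$ between the two chosen nodes. The only minor care needed is the case where an endpoint is itself a cut node, which is handled by appending one edge.

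Second, I would establish acyclicity by contradiction. Suppose $\mathcal{B}(\mathcal{G})$ contains a cycle. Since $\mathcal{B}(\mathcal{G})$ is bipartite between block-nodes and cut-nodes, the cycle has even length at least four and reads $B_{1}-c_{1}-B_{2}-c_{2}-\cdots-B_{k}-c_{k}-B_{1}$ with $k\ge 2$, where each $c_{i}$ lies in both $B_{i}$ and $B_{i+1}$ (indices mod $k$). The key claim is that the subgraph $H=B_{1}\cup B_{2}\cup\cdots\cup B_{k}$ of $\mathcal{G}$ is itself a connected subgraph that contains no cut node of $H$: indeed, for any vertex $v\in H$, the cyclic arrangement of the blocks around the cycle provides two internally disjoint $c_{i}$-to-$c_{j}$ routes through the remaining blocks for any pair of cut nodes, so removal of $v$ cannot disconnect $H$. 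Consequently $H$ is a 2-connected subgraph of $\mathcal{G}$ strictly larger than any single $B_{i}$, which contradicts the maximality of each $B_{i}$ as a block of $\mathcal{G}$.

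The main obstacle is making the 2-connectedness of $H$ rigorous, that is, verifying that removing any single vertex of $H$ leaves it connected. The clean way is to argue block by block: removing a vertex $v$ internal to a single $B_{i}$ leaves $B_{i}-\{v\}$ connected (because $v$ is not a cut node of $B_{i}$) and still attached through $c_{i-1},c_{i}$ to the remaining blocks, which are chained via the cycle; removing a cut node $c_{i}$ leaves each $B_{j}-\{c_{i}\}$ connected for $j\ne i,i+1$, and the "short side" and "long side" of the cycle together provide a path avoiding $c_{i}$. Combining the two parts yields that $\mathcal{B}(\mathcal{G})$ is a connected acyclic graph, i.e., a tree.
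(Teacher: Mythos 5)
Your proof is correct. Note that the paper itself offers no proof of this lemma; it is simply cited from Harary's 1966 paper, so there is no in-paper argument to compare against. What you have written is the standard argument from the literature: connectedness of $\mathcal{B}(\mathcal{G})$ by projecting a path of $\mathcal{G}$ onto the alternating block/cut-node sequence, and acyclicity by showing that a cycle of blocks would merge into a single connected subgraph with no cut vertex, contradicting maximality of blocks. Two small points worth tightening. First, your argument quietly uses the facts that every edge of $\mathcal{G}$ lies in exactly one block, that a vertex lying in two distinct blocks is a cut vertex of $\mathcal{G}$, and that two distinct blocks share at most one vertex; these are what make both the path-projection step and the contradiction with maximality legitimate, and they deserve at least a one-line acknowledgment. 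Second, the phrase ``2-connected'' is slightly too strong, since a block may be a single edge; what your block-by-block removal argument actually establishes, and all that is needed, is that the union $H$ is connected and has no cut vertex of itself, hence is contained in a single block of $\mathcal{G}$ that properly contains each $B_{i}$. Your case analysis for vertex removal should also note the case of a vertex that lies in two non-consecutive blocks of the cycle without being one of the $c_{i}$; it is handled exactly as the ``internal vertex'' case, since all the junctions $c_{1},\dots,c_{k}$ survive its removal.
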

Here, we provide an example to illustrate the block-cut tree associated
with a connected graph.
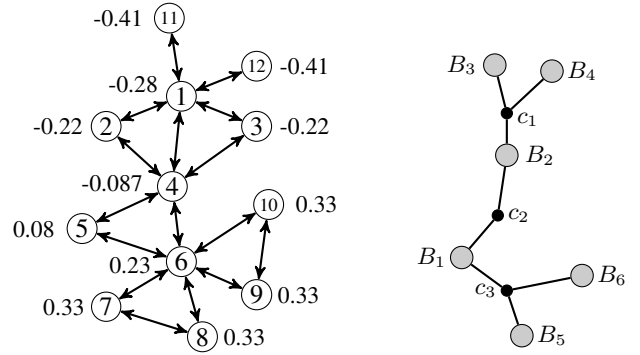
\begin{figure}[tbh]
\begin{centering}
\begin{tikzpicture}[scale=0.8, >=stealth',   pos=.8,  photon/.style={decorate,decoration={snake,post length=1mm}} ]
	\node (n1) at (0.75,1) [circle,inner sep= 1pt,draw] {1};
	\node (n2) at (-0.5,0.5) [circle,inner sep= 1pt,draw] {2};
    \node (n3) at (2,0.5) [circle,inner sep= 1pt,draw] {3};
    \node (n4) at (0.6,-0.5) [circle,inner sep= 1pt,draw] {4};
	\node (n5) at (-0.9,-1.2) [circle,inner sep= 1pt,draw] {5};
	\node (n6) at (0.75,-1.75) [circle,inner sep= 1pt,draw] {6};
    \node (n7) at (-0.5,-2.5) [circle,inner sep= 1pt,draw] {7};
    \node (n8) at (1.1,-3) [circle,inner sep= 1pt,draw] {8};
    \node (n9) at (2,-2.3) [circle,inner sep= 1pt,draw] {9};
    \node (n10) at (2.2,-0.8) [circle,inner sep= 1pt,draw] {{\scriptsize 10}};
    \node (n11) at (0.55,2.3) [circle,inner sep= 1pt,draw] {{\scriptsize 11}};
    \node (n12) at (2,1.5) [circle,inner sep= 1pt,draw] {{\scriptsize 12}};

	\node (TV1) at (-0.05,1.2)   {{\small -0.28}};
	\node (TV2) at (-1.3,0.5)   {{\small -0.22}};
	\node (TV3) at (2.8,0.5)   {{\small -0.22}};
	\node (TV4) at (-0.4,-0.45)   {{\small -0.087}};
	\node (TV5) at (-1.7,-1.2)   {{\small 0.08}};
	\node (TV6) at (-0.1,-1.8)   {{\small 0.23}};
	\node (TV7) at (-1.2,-2.5)   {{\small 0.33}};
	\node (TV8) at (1.8,-3)   {{\small 0.33}};
    \node (TV9) at (2.7,-2.3)   {{\small 0.33}};
    \node (TV10) at (3,-0.8)   {{\small 0.33}};
    \node (TV11) at (-0.3,2.3)   {{\small -0.41}};
    \node (TV12) at (2.8,1.5)   {{\small -0.41}};

	\path[]
	(n1) [<->,thick, right=8] edge node[below] {} (n2);
    \path[]
	(n1) [<->,thick, right=8] edge node[below] {} (n3);

    \path[]
    (n1) [<->,thick, right=8] edge node[below] {} (n4);
    \path[]
    (n1) [<->,thick, right=8] edge node[below] {} (n11);
    \path[]
    (n1) [<->,thick, right=8] edge node[below] {} (n12);

	\path[]
	(n2) [<->,thick, right=8] edge node[below] {} (n4);
	\path[]
	(n3) [<->,thick, right=8] edge node[below] {} (n4);

	\path[]
	(n4) [<->,thick, right=8] edge node[below] {} (n5);
    \path[]
    (n4) [<->,thick, right=8] edge node[below] {} (n6);
	\path[]
	(n5) [<->,thick, right=8] edge node[below] {} (n6);

	\path[]
	(n6) [<->,thick, right=8] edge node[below] {} (n7);
    \path[]
	(n6) [<->,thick, right=8] edge node[below] {} (n8);
    \path[]
    (n6) [<->,thick, right=8] edge node[below] {} (n9);
    \path[]
    (n6) [<->,thick, right=8] edge node[below] {} (n10);

	\path[]
	(n7) [<->,thick, right=8] edge node[below] {} (n8);
	\path[]
	(n9) [<->,thick, right=8] edge node[below] {} (n10);

\end{tikzpicture}\,\,\,\,\,\,\,\,\,\,\,\,\,\,\begin{tikzpicture}[scale=0.8, >=stealth',   pos=.8,  photon/.style={decorate,decoration={snake,post length=1mm}} ]
	\node (n1) at (0.75,1.2) [circle,inner sep= 1.5pt,fill=black,draw] {};
	\node (n2) at (0.75,0.5) [circle,inner sep= 3pt,fill=black!20,draw] {};
    \node (n4) at (0.6,-0.5) [circle,inner sep= 1.5pt,fill=black,draw] {};
	\node (n5) at (0,-1.2) [circle,inner sep= 3pt,fill=black!20,draw] {};
	\node (n6) at (0.75,-1.75) [circle,inner sep= 1.5pt,fill=black,draw] {};
    
    \node (n8) at (1,-2.5) [circle,inner sep= 3pt,fill=black!20,draw] {};
    
    \node (n10) at (2,-1.5) [circle,inner sep= 3pt,fill=black!20,draw] {};
    \node (n11) at (0.55,2) [circle,inner sep= 3pt,fill=black!20,draw] {};
    \node (n12) at (1.5,1.9) [circle,inner sep= 3pt,fill=black!20,draw] {};

	\node (TV1) at (1.1,1.1)   {{\small $c_1$}};
	\node (TV2) at (1.3,0.5)   {{\small $B_2$}};
	\node (TV3) at (0.95,-0.5)   {{\small $c_2$}};
	\node (TV4) at (-0.5,-1.2)   {{\small $B_1$}};
	\node (TV5) at (0.4,-1.75)   {{\small $c_3$}};
	\node (TV6) at (1.5,-2.5)   {{\small $B_5$}};
	\node (TV7) at (2.5,-1.5)   {{\small $B_6$}};
	\node (TV8) at (0,2)   {{\small $B_3$}};
    \node (TV9) at (2,1.9)   {{\small $B_4$}};

    \path[]
    (n1) [-,thick, right=8] edge node[below] {} (n2);
    \path[]
    (n1) [-,thick, right=8] edge node[below] {} (n11);
    \path[]
    (n1) [-,thick, right=8] edge node[below] {} (n12);

	\path[]
	(n5) [-,thick, right=8] edge node[below] {} (n4);
    \path[]
	(n2) [-,thick, right=8] edge node[below] {} (n4);

	\path[]
	(n5) [-,thick, right=8] edge node[below] {} (n6);

    \path[]
	(n6) [-,thick, right=8] edge node[below] {} (n8);
   
    \path[]
    (n6) [-,thick, right=8] edge node[below] {} (n10);

\end{tikzpicture}
\par\end{centering}
\caption{An unsigned network $\mathcal{G}_{12}$ with $12$ nodes (left) and
the associated block-cut tree (right). The entry in $\boldsymbol{v}_{\text{2}}(L)$
corresponding to each agent is shown close to each node. The block
decomposition of network $\mathcal{G}_{12}$ is shown in the right
where the black nodes represent cut nodes in $\mathcal{G}_{12}$ and
the grey nodes represent blocks in $\mathcal{G}_{12}$.}

\label{fig:12-node-network}
\end{figure}

\begin{example}
\label{block-decomposition-graph}Consider a FAN $\mathcal{G}_{12}$
shown on the left-hand plot of Figure \ref{fig:12-node-network}.
There are six blocks in $\mathcal{G}_{12}$, that is, 
\begin{figure}[tbh]
\centering{}%
\begin{tabular}{ll}
$B_{1}=\mathcal{G}\left(\left\{ 4,5,6\right\} \right),$ & $B_{2}=\mathcal{G}\left(\left\{ 1,2,3,4\right\} \right),$\tabularnewline
$B_{3}=\mathcal{G}\left(\left\{ 1,11\right\} \right),$ & $B_{4}=\mathcal{G}\left(\left\{ 1,12\right\} \right),$\tabularnewline
$B_{5}=\mathcal{G}\left(\left\{ 6,7,8\right\} \right),$ & $B_{6}=\mathcal{G}\left(\left\{ 6,9,10\right\} \right).$\tabularnewline
\end{tabular}
\end{figure}

There are three cut nodes $c_{1}=\left\{ 1\right\} ,\,c_{2}=\left\{ 4\right\} ,\,c_{3}=\left\{ 6\right\} .$
The block-cut tree of $\mathcal{G}_{12}$ is shown on the right panel
in Figure \ref{fig:12-node-network}. Blocks $B_{1}$ and $B_{2}$
are neighboring blocks since they are connected via cut node $c_{2}=\left\{ 4\right\} $.
\end{example}
The following result reveals the monotonicity property of the entries
in $\boldsymbol{v}_{\text{2}}(L)$ along certain paths in block-cut
tree of a graph, which will subsequently be used for constructing
the FSN network associated with FANs.
\begin{lem}
\label{lem:monotonicity-Fiedler} \cite[Theorem 3.12]{Fiedler1975}
Let \textup{$\mathcal{G}$ be a connected graph with Laplacian matrix
$L$; let }$\lambda_{\text{2}}(L)$ and $\boldsymbol{v}_{\text{2}}(L)$
be the second smallest eigenvalue of $L$ and the corresponding eigenvector,
respectively. Then exactly one of the following two cases occurs:

Case 1. There is a single block $B_{0}$ in $\mathcal{G}$ which contains
both positive and negative nodes. Each other block has either positive
nodes only, or negative nodes only, or zero nodes only. Every path
$P$ starting in $B_{0}$ and containing just one node $k$ in $B_{0}$
has the property that the entries in $\boldsymbol{v}_{\text{2}}(L)$
corresponding to cut nodes contained in $P$ form either an increasing,
or decreasing, or a zero sequence along this path according to whether
$[\boldsymbol{v}_{\text{2}}(L)]_{k}>0$, $[\boldsymbol{v}_{\text{2}}(L)]_{k}<0$
or $[\boldsymbol{v}_{\text{2}}(L)]_{k}=0$; in the last case all nodes
in $P$ are zero nodes. 

Case 2. No block of $\mathcal{G}$ contains both positive and negative
nodes. There exists a single zero cut node with a non-zero node neighbor.
Each block (with the exception of that zero cut node) has either positive
nodes only, or negative nodes only, or zero nodes only. Every path
$P$ starting in that zero cut node has the property that the entries
in $\boldsymbol{v}_{\text{2}}(L)$ corresponding to cut nodes contained
in $P$ form either an increasing, or decreasing, or a zero sequence
along this path and in the last case all nodes in $P$ are zero nodes.
Every path containing both positive and negative nodes passes through
that zero cut node.
\end{lem}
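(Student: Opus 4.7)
The plan is to prove the statement via a two-stage analysis: first a sign-pattern/structural dichotomy that yields the existence of either the distinguished block $B_0$ (Case 1) or the distinguished zero cut node (Case 2), and then a monotonicity propagation along paths in the block-cut tree that is driven by the eigen-equation $L\boldsymbol{v}_{2}(L)=\lambda_{2}(L)\boldsymbol{v}_{2}(L)$ applied locally at cut nodes. Throughout, I use $\boldsymbol{v}=\boldsymbol{v}_{2}(L)$ and $\lambda=\lambda_{2}(L)>0$ (the latter since $\mathcal{G}$ is connected) and the orthogonality $\mathds{1}_{n}^{\top}\boldsymbol{v}=0$, which forces $\boldsymbol{v}$ to have both positive and negative entries unless it is identically zero (ruled out by normalization).

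The first step is to establish the structural dichotomy. I would consider the three node sets $\mathcal{V}^{+},\mathcal{V}^{-},\mathcal{V}^{0}$ of positive, negative, and zero entries of $\boldsymbol{v}$. For a cut node $c$ with components $C_{1},\dots,C_{r}$ of $\mathcal{G}-\{c\}$, the $c$-th row of the eigen-equation reads
\begin{equation}
\Bigl(\sum_{j\in\mathcal{N}_{c}}w_{cj}-\lambda\Bigr)[\boldsymbol{v}]_{c}=\sum_{j\in\mathcal{N}_{c}}w_{cj}[\boldsymbol{v}]_{j},
\end{equation}
which I combine with the block-restricted eigen-equations to show that any two blocks each containing mixed signs would force either $\lambda\le 0$ or a contradiction with the simplicity of $\lambda$ (the latter via the Perron-Frobenius/M-matrix argument underlying Lemma~\ref{lem:uniqueness-unsigned}). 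This yields Case~1 if some block genuinely mixes signs, and otherwise pushes all sign changes onto a single cut node, which must then be a zero node with a non-zero neighbor, giving Case~2.

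The second step is the monotonicity along a path $P$ in the block-cut tree. Fix the starting node $k\in B_{0}$ (or the zero cut node in Case~2) and let $c_{1},c_{2},\dots$ be the cut nodes encountered along $P$. I would proceed by induction on the index of the cut node, showing that each appended block $B_{i}$ beyond $c_{i}$ is entirely positive (resp.\ negative, resp.\ zero) and that $[\boldsymbol{v}]_{c_{i+1}}$ lies strictly between $[\boldsymbol{v}]_{c_{i}}$ and $0$ in the appropriate direction. The key technical tool is to look at the subgraph $\mathcal{H}_{i}$ "beyond" $c_{i}$: restricting the eigen-equation to $\mathcal{H}_{i}\setminus\{c_{i}\}$ yields
\begin{equation}
(M_{i}-\lambda I)\,\boldsymbol{v}|_{\mathcal{H}_{i}\setminus\{c_{i}\}}=[\boldsymbol{v}]_{c_{i}}\,\boldsymbol{b}_{i},
\end{equation}
where $M_{i}$ is the principal submatrix of $L$ on $\mathcal{H}_{i}\setminus\{c_{i}\}$ (which, after subtracting $\lambda I$, is a symmetric M-matrix by a Perron-Frobenius-type argument applied to $\mathcal{H}_{i}$, whose algebraic connectivity strictly exceeds $\lambda$), and $\boldsymbol{b}_{i}\ge 0$ records the weights from $c_{i}$. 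Monotone-matrix theory then gives $(M_{i}-\lambda I)^{-1}\ge 0$ entrywise, hence $\boldsymbol{v}|_{\mathcal{H}_{i}\setminus\{c_{i}\}}$ inherits the sign of $[\boldsymbol{v}]_{c_{i}}$, and a direct comparison of the $(c_{i+1})$-th coordinate against $[\boldsymbol{v}]_{c_{i}}$ yields the strict inequality $0<[\boldsymbol{v}]_{c_{i+1}}<[\boldsymbol{v}]_{c_{i}}$ (with the analogous reversal in the negative case, and the all-zero conclusion when $[\boldsymbol{v}]_{k}=0$).

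The main obstacle, and the step that requires the most care, is justifying that the submatrix $M_{i}-\lambda I$ is indeed positive definite (equivalently, that $\lambda$ is strictly below the smallest eigenvalue of $M_{i}$). This is where one must exploit the simplicity of $\lambda$ in $L$ together with Cauchy interlacing for the principal submatrix $M_{i}$: were $\lambda$ already an eigenvalue of $M_{i}$, one could glue a corresponding eigenvector with zeros outside $\mathcal{H}_{i}$ and a suitable constant at $c_{i}$ to produce a second $\lambda$-eigenvector of $L$ independent of $\boldsymbol{v}$, contradicting simplicity. Once this positivity is in hand, the rest of the monotonicity conclusion is a clean consequence of the M-matrix inequality, and Case~2 follows by the same propagation started from the distinguished zero cut node.
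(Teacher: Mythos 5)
First, note that the paper does not prove this lemma at all: it is quoted verbatim from Fiedler's 1975 paper (Theorem 3.12 there) and used as an imported tool, so there is no in-paper proof to compare against. Judged on its own terms, your sketch follows the modern ``branch/bottleneck-matrix'' route (in the spirit of Perron components for trees), which is a legitimate strategy, but as written it has genuine gaps. The most serious one is the justification of the positive definiteness of $M_{i}-\lambda I$. You appeal to the \emph{simplicity} of $\lambda_{2}(L)$, but simplicity is neither a hypothesis of the lemma nor true in general (for a star or a complete graph $\lambda_{2}$ has high multiplicity, yet the lemma still applies). Even granting simplicity, the gluing construction does not work: extending an eigenvector $z$ of $M_{i}$ by zeros (or by a constant at $c_{i}$) does not give an eigenvector of $L$, because the row of the eigen-equation at $c_{i}$ reads $\sum_{j}L_{c_{i}j}z_{j}=\lambda\cdot 0$, which fails unless $\boldsymbol{b}_{i}^{\top}z=0$. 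The true statement --- that every branch pointing \emph{away} from the core block/core node satisfies $\lambda_{\min}(M_{i})>\lambda_{2}$, while at the core itself this can fail (for the path $P_{3}$ or a star the branch submatrix at the characteristic vertex satisfies $\lambda_{\min}(M_{i})=\lambda_{2}$ exactly) --- is essentially equivalent to the dichotomy you are trying to prove, and Cauchy interlacing alone only gives the useless bound $\lambda_{\min}(M_{i})\le\lambda_{1+k}(L)$. Your Step 1 (the existence of a unique mixed block $B_{0}$ or a unique zero cut node with a non-zero neighbor) is likewise asserted rather than argued; the uniqueness claims in particular do not follow from the local eigen-equation at a single cut node.

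Second, the monotonicity conclusion you derive is in the wrong direction. You claim $0<[\boldsymbol{v}]_{c_{i+1}}<[\boldsymbol{v}]_{c_{i}}$, i.e.\ that cut-node values shrink toward zero as one moves away from $B_{0}$. The lemma asserts the opposite: for a path starting at a positive node of $B_{0}$ the cut-node values form an \emph{increasing} sequence (the magnitudes grow away from the core). This is confirmed by the paper's own example $\mathcal{G}_{12}$ in Figure \ref{fig:12-node-network} (core block $\{4,5,6\}$ with value $0.23$ at node $6$, growing to $0.33$ on the outer blocks) and by a direct computation on the path $P_{4}$, where $\boldsymbol{v}_{2}\approx(0.92,0.38,-0.38,-0.92)$ and indeed $[\boldsymbol{v}]_{1}=[\boldsymbol{v}]_{2}/(1-\lambda_{2})>[\boldsymbol{v}]_{2}$. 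Your own M-matrix identity $(M_{i}-\lambda I)\boldsymbol{v}|_{\mathcal{H}_{i}\setminus\{c_{i}\}}=[\boldsymbol{v}]_{c_{i}}\boldsymbol{b}_{i}$ actually produces this growth (the diagonal of $M_{i}-\lambda I$ is the degree minus $\lambda$, which is \emph{smaller} than the degree, so dividing by it amplifies), so the ``direct comparison'' you invoke was not actually carried out. To repair the argument you would need (i) a correct proof that away-pointing branches are non-Perron, which in Fiedler's development rests on the connectivity result quoted here as Lemma \ref{lem:Connectivity} together with the theory of irreducible nonnegative matrices, and (ii) a correct extraction of the increasing/decreasing behavior from the resulting sign and ordering information.
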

As an example, the FAN $\mathcal{G}_{12}$ in Figure \ref{fig:12-node-network}
satisfies Case 1 in the Lemma \ref{lem:monotonicity-Fiedler}. In
the following discussions, we will refer to the block $B_{0}$ in
Case 1 and the zero cut node in Case 2 in the Lemma \ref{lem:monotonicity-Fiedler}
as core block and core node, respectively; we will refer to a block
having only positive, negative, or zero nodes (with the exception
of that core node) as a positive block, negative block, and zero block,
respectively. We are now ready to present the construction of FSN
networks for FANs.
\begin{defn}[\textbf{FSN Network of FANs}]
\label{def:fsn-tempo-network-autonomous} Let $\{B_{1},\ldots,B_{r}\}$
be the block decomposition of an unsigned network $\mathcal{G}=(\mathcal{V},\mathcal{E},W)$
where $r\in\mathbb{Z}_{+}$. The FSN network of $\mathcal{G}$ is
its subgraph $\bar{\mathcal{G}}=(\bar{\mathcal{\mathcal{V}}},\bar{\mathcal{E}},\bar{W})$
with node set $\bar{\mathcal{\mathcal{V}}}=\mathcal{V}$, edge set
$\mathcal{\bar{E}}\subseteq\mathcal{E}$ and adjacency matrix $\bar{W}=(\bar{w}_{ij})\in\mathbb{R}^{n\times n}$
that satisfies, 
\end{defn}
\begin{enumerate}
\item if $B_{p}$ $(p\in\underline{r})$ is a positive or negative block,
then for each $i\in B_{p}$ and $j\in B_{p}\bigcap\mathcal{N}_{i}$,
\[
\bar{w}_{ij}=\begin{cases}
w_{ij}, & \boldsymbol{v}_{\text{2}}(L)_{ij}>1\,\text{or}\,\boldsymbol{v}_{\text{2}}(L)_{ij}<0,\\
0, & 1\ge\boldsymbol{v}_{\text{2}}(L)_{ij}\ge0;
\end{cases}
\]
\item for all remaining $(i,j)\in\mathcal{E}$, $\bar{w}_{ij}=w_{ij}$.
\end{enumerate}
\noindent According to Definition \ref{def:fsn-tempo-network-autonomous},
the construction of the FSN network is built upon the block decomposition
of a graph; the edges in the core block and zero block will remain
unchanged while the other edges can be eliminated depending on the
quantity $\boldsymbol{v}_{\text{2}}(L)_{ij}$. We now proceed to examine
the reachability of the FSN network associated with FANs.
\begin{thm}
\label{thm:reachability-FSN-FAN} Let $\bar{\mathcal{G}}=(\bar{\mathcal{V}},\mathcal{\bar{E}},\bar{W})$
be the FSN network of the FAN (\ref{eq:consensus-overall}) on $\mathcal{G}=(\mathcal{V},\mathcal{E},W)$.
Then, the FAN (\ref{eq:consensus-overall}) achieves consensus on
the associated FSN network $\bar{\mathcal{G}}$. Moreover, the consensus
value is the average of the initial values of either the agents in
the core block and zero blocks or the core node and the agents in
zero blocks.
\end{thm}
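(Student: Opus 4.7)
My plan is to decompose the analysis into three stages: a structural description of the FSN network $\bar{\mathcal{G}}$, a reachability argument based on monotonicity of the Fiedler vector, and a standard directed-consensus conclusion.

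\textbf{Step 1 (source subgraph).} I would first identify a distinguished ``source'' subgraph $\mathcal{S}\subseteq\bar{\mathcal{V}}$: in Case 1 of Lemma \ref{lem:monotonicity-Fiedler}, $\mathcal{S}$ is taken as the union of the core block $B_{0}$, all zero blocks, and the zero cut nodes joining them; in Case 2, $\mathcal{S}$ is the core (zero cut) node together with the zero blocks attached to it. Rule 2 of Definition \ref{def:fsn-tempo-network-autonomous} preserves every edge inside $B_{0}$ and inside zero blocks with its original symmetric weight, so the restriction of $\bar{\mathcal{G}}$ to $\mathcal{S}$ is an undirected, connected subgraph whose Laplacian coincides with that of the induced subgraph $\mathcal{G}(\mathcal{S})$.

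\textbf{Step 2 (rooted-out structure).} Next, for each positive block $B_{p}$ with root cut node $c_{p}$ (the cut node joining $B_{p}$ to its parent in the block--cut tree), I would prove the key structural claim
\[
[\boldsymbol{v}_{2}(L)]_{c_{p}}=\min_{i\in B_{p}}[\boldsymbol{v}_{2}(L)]_{i},
\]
and that every $i\in B_{p}\setminus\{c_{p}\}$ has at least one neighbor in $B_{p}$ with a strictly smaller Fiedler entry. Lemma \ref{lem:monotonicity-Fiedler} only handles cut nodes, so ruling out an interior local minimum $i^{\star}$ would proceed by reading off the $i^{\star}$-th row of $L\boldsymbol{v}_{2}=\lambda_{2}\boldsymbol{v}_{2}$,
\[
\sum_{j\in\mathcal{N}_{i^{\star}}}w_{i^{\star}j}\bigl([\boldsymbol{v}_{2}]_{i^{\star}}-[\boldsymbol{v}_{2}]_{j}\bigr)=\lambda_{2}\,[\boldsymbol{v}_{2}]_{i^{\star}};
\]
the right-hand side is strictly positive ($\lambda_{2}>0$ and $[\boldsymbol{v}_{2}]_{i^{\star}}>0$ inside a positive block), which contradicts the non-positive left-hand side that would arise at a local minimum, and a matching argument at non-root cut nodes of $B_{p}$ together with Lemma \ref{lem:monotonicity-Fiedler} forces the minimum to occur exactly at $c_{p}$. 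Negative blocks are handled symmetrically. With this claim, Definition \ref{def:fsn-tempo-network-autonomous} forces every surviving FSN edge inside $B_{p}$ to be oriented away from $c_{p}$: the ``$c_{p}$ follows $j$'' direction is deleted because $\boldsymbol{v}_{2}(L)_{c_{p}j}<1$, while ``$j$ follows $c_{p}$'' is retained since $\boldsymbol{v}_{2}(L)_{jc_{p}}>1$. Consequently $\mathcal{S}$ has no incoming FSN edges from $\mathcal{V}\setminus\mathcal{S}$; and iterating the ``jump to a strictly smaller Fiedler neighbor'' operation from any $i\notin\mathcal{S}$ produces, after reversal, a directed FSN path from $c_{p}\in\mathcal{S}$ to $i$, so $\bar{\mathcal{G}}$ contains a spanning out-branching rooted at $\mathcal{S}$.

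\textbf{Step 3 (consensus value).} Because $\mathcal{S}$ receives no edges from $\mathcal{V}\setminus\mathcal{S}$, the sub-dynamics $\{\boldsymbol{x}_{i}(t)\}_{i\in\mathcal{S}}$ evolve in isolation under the symmetric Laplacian of $\mathcal{G}(\mathcal{S})$; hence $\frac{1}{|\mathcal{S}|}\sum_{i\in\mathcal{S}}\boldsymbol{x}_{i}(t)$ is conserved and each $\boldsymbol{x}_{i}(t)$, $i\in\mathcal{S}$, converges to the initial average over $\mathcal{S}$, which is precisely the value stated in each of Case 1 and Case 2. Relabeling agents so that $\mathcal{S}$ appears first and the positive/negative blocks follow in a topological order of the block--cut tree renders $L(\bar{\mathcal{G}})$ block-triangular with strictly diagonally dominant off-source diagonal blocks; combined with the spanning out-branching from Step 2, a standard directed-consensus argument (zero is a simple eigenvalue of $L(\bar{\mathcal{G}})$ with right eigenvector $\mathds{1}_{n}$, all other eigenvalues having positive real part) would then propagate this common value to every node of $\bar{\mathcal{V}}$. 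The hardest part will be Step 2: extending Fiedler's cut-node monotonicity to interior nodes via the eigen-equation, and tracking how sub-trees of the block--cut tree glue together to deliver simultaneously (i) closedness of $\mathcal{S}$ under incoming edges and (ii) global reachability from $\mathcal{S}$.
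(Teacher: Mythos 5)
Your proposal is correct and follows essentially the same route as the paper's own proof: the extremum-at-the-parent-facing-cut-node property of the Fiedler entries within each positive/negative block (which the paper obtains from the row-sum identity of Lemma~\ref{minimal edge bound} together with Fiedler's connectivity result, Lemma~\ref{lem:Connectivity}, and which you obtain from the per-node eigen-equation row plus the same block--cut-tree induction) yields an out-branching rooted at the core region, and the consensus value then follows, exactly as in the paper, from the observation that the core block and zero blocks evolve autonomously under their symmetric Laplacian while the remaining agents form a cascaded, input-to-state stable follower system. The only differences are cosmetic: your decreasing-chain reachability argument replaces the paper's contradiction with a putative non-reachable weakly connected component.
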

\begin{proof}
Refer to the Appendix.
\end{proof}
According to the Theorem \ref{thm:reachability-FSN-FAN}, the FAN
(\ref{eq:consensus-overall}) can achieve consensus on the associated
FSN network $\bar{\mathcal{G}}$, however, the consensus value is
generally not equal to the consensus value achieved by the original
network (average of initial states of all agents). In fact, the consensus
value achieved on the FSN network $\bar{\mathcal{G}}$ is eventually
the average of the initial states of agents belonging to the core
block and the zero blocks or the average of the initial states of
the core node and agents belonging to zero blocks. We provide the
following example to demonstrate the reachability property of the
FSN network $\bar{\mathcal{G}}_{12}$ corresponding to the network
$\mathcal{G}_{12}$ in the left plot of Figure \ref{fig:12-node-network}.
\begin{example}
The FSN network $\bar{\mathcal{G}}_{12}$ corresponding to the network
$\mathcal{G}_{12}$ in Figure \ref{fig:12-node-network} is shown
in Figure \ref{fig:FSN-network of auto network}. The core block in
$\mathcal{G}_{12}$ is $B_{0}=\mathcal{G}\left(\left\{ 4,5,6\right\} \right)$.
As one can see from Figure \ref{fig:FSN-network of auto network},
all agents except that in the core block $B_{0}$ are reachable from
agents in the core block. 
\begin{figure}[tbh]
\begin{centering}
\begin{tikzpicture}[scale=0.7, >=stealth',   pos=.8,  photon/.style={decorate,decoration={snake,post length=1mm}} ]

	\node (n1) at (0.75,1) [circle,inner sep= 1pt,draw] {1};
	\node (n2) at (-0.5,0.5) [circle,inner sep= 1pt,draw] {2};
    \node (n3) at (2,0.5) [circle,inner sep= 1pt,draw] {3};
    \node (n4) at (0.6,-0.5) [circle,inner sep= 1pt,fill=black!20,draw]  {4};
	\node (n5) at (-0.9,-1.2) [circle,inner sep= 1pt,fill=black!20,draw]  {5};
	\node (n6) at (0.75,-1.75) [circle,inner sep= 1pt,fill=black!20,draw]  {6};
    \node (n7) at (-0.5,-2.5) [circle,inner sep= 1pt,draw] {7};
    \node (n8) at (1.1,-3) [circle,inner sep= 1pt,draw] {8};
    \node (n9) at (2,-2.3) [circle,inner sep= 1pt,draw] {9};
    \node (n10) at (2.2,-0.8) [circle,inner sep= 1pt,draw] {{\scriptsize 10}};
    \node (n11) at (0.55,2.3) [circle,inner sep= 1pt,draw] {{\scriptsize 11}};
    \node (n12) at (2,1.5) [circle,inner sep= 1pt,draw] {{\scriptsize 12}};

	\path[]
	(n2) [->,thick, right=8] edge node[below] {} (n1);
    \path[]
	(n3) [->,thick, right=8] edge node[below] {} (n1);

    \path[]
    (n4) [->,thick, right=8] edge node[below] {} (n1);
    \path[]
    (n1) [->,thick, right=8] edge node[below] {} (n11);
    \path[]
    (n1) [->,thick, right=8] edge node[below] {} (n12);

	\path[]
	(n4) [->,thick, right=8] edge node[below] {} (n2);
	\path[]
	(n4) [->,thick, right=8] edge node[below] {} (n3);

	\path[]
	(n4) [<->,thick, right=8] edge node[below] {} (n5);
    \path[]
    (n4) [<->,thick, right=8] edge node[below] {} (n6);
	\path[]
	(n5) [<->,thick, right=8] edge node[below] {} (n6);

	\path[]
	(n6) [->,thick, right=8] edge node[below] {} (n7);
    \path[]
	(n6) [->,thick, right=8] edge node[below] {} (n8);
    \path[]
    (n6) [->,thick, right=8] edge node[below] {} (n9);
    \path[]
    (n6) [->,thick, right=8] edge node[below] {} (n10);

\end{tikzpicture}
\par\end{centering}
\caption{The FSN network $\bar{\mathcal{G}}_{12}$ corresponding to the network
$\mathcal{G}_{12}$ in the left plot of Figure \ref{fig:12-node-network}.
The core block in $\mathcal{G}_{12}$ is\textcolor{red}{{} }$B_{0}=\mathcal{G}\left(\left\{ 4,5,6\right\} \right)$
which is highlighted in dark.}

\label{fig:FSN-network of auto network}
\end{figure}
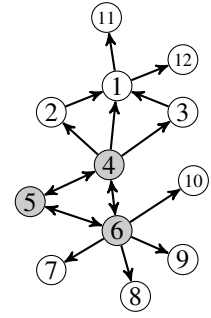
\end{example}
\begin{rem}[Generalized Monotonicity of Fiedler Vector]
 Remarkably,  Theorem \ref{thm:reachability-FSN-FAN} further extends
the Lemma \ref{lem:monotonicity-Fiedler} (a celebrated result by
Fiedler \cite{Fiedler1975}) by revealing the monotonicity property
of Fiedler's entries within each block, rather than only on cut nodes. 
\end{rem}

\subsection{Convergence Rate Enhancement }

We proceed to examine the convergence rate enhancement of FANs on
the corresponding FSN networks. First, we provide the following result,
for general FANs, that characterize the convergence rate of FAN (\ref{eq:consensus-overall})
on the associated FSN network.
\begin{prop}
\label{thm:convergence-rate-FSN-FAN} Let $\bar{\mathcal{G}}=(\mathcal{V},\mathcal{\bar{E}},\bar{W})$
be the FSN network of a FAN $\mathcal{G}=(\mathcal{V},\mathcal{E},W)$
characterized by (\ref{eq:consensus-overall}). Let $\lambda_{2}(L(\mathcal{\bar{G}}))$
and $\bar{\boldsymbol{v}}_{2}$ be the second smallest eigenvalue
of $L(\bar{\mathcal{G}})$ and the corresponding normalized eigenvector,
respectively. Then $\lambda_{2}(L(\mathcal{\bar{G}}))$ is lower bounded
by 
\begin{equation}
\lambda_{2}(L(\mathcal{G}))+{\displaystyle \sum_{(i,j)\in\mathcal{E}\setminus\mathcal{\bar{E}}}}[\bar{\boldsymbol{v}}_{2}]_{i}\left([\bar{\boldsymbol{v}}_{2}]_{j}-[\bar{\boldsymbol{v}}_{2}]_{i}\right).\label{eq:theorem-5}
\end{equation}
\end{prop}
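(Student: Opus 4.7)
The plan is to establish the bound via a Rayleigh--Ritz (Courant--Fischer) argument combined with the additive decomposition of the Laplacian across the retained and removed edge subsets. Let $\mathcal{H}=(\mathcal{V},\mathcal{E}\setminus\bar{\mathcal{E}},W|_{\mathcal{E}\setminus\bar{\mathcal{E}}})$ denote the subgraph carrying exactly the edges eliminated by the FSN construction. Since the Laplacian is additive over disjoint edge sets, one has $L(\mathcal{G})=L(\bar{\mathcal{G}})+L(\mathcal{H})$. Because $L(\bar{\mathcal{G}})$ is viewed as a symmetric graph Laplacian with zero row sums, its unit eigenvector $\bar{\boldsymbol{v}}_{2}$ associated with $\lambda_{2}(L(\bar{\mathcal{G}}))$ is orthogonal to $\mathds{1}_{n}$.

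First, I would apply the Courant--Fischer minmax principle to the symmetric Laplacian of the connected graph $\mathcal{G}$, noting that $\ker L(\mathcal{G})=\mathrm{span}\{\mathds{1}_{n}\}$, to obtain
\[
\lambda_{2}(L(\mathcal{G}))=\min_{\|\boldsymbol{v}\|=1,\,\boldsymbol{v}^{\top}\mathds{1}_{n}=0}\boldsymbol{v}^{\top}L(\mathcal{G})\boldsymbol{v}\le \bar{\boldsymbol{v}}_{2}^{\top}L(\mathcal{G})\bar{\boldsymbol{v}}_{2}.
\]
I would then split the right-hand side via the decomposition and use the eigenvalue identity $L(\bar{\mathcal{G}})\bar{\boldsymbol{v}}_{2}=\lambda_{2}(L(\bar{\mathcal{G}}))\bar{\boldsymbol{v}}_{2}$:
\[
\bar{\boldsymbol{v}}_{2}^{\top}L(\mathcal{G})\bar{\boldsymbol{v}}_{2}=\lambda_{2}(L(\bar{\mathcal{G}}))+\bar{\boldsymbol{v}}_{2}^{\top}L(\mathcal{H})\bar{\boldsymbol{v}}_{2}.
\]
Finally, expanding the quadratic form row by row through $[L(\mathcal{H})]_{ii}=\sum_{j:(i,j)\in\mathcal{E}\setminus\bar{\mathcal{E}}}w_{ij}$ and $[L(\mathcal{H})]_{ij}=-w_{ij}$ for the removed edges yields
\[
\bar{\boldsymbol{v}}_{2}^{\top}L(\mathcal{H})\bar{\boldsymbol{v}}_{2}=\sum_{(i,j)\in\mathcal{E}\setminus\bar{\mathcal{E}}}w_{ij}[\bar{\boldsymbol{v}}_{2}]_{i}\bigl([\bar{\boldsymbol{v}}_{2}]_{i}-[\bar{\boldsymbol{v}}_{2}]_{j}\bigr)=-\sum_{(i,j)\in\mathcal{E}\setminus\bar{\mathcal{E}}}w_{ij}[\bar{\boldsymbol{v}}_{2}]_{i}\bigl([\bar{\boldsymbol{v}}_{2}]_{j}-[\bar{\boldsymbol{v}}_{2}]_{i}\bigr).
\]
Substituting this back into the chain of inequalities and rearranging produces the desired lower bound (\ref{eq:theorem-5}), with the weights $w_{ij}$ understood to be absorbed into the sum as in the statement.

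The main obstacle I expect is reconciling the spectral setup with the directed nature of $\bar{\mathcal{G}}$ that Definition~\ref{def:fsn-tempo-network-autonomous} can produce: inside positive or negative blocks only one of the two orientations $(i,j),(j,i)$ of an original undirected edge is retained, which would make $L(\bar{\mathcal{G}})$ nonsymmetric and render ``second smallest eigenvalue'' ambiguous. I would interpret $\lambda_{2}(L(\bar{\mathcal{G}}))$ and $\bar{\boldsymbol{v}}_{2}$ throughout as corresponding to the symmetric Laplacian of the undirected support of $\bar{\mathcal{G}}$, which is the only interpretation under which $\bar{\boldsymbol{v}}_{2}\perp\mathds{1}_{n}$ is automatic and the Courant--Fischer step is valid. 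A secondary bookkeeping issue is whether $\mathcal{E}\setminus\bar{\mathcal{E}}$ is indexed over ordered or unordered pairs; under the paper's convention that each undirected edge contributes both orientations, the edge-wise expansion above reproduces the right-hand side of (\ref{eq:theorem-5}) exactly, and the nonnegativity of the removed-edge sum (each retained removal satisfies $[\bar{\boldsymbol{v}}_{2}]_{i}$ and $[\bar{\boldsymbol{v}}_{2}]_{j}$ sharing sign with $|[\bar{\boldsymbol{v}}_{2}]_{i}|\le|[\bar{\boldsymbol{v}}_{2}]_{j}|$) then witnesses the predicted convergence-rate enhancement.
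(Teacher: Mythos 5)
Your core argument is the paper's argument: decompose $L(\mathcal{G})=L(\bar{\mathcal{G}})+L(\mathcal{H})$ (the paper writes $E=L(\bar{\mathcal{G}})-L(\mathcal{G})=-L(\mathcal{H})$), use $\bar{\boldsymbol{v}}_{2}^{\top}L(\bar{\mathcal{G}})\bar{\boldsymbol{v}}_{2}=\lambda_{2}(L(\bar{\mathcal{G}}))\,\bar{\boldsymbol{v}}_{2}^{\top}\bar{\boldsymbol{v}}_{2}$, lower-bound $\bar{\boldsymbol{v}}_{2}^{\top}L(\mathcal{G})\bar{\boldsymbol{v}}_{2}$ by $\lambda_{2}(L(\mathcal{G}))$ via Rayleigh--Ritz, and expand the removed-edge quadratic form edge by edge. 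Your expansion of $\bar{\boldsymbol{v}}_{2}^{\top}L(\mathcal{H})\bar{\boldsymbol{v}}_{2}$ agrees with the paper's computation of $\tfrac{1}{2}\bar{\boldsymbol{v}}_{2}^{\top}(E+E^{\top})\bar{\boldsymbol{v}}_{2}$.

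The one place you genuinely depart is your resolution of the nonsymmetry of $L(\bar{\mathcal{G}})$, and that repair does not work. Inside a positive or negative block, Definition \ref{def:fsn-tempo-network-autonomous} generically removes exactly one orientation of each edge, so the undirected support of $\bar{\mathcal{G}}$ is (generically) all of $\mathcal{G}$: under your reinterpretation $\bar{\boldsymbol{v}}_{2}$ becomes $\boldsymbol{v}_{2}(L(\mathcal{G}))$, the additive decomposition $L(\mathcal{G})=L(\bar{\mathcal{G}})+L(\mathcal{H})$ fails (the surviving orientation's weight is double-counted), and, combined with your closing observation that the removed-edge sum is nonnegative, the bound would read $\lambda_{2}(L(\mathcal{G}))\ge\lambda_{2}(L(\mathcal{G}))+(\text{nonnegative})$, which is false in general. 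The paper instead keeps the directed Laplacian built from $\bar{W}$ and symmetrizes only the quadratic form, using that $\bar{\boldsymbol{v}}_{2}^{\top}\bigl(L(\bar{\mathcal{G}})+L^{\top}(\bar{\mathcal{G}})\bigr)\bar{\boldsymbol{v}}_{2}=2\lambda_{2}(L(\bar{\mathcal{G}}))\,\bar{\boldsymbol{v}}_{2}^{\top}\bar{\boldsymbol{v}}_{2}$ holds for a right eigenvector of a nonsymmetric matrix; this is the only reading under which the decomposition is exact and the statement is nontrivial. You were right to worry that $\bar{\boldsymbol{v}}_{2}\perp\mathds{1}_{n}$ is then not automatic ($\mathds{1}_{n}$ is a right null vector, and right eigenvectors of a nonsymmetric matrix are orthogonal to the complementary left eigenvectors, not to each other), so the Rayleigh step yielding $\lambda_{2}$ rather than $\lambda_{1}=0$ needs justification -- but the paper shares this unstated assumption, so your proof inherits rather than introduces the gap. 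Finally, your nonnegativity remark conflates $\boldsymbol{v}_{2}(L(\mathcal{G}))$ (which governs which edges are removed) with $\bar{\boldsymbol{v}}_{2}$ (which appears in the bound); the paper explicitly cautions that the correction term in (\ref{eq:theorem-5}) can be negative.
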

\begin{proof}
Refer to the Appendix.
\end{proof}
One can observe from Proposition  \ref{thm:convergence-rate-FSN-FAN}
that the quantitative relationship between $\bar{\lambda}_{2}(L(\bar{\mathcal{G}}))$
and $\lambda_{2}(L(\mathcal{G}))$ is a bit vague since the second
term in (\ref{eq:theorem-5}) can be negative. This motivates us to
examine the special case of tree graphs- in which case, the convergence
rate between a FAN and the corresponding FSN network can be well-characterized.

\begin{thm}
\label{thm:FAN-Tree-convergence-rate}Let $\mathcal{T}$ be an $n$-node
tree network without zero blocks where $n\ge4$. Let $\bar{\mathcal{T}}$
be its FSN network characterized by (\ref{eq:consensus-overall}).
Let $\lambda_{2}(L(\bar{\mathcal{T}}))$ and $\lambda_{2}(L(\mathcal{T}))$
be the second smallest eigenvalue of $L(\bar{\mathcal{T}})$ and $L(\mathcal{T})$,
respectively. Then $\lambda_{2}(L(\bar{\mathcal{T}}))>\lambda_{2}(L(\mathcal{T}))$.
\end{thm}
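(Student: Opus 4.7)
The plan is to exploit the monotonicity of the Fiedler vector on the tree (Lemma \ref{lem:monotonicity-Fiedler}) to give $L(\bar{\mathcal{T}})$ a block-lower-triangular form, read off its spectrum in closed form, and then beat the resulting minimum with a variational upper bound on $\lambda_2(L(\mathcal{T}))$. First, the ``no zero blocks'' hypothesis places us in Case 1 of Lemma \ref{lem:monotonicity-Fiedler}: since $\mathcal{T}$ is a tree, the unique core block $B_0$ is a single edge $(r_1,r_2)$ whose endpoints have opposite-sign Fiedler entries, and along every path leaving $B_0$ the magnitudes of $\boldsymbol{v}_2(L(\mathcal{T}))$-entries strictly increase. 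Applying Definition \ref{def:fsn-tempo-network-autonomous} edge by edge, the core edge remains bidirectional in $\bar{\mathcal{T}}$, while every non-core edge $(i,j)$ with $i$ the closer endpoint to $B_0$ is retained only as $\bar{w}_{ji}=w_{ji}$, $\bar{w}_{ij}=0$. Each non-root vertex $j$ thus has exactly one incoming edge in $\bar{\mathcal{T}}$, namely from its parent $p_j$ in the tree rooted at $\{r_1,r_2\}$.

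Next, I would relabel the vertices so that $r_1,r_2$ receive indices $1,2$ and the remaining vertices appear in breadth-first order from $\{r_1,r_2\}$. In this ordering,
\begin{equation*}
L(\bar{\mathcal{T}}) \;=\; \begin{pmatrix} A & \mathbf{0} \\ C & D \end{pmatrix},
\end{equation*}
where $A = w_{r_1 r_2}\bigl(\begin{smallmatrix} 1 & -1 \\ -1 & 1 \end{smallmatrix}\bigr)$ has spectrum $\{0,\,2 w_{r_1 r_2}\}$, and $D\in\mathbb{R}^{(n-2)\times(n-2)}$ is lower triangular with $D_{jj}=w_{j p_j}$ for each non-root $j$. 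The top-right zero block reflects that $r_1,r_2$ are influenced only by each other in $\bar{\mathcal{T}}$, and the lower-triangularity of $D$ follows because each non-root $j$ has its only in-edge coming from $p_j$, which by the BFS ordering occupies a strictly smaller index. Reading eigenvalues off this block-triangular structure yields
\begin{equation*}
\lambda_2\bigl(L(\bar{\mathcal{T}})\bigr) \;=\; \min\!\bigl(2 w_{r_1 r_2},\ \min_{j\text{ non-root}} w_{j p_j}\bigr).
\end{equation*}

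Finally, I would upper-bound $\lambda_2(L(\mathcal{T}))$ with the Rayleigh quotient of the two-valued test vector $v^{*}=(n-k)\mathds{1}_{A_e} - k\mathds{1}_{B_e}$, where removing an edge $e$ of weight $w_e$ splits $\mathcal{T}$ into components $A_e,B_e$ of sizes $k$ and $n-k$; this yields the closed-form ratio $n w_e/(k(n-k))$. Taking $e$ to be the core edge gives $\lambda_2(L(\mathcal{T}))\le n w_{r_1 r_2}/(pq) < 2 w_{r_1 r_2}$ for $n\ge 4$, and taking $e$ to be any balanced non-leaf edge (which exists because the Case 1 hypothesis forbids $\mathcal{T}=S_n$) gives $\lambda_2(L(\mathcal{T}))<w_e$. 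Strictness holds because the two-valued $v^{*}$ cannot coincide with the Fiedler vector of $\mathcal{T}$, whose entries are strictly monotone along every path away from $B_0$ by Lemma \ref{lem:monotonicity-Fiedler}. The main obstacle is the case in which the minimizing $w_{j p_j}$ sits on a leaf edge, where the naive partition bound $n w_e/(n-1)$ exceeds $w_e$ and therefore does not close the gap directly; this is handled either by perturbing $v^{*}$ along the leaf direction or by routing the comparison through a different interior edge, both of which are available because $n\ge 4$ together with $\mathcal{T}\ne S_n$ guarantees enough structural room. Combining these bounds delivers $\lambda_2(L(\mathcal{T}))<\lambda_2(L(\bar{\mathcal{T}}))$.
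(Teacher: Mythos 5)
Your block-triangular reading of $L(\bar{\mathcal{T}})$ and the resulting formula $\lambda_2(L(\bar{\mathcal{T}}))=\min\bigl(2w_{r_1r_2},\min_j w_{jp_j}\bigr)$ are fine and match the structural half of the paper's proof. But there are two genuine problems. First, the opening claim that ``no zero blocks'' forces Case~1 of Lemma \ref{lem:monotonicity-Fiedler} is false: Case~2 requires only that the single core \emph{cut node} be a zero node, while every block may still be purely positive or purely negative (a path on an odd number of vertices is already a counterexample: the middle vertex is a zero core node and there are no zero blocks). The paper's proof splits into exactly these two sub-cases (a one-node core vs.\ a two-node core block); you have silently discarded the first, and your later appeal to ``the Case~1 hypothesis forbids $\mathcal{T}=S_n$'' rests on this incorrect reduction.

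The more serious gap is the final inequality $\lambda_2(L(\mathcal{T}))<\min_j w_{jp_j}$, which your argument never actually establishes. The cut test vector gives $\lambda_2(L(\mathcal{T}))\le nw_e/(k(n-k))$ only for the edge $e$ defining the cut; this beats $2w_{r_1r_2}$ at the core edge, but for a leaf edge it yields $nw_e/(n-1)>w_e$, exactly the case you flag and then defer. Neither proposed patch is carried out, and the second (``routing through a different interior edge'') provably cannot work in the weighted setting you have adopted: a cut through an interior edge $e'$ bounds $\lambda_2$ by a multiple of $w_{e'}$, which says nothing when the minimizing leaf edge has weight $w_e\ll w_{e'}$ (take $P_4$ with leaf weights $\varepsilon$ and middle weight $1$; one needs $\lambda_2<\varepsilon$, and indeed $\lambda_2=\varepsilon-\varepsilon^2/2$, but no interior-edge cut sees this). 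The paper sidesteps all of this by working with unit weights, so that every off-core diagonal entry of $L(\bar{\mathcal{T}})$ equals $1$, hence $\lambda_2(L(\bar{\mathcal{T}}))=1$ exactly, and then invoking the cited bound $\lambda_2(L(\mathcal{T}))<0.59$ for non-star trees on $n\ge4$ nodes (Lemma \ref{eigenvalue bound of tree graph}). As written, your key spectral comparison remains unproved; either restrict to unit weights and import that eigenvalue bound (or an equivalent), or supply a genuinely new argument for the weighted leaf-edge case.
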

\begin{proof}
Refer to the Appendix.
\end{proof}
\begin{example}
Consider a $12$-node FAN with the tree structure in Figure \ref{fig:tree-and-its-FSN}
(left) whose associated FSN network is shown in Figure \ref{fig:tree-and-its-FSN}
(right). The agents' initial states are $[\boldsymbol{x}(0)]_{1}=0.973$,
$[\boldsymbol{x}(0)]_{2}=0.649$, $[\boldsymbol{x}(0)]_{3}=0.8$,
$[\boldsymbol{x}(0)]_{4}=0.454$, $[\boldsymbol{x}(0)]_{5}=0.432$,
$[\boldsymbol{x}(0)]_{6}=0.825$, $[\boldsymbol{x}(0)]_{7}=0.084$,
$[\boldsymbol{x}(0)]_{8}=0.133$, $[\boldsymbol{x}(0)]_{9}=0.173$,
$[\boldsymbol{x}(0)]_{10}=0.391$, $[\boldsymbol{x}(0)]_{11}=0.831$,
$[\boldsymbol{x}(0)]_{12}=0.803$. 

In this example, the core block in tree $\mathcal{T}$ is the induced
subgraph $\mathcal{T}(\{4,6\})$. According to Figure \ref{fig:trajectory-12-node-tree},
the FAN (\ref{eq:consensus-overall}) on its associated FSN network
$\bar{\mathcal{G}}_{12}$ with the aforementioned initial states achieves
consensus on the value $0.6396$, which is equal to the average of
the initial states of the agents in the core block, namely, $\frac{1}{2}\left([\boldsymbol{x}(0)]_{4}+[\boldsymbol{x}(0)]_{6}\right)=0.6396.$
Computing $\boldsymbol{v}_{\text{2}}(L)$ corresponding to the Laplacian
matrix $L$ of $\mathcal{T}$ in this example yields, $[\boldsymbol{v}_{\text{2}}(L)]_{1}=0.333$,
$[\boldsymbol{v}_{\text{2}}(L)]_{2}=0.101$, $[\boldsymbol{v}_{\text{2}}(L)]_{3}=0.101$,
$[\boldsymbol{v}_{\text{2}}(L)]_{4}=0.079$, $[\boldsymbol{v}_{\text{2}}(L)]_{5}=0.101$,
$[\boldsymbol{v}_{\text{2}}(L)]_{6}=-0.257$, $[\boldsymbol{v}_{\text{2}}(L)]_{7}=-0.327$,
$[\boldsymbol{v}_{\text{2}}(L)]_{8}=-0.327$, $[\boldsymbol{v}_{\text{2}}(L)]_{9}=-0.327$,
$[\boldsymbol{v}_{\text{2}}(L)]_{10}=-0.327$, $[\boldsymbol{v}_{\text{2}}(L)]_{11}=0.424$,
$[\boldsymbol{v}_{\text{2}}(L)]_{12}=0.424$. Trajectories of $g_{ij}(t)$
in (\ref{eq:g_i}) for $i=1$ and $j\in\{3,5,11,12\}$ in the FAN
$\mathcal{T}$ shown in Figure \ref{fig:gij-tree}. One can see that
$g_{13}(t)\rightarrow\frac{[\boldsymbol{v}_{\text{2}}(L)]_{1}}{[\boldsymbol{v}_{\text{2}}(L)]_{3}}\approx3.299$
and $g_{1,11}(t)\rightarrow\frac{[\boldsymbol{v}_{\text{2}}(L)]_{1}}{[\boldsymbol{v}_{\text{2}}(L)]_{1}}\approx0.785$.
Moreover, $g_{15}(t)$ and $g_{1,12}(t)$ exhibit the same tendency
due to network symmetry. 

The convergence rate associated with networks $\mathcal{T}$ and $\mathcal{T}_{FSN}$
are $\lambda_{\text{2}}(L(\mathcal{T}))=0.2148$ and $\lambda_{\text{2}}(L(\mathcal{T}_{FSN}))=1$,
respectively. 
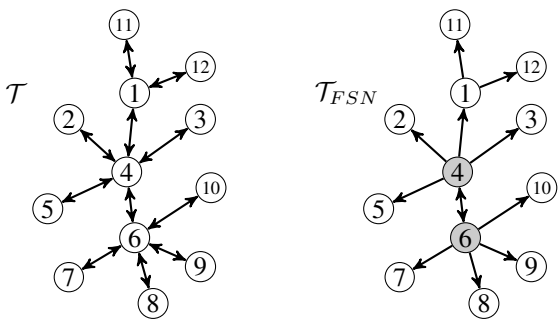
\begin{figure}[tbh]
\begin{centering}
\begin{tikzpicture}[scale=0.7, >=stealth',   pos=.8,  photon/.style={decorate,decoration={snake,post length=1mm}} ]	

   	\node (n1) at (0.75,1) [circle,inner sep= 1pt,draw] {1};
	\node (n2) at (-0.5,0.5) [circle,inner sep= 1pt,draw] {2};
    \node (n3) at (2,0.5) [circle,inner sep= 1pt,draw] {3};
    \node (n4) at (0.6,-0.5) [circle,inner sep= 1pt,draw] {4};
	\node (n5) at (-0.9,-1.2) [circle,inner sep= 1pt,draw] {5};
	\node (n6) at (0.75,-1.75) [circle,inner sep= 1pt,draw] {6};
    \node (n7) at (-0.5,-2.5) [circle,inner sep= 1pt,draw] {7};
    \node (n8) at (1.1,-3) [circle,inner sep= 1pt,draw] {8};
    \node (n9) at (2,-2.3) [circle,inner sep= 1pt,draw] {9};
    \node (n10) at (2.2,-0.8) [circle,inner sep= 1pt,draw] {{\scriptsize 10}};
    \node (n11) at (0.55,2.3) [circle,inner sep= 1pt,draw] {{\scriptsize 11}};
    \node (n12) at (2,1.5) [circle,inner sep= 1pt,draw] {{\scriptsize 12}};

    \node (G) at (-1.5,1.0) {$\mathcal{T}$};

    \path[]
    (n4) [<->,thick, right=8] edge node[below] {} (n1);
    \path[]
    (n1) [<->,thick, right=8] edge node[below] {} (n11);
    \path[]
    (n1) [<->,thick, right=8] edge node[below] {} (n12);

	\path[]
	(n4) [<->,thick, right=8] edge node[below] {} (n2);
	\path[]
	(n4) [<->,thick, right=8] edge node[below] {} (n3);

	\path[]
	(n4) [<->,thick, right=8] edge node[below] {} (n5);
    \path[]
    (n4) [<->,thick, right=8] edge node[below] {} (n6);

	\path[]
	(n6) [<->,thick, right=8] edge node[below] {} (n7);
    \path[]
	(n6) [<->,thick, right=8] edge node[below] {} (n8);
    \path[]
    (n6) [<->,thick, right=8] edge node[below] {} (n9);
    \path[]
    (n6) [<->,thick, right=8] edge node[below] {} (n10);
\end{tikzpicture}\,\,\,\,\,\,\,\,\,\,\,\,\,\,\,\,\,\,\begin{tikzpicture}[scale=0.7, >=stealth',   pos=.8,  photon/.style={decorate,decoration={snake,post length=1mm}} ]

	\node (n1) at (0.75,1) [circle,inner sep= 1pt,draw] {1};
	\node (n2) at (-0.5,0.5) [circle,inner sep= 1pt,draw] {2};
    \node (n3) at (2,0.5) [circle,inner sep= 1pt,draw] {3};
    \node (n4) at (0.6,-0.5) [circle,inner sep= 1pt,fill=black!20,draw] {4};
	\node (n5) at (-0.9,-1.2) [circle,inner sep= 1pt,draw] {5};
	\node (n6) at (0.75,-1.75) [circle,inner sep= 1pt,fill=black!20,draw] {6};
    \node (n7) at (-0.5,-2.5) [circle,inner sep= 1pt,draw] {7};
    \node (n8) at (1.1,-3) [circle,inner sep= 1pt,draw] {8};
    \node (n9) at (2,-2.3) [circle,inner sep= 1pt,draw] {9};
    \node (n10) at (2.2,-0.8) [circle,inner sep= 1pt,draw] {{\scriptsize 10}};
    \node (n11) at (0.55,2.3) [circle,inner sep= 1pt,draw] {{\scriptsize 11}};
    \node (n12) at (2,1.5) [circle,inner sep= 1pt,draw] {{\scriptsize 12}};

    \node (G) at (-1.5,1.0) {$\mathcal{T}_{FSN}$};

	\path[]
    (n4) [->,thick, right=8] edge node[below] {} (n1);
    \path[]
    (n1) [->,thick, right=8] edge node[below] {} (n11);
    \path[]
    (n1) [->,thick, right=8] edge node[below] {} (n12);

	\path[]
	(n4) [->,thick, right=8] edge node[below] {} (n2);
	\path[]
	(n4) [->,thick, right=8] edge node[below] {} (n3);

	\path[]
	(n4) [->,thick, right=8] edge node[below] {} (n5);
    \path[]
    (n4) [<->,thick, right=8] edge node[below] {} (n6);

	\path[]
	(n6) [->,thick, right=8] edge node[below] {} (n7);
    \path[]
	(n6) [->,thick, right=8] edge node[below] {} (n8);
    \path[]
    (n6) [->,thick, right=8] edge node[below] {} (n9);
    \path[]
    (n6) [->,thick, right=8] edge node[below] {} (n10);
\end{tikzpicture}
\par\end{centering}
\caption{A tree $\mathcal{T}$ (left) and its corresponding \textcolor{black}{FSN
network }$\mathcal{T}_{FSN}$ where \textcolor{black}{the} core block
$B=\left\{ 4,6\right\} $ is highlighted in dark\textcolor{black}{{}
(right).}}

\label{fig:tree-and-its-FSN}
\end{figure}
\begin{figure}[tbh]
\begin{centering}
\includegraphics[width=9cm]{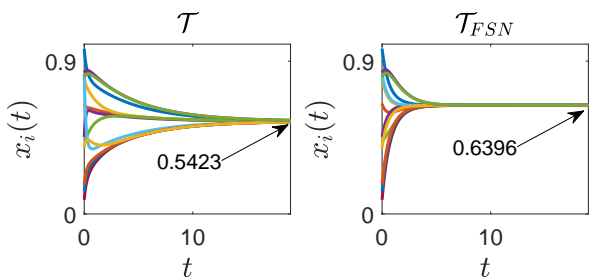}
\par\end{centering}
\caption{State trajectories of agents in FAN on the network $\mathcal{T}$
shown in the left plot of Figure \ref{fig:tree-and-its-FSN} as well
as its associated FSN network $\mathcal{T}_{FSN}$ in the right plot
of Figure \ref{fig:tree-and-its-FSN}.}
\label{fig:trajectory-12-node-tree}
\end{figure}
\begin{figure}[tbh]
\begin{centering}
\includegraphics[width=9cm]{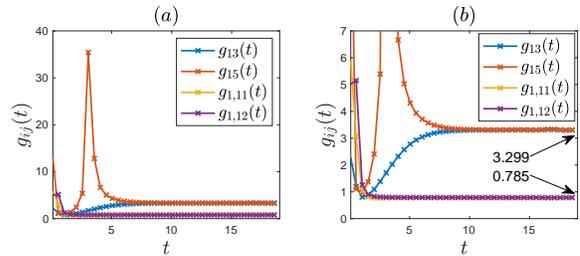}
\par\end{centering}
\caption{Trajectories of $g_{ij}(t)$ in (\ref{eq:g_i}) for  $i=1$ and $j\in\{3,5,11,12\}$
in the FAN $\mathcal{T}$ shown in Figure \ref{fig:tree-and-its-FSN}. }
\label{fig:gij-tree}
\end{figure}
\end{example}
\begin{rem}
Notably, the convergence rate of FSN network corresponding to trees
are always equal to one, however the convergence rate of a tree can
decrease dramatically when the diameter of the tree grows. For example,
that convergence rate on trees with diameter $\text{{\bf diam}}(\mathcal{T})$
is upper bounded by $2\left(1-\text{{\bf cos}}\left(\frac{\pi}{\text{{\bf diam}}(\mathcal{T})+1}\right)\right)$,
signifying that $\underset{\text{{\bf diam}}(\mathcal{T})\rightarrow\infty}{\text{\ensuremath{\lim}}}\lambda_{\text{2}}(L(\mathcal{T}))=0$.
\end{rem}

\subsection{Neighbor Selection in FANs}

We now examine networks whose second smallest eigenvalue of the Laplacian
matrix is simple. The following result establishes the relationship
between the relative tempo and the Fiedler vector of a network.
\begin{thm}
\label{thm:relative-tempo-FAN} Let $\mathcal{V}_{1}$ and $\mathcal{V}_{2}$
be two subsets of agents in $\mathcal{V}$. Each agent in $\mathcal{V}$
adopts dynamics (\ref{eq:consensus-protocol}). If the second smallest
eigenvalue of the Laplacian matrix $L$ is simple, then the relative
tempo of agents in $\mathcal{V}_{1}$ compared to that of $\mathcal{V}_{2}$
is \label{thm:relative-tempo}

\[
\mathbb{L}(\mathcal{V}_{1},\mathcal{V}_{2})=\frac{\|\phi(\mathcal{V}_{1})\boldsymbol{v}_{\text{2}}(L)\|}{\|\phi(\mathcal{V}_{2})\boldsymbol{v}_{\text{2}}(L)\|}.
\]
\end{thm}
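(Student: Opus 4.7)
The plan is to parallel the strategy of Theorem~\ref{thm:relative-tempo-SAN}: expand the FAN solution $\boldsymbol{x}(t)$ in the eigenbasis of the Laplacian $L$, differentiate, and then isolate the slowest-decaying mode that actually contributes to $\dot{\boldsymbol{x}}(t)$. Since the graph is undirected and connected, $L$ is symmetric positive semidefinite with spectrum $0=\lambda_{1}<\lambda_{2}\le\cdots\le\lambda_{n}$ and an orthonormal eigenbasis $\{\boldsymbol{v}_{k}(L)\}_{k=1}^{n}$; by hypothesis $\lambda_{2}(L)$ is simple, so $\boldsymbol{v}_{2}(L)$ is uniquely determined up to sign. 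The role of simplicity is precisely to guarantee a strict spectral gap to $\lambda_{3}(L)$, which will let me discard the higher-order modes when $t\to\infty$.

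The spectral resolution of $L$ gives the closed-form solution
\[
\boldsymbol{x}(t)=\sum_{k=1}^{n}e^{-\lambda_{k}t}\bigl(\boldsymbol{v}_{k}(L)\otimes\boldsymbol{\alpha}_{k}\bigr),\quad \boldsymbol{\alpha}_{k}:=(\boldsymbol{v}_{k}(L)^{\top}\otimes I_{d})\boldsymbol{x}(0)\in\mathbb{R}^{d},
\]
obtained by using the mixed-product property of the Kronecker product together with $(\boldsymbol{v}_{k}(L)\boldsymbol{v}_{k}(L)^{\top}\otimes I_{d})\boldsymbol{x}(0)=\boldsymbol{v}_{k}(L)\otimes\boldsymbol{\alpha}_{k}$. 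Differentiating and noting that the $k=1$ summand vanishes because $\lambda_{1}=0$,
\[
\dot{\boldsymbol{x}}(t)=-\sum_{k=2}^{n}\lambda_{k}e^{-\lambda_{k}t}\bigl(\boldsymbol{v}_{k}(L)\otimes\boldsymbol{\alpha}_{k}\bigr).
\]
Applying $\phi(\mathcal{V}_{1})\otimes I_{d}$ and using $\|\boldsymbol{u}\otimes\boldsymbol{\alpha}\|=\|\boldsymbol{u}\|\,\|\boldsymbol{\alpha}\|$, one obtains, after factoring out the dominant exponential $e^{-\lambda_{2}t}$,
\[
\|(\phi(\mathcal{V}_{1})\otimes I_{d})\dot{\boldsymbol{x}}(t)\|=\lambda_{2}e^{-\lambda_{2}t}\|\phi(\mathcal{V}_{1})\boldsymbol{v}_{2}(L)\|\,\|\boldsymbol{\alpha}_{2}\|+O(e^{-\lambda_{3}t}),
\]
and analogously for $\mathcal{V}_{2}$. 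Forming the ratio, the common prefactor $\lambda_{2}e^{-\lambda_{2}t}\|\boldsymbol{\alpha}_{2}\|$ cancels, the $O(e^{-\lambda_{3}t})$ terms vanish as $t\to\infty$ thanks to the gap $\lambda_{3}>\lambda_{2}$, and the claimed identity follows.

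The main obstacle is ensuring that the $\lambda_{2}$-mode actually contributes, so that the ratio is well-posed rather than indeterminate $0/0$: this requires both $\boldsymbol{\alpha}_{2}\neq\boldsymbol{0}$ (a generic non-degeneracy condition on the initial state, equivalent to $\boldsymbol{x}(0)$ not lying in $\mathrm{span}\{\boldsymbol{v}_{2}(L)\}^{\perp}\otimes\mathbb{R}^{d}$) and $\phi(\mathcal{V}_{i})\boldsymbol{v}_{2}(L)\neq\boldsymbol{0}$ for $i=1,2$, i.e.\ neither selected subset lies entirely in the zero-set of the Fiedler vector. Under these natural non-degeneracy assumptions, the simplicity of $\lambda_{2}$ makes the gap-based domination argument rigorous and the limit in Definition~\ref{def:relative-tempo} exists and equals the stated quantity.
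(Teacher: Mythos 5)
Your proof is correct and follows essentially the same route as the paper, which packages the identical spectral-decomposition-plus-dominant-mode argument into the general Lemma \ref{lem:relative-tempo-symmetric-matrix} for symmetric $M$ and then specializes to $M=-L\otimes I_{d}$. Your explicit statement of the non-degeneracy conditions ($\boldsymbol{\alpha}_{2}\neq\boldsymbol{0}$ and $\phi(\mathcal{V}_{i})\boldsymbol{v}_{2}(L)\neq\boldsymbol{0}$ for $i=1,2$) is a welcome addition that the paper leaves implicit.
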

\begin{proof}
According to Lemma \ref{lem:relative-tempo-symmetric-matrix} in Appendix,
the proof follows by choosing \textbf{$M=-L\otimes I_{d}$}.
\end{proof}
On the one hand, if the knowledge of both the core block and zero
block of a FAN is available, according to the Definition \ref{def:fsn-tempo-network-autonomous}
and Theorem \ref{thm:relative-tempo-FAN}, one can employ relative
tempo to construct the FSN network instead of using the information
in $\boldsymbol{v}_{\text{2}}(L)$. Under the FSN network, a consensus
at the value of the average of initial states associated with agents
in the core block and zero block, or the average of initial states
associated with agents in zero block and initial state of core node,
can be reached. In the meanwhile, the FSN network can be constructed
in a distributed manner, which is similar to SANs. 

On the other hand, if the knowledge of both the core block and zero
block of the network $\mathcal{G}$ is unavailable, a natural question
is whether one can determine the core block and zero block from the
network data. For general networks, this might be challenging. However,
such a data-driven approach can be adopted for tree networks, as every
node in a tree is a cut node; the core block in a tree network contains
at most two nodes. 

Here, we further discuss a class of the tree networks without zero
blocks and examine how to construct their FSN networks only using
local observations similar to the relative tempo. In fact, according
to the proof of Lemma \ref{lem:relative-tempo-symmetric-matrix} in
the Appendix, one can see that ${\displaystyle \lim_{t\rightarrow\infty}}\frac{\boldsymbol{e}_{1}^{\top}\dot{\boldsymbol{x}}_{u}(t)}{\boldsymbol{e}_{1}^{\top}\dot{\boldsymbol{x}}_{v}(t)}=\frac{[\boldsymbol{v}_{2}]_{u}}{[\boldsymbol{v}_{2}]_{v}}$,
where $u,v\in\mathcal{V}$. Note that for a tree network without zero
blocks, either there exists a core block containing two nodes $\{u,v\}$
such that $[\boldsymbol{v}_{2}]_{u}[\boldsymbol{v}_{2}]_{v}<0$, or
there exists a core node $\{w\}$ such that $[\boldsymbol{v}_{2}]_{w}=0$.
For the former case, one can use the quantity $\mathbb{L}^{\prime}(u,v)={\displaystyle \lim_{t\rightarrow\infty}}\frac{\boldsymbol{e}_{1}^{\top}\dot{\boldsymbol{x}}_{u}(t)}{\boldsymbol{e}_{1}^{\top}\dot{\boldsymbol{x}}_{v}(t)}$
instead of $\mathbb{L}(u,v)$ to identify nodes $u$ and $v$ in the
core block, then the edges between $u$ and $v$ are reserved while
the remaining edges shall be treated according to Definition \ref{def:fsn-tempo-network-autonomous}.
For the latter case, one can also use the quantity $\mathbb{L}^{\prime}(u,v)$
to construct the FSN network according to Definition \ref{def:fsn-tempo-network-autonomous}.

According to Definition \ref{def:fsn-tempo-network-autonomous} and
Theorem \ref{thm:relative-tempo-FAN}, the reduced neighbor set for
constructing FSN network of FAN on tree networks  can be defined as
follows.
\begin{defn}
Let $\mathcal{T}=(\mathcal{V},\mathcal{E},W)$ be a tree network without
zero blocks. The reduced neighbor set for $i\text{\ensuremath{\in\mathcal{V}}}$
to construct the associated FSN network of FAN (\ref{eq:consensus-overall})
is
\begin{eqnarray*}
\mathcal{N}_{i}^{\text{FSN}} & = & \left\{ j\in\mathcal{N}_{i}\thinspace\mid\thinspace\boldsymbol{v}_{\text{2}}(L(\mathcal{T}))_{ij}>1\,\text{or}\,\boldsymbol{v}_{\text{2}}(L(\mathcal{T}))_{ij}<0\right\} \\
 & = & \left\{ j\in\mathcal{N}_{i}\thinspace\mid\thinspace\mathbb{L}^{\prime}(i,j)>1\,\text{or}\,\mathbb{L}^{\prime}(i,j)<0\right\} .
\end{eqnarray*}
\end{defn}
To sum up, we have established the parallel framework of distributed
neighbor selection for FANs with a specific focus on tree networks.

\section{Extension to Signed Networks\label{sec:Extension-to-Signed}}

In this section, we discuss extensions of the aforementioned results
to structurally balanced signed networks. A signed network $\mathcal{G}=(\mathcal{V},\mathcal{E},W)$
is structurally balanced if there exists a bipartition of the node
set $\mathcal{V}$ (hence, $\mathcal{V}_{1}\subset\mathcal{V}$ and
$\mathcal{V}_{2}\subset\mathcal{V}$ such that $\mathcal{V}=\mathcal{V}_{1}\cup\mathcal{V}_{2}$
and $\mathcal{V}_{1}\cap\mathcal{V}_{2}=\emptyset$) such that the
edge weights within each subset are positive, but negative for edges
between the two subsets\ \cite{harary1953notion}. For structurally
balanced signed SANs, the eigenvectors of the perturbed Laplacian
can be transformed from the unsigned SANs via Gauge transformations
\cite{altafini2013consensus}. We now discuss the extension of our
results for structurally balanced signed networks. 

\subsection{Signed Semi-Autonomous Networks}

Consider the interaction protocol, 

\begin{align}
\dot{\boldsymbol{x}}_{i}(t) & =-\sum_{i=1}^{n}|w_{ij}|(\boldsymbol{x}_{i}(t)-\text{{\bf sgn}}(w_{ij})\boldsymbol{x}_{j}(t))\nonumber \\
 & -\sum_{l=1}^{m}|b_{il}|(\boldsymbol{x}_{i}(t)-\text{{\bf sgn}}(b_{il})\boldsymbol{u}_{l}),i\in\mathcal{V},\label{eq:LF-signed-protocol}
\end{align}
where $b_{il}\in\left\{ 1,-1\right\} $ if and only if $i\in\mathcal{V}_{\text{leader}}$
and $b_{il}=0$ otherwise. The sign function $\text{{\bf sgn}}(\cdot)$
is such that $\text{{\bf sgn}}(z)=1$ for $z>0$, $\text{{\bf sgn}}(z)=-1$
for $z<0$ and $\text{{\bf sgn}}(z)=0$ for $z=0$.

Denote the signed Laplacian matrix of $\mathcal{G}$ as $L^{s}=(l_{ij}^{s})\in\mathbb{R}^{n\times n}$,
where $l_{ii}^{s}=\sum{}_{j=1}^{n}|w_{ij}|$ for $i\in\mathcal{V}$
and $l_{ij}^{s}=-w_{ij}$ for $i\ne j$. The collective dynamics of
(\ref{eq:LF-signed-protocol}) is then,
\begin{equation}
\dot{\boldsymbol{x}}=-(L_{B}^{s}(\mathcal{G})\otimes I_{d})\boldsymbol{x}+(B\otimes I_{d})\boldsymbol{u},\label{eq:signed-semi-overall}
\end{equation}
where $L_{B}^{s}(\mathcal{G})=L^{s}+\text{{\bf diag}}(|B|\mathds{1}_{m})$,
$\boldsymbol{x}=(\boldsymbol{x}_{1}^{\top},\dots,\boldsymbol{x}_{n}^{\top})^{\top}\in\mathbb{R}^{dn}$,
$B=[b_{il}]\in\mathbb{R}^{n\times m}$ and $\boldsymbol{u}=(\boldsymbol{u}_{1}^{\top},\dots,\boldsymbol{u}_{m}^{\top})^{\top}\in\mathbb{R}^{dm}$.
Denote by the edge set between external inputs and the leaders and
the input set as $\mathcal{E}^{'}$ and $\mathcal{U}=(\boldsymbol{u}_{1},\dots,\boldsymbol{u}_{m})$,
respectively. The augmented graph $\widehat{\mathcal{G}}=(\widehat{\mathcal{V}},\widehat{\mathcal{E}},\widehat{W})$
is directed with $\widehat{\mathcal{V}}=\mathcal{V}\cup\mathcal{U}$,
$\widehat{\mathcal{E}}=\mathcal{E}\cup\mathcal{E}^{'}$ and $\widehat{W}=\left(\begin{array}{cc}
W & B\\
{\bf 0}_{m\times n} & {\bf 0}_{m\times m}
\end{array}\right)$. The signed Laplacian matrix of the network $\widehat{\mathcal{G}}$
is positive semi-definite if $\widehat{\mathcal{G}}$ is structurally
balanced \cite{altafini2013consensus}.
\begin{lem}
\label{thm:uniqueness-signed} Consider the signed SAN (\ref{eq:signed-semi-overall})
on a signed network $\mathcal{G}=(\mathcal{V},\mathcal{E},W)$. Suppose
that $\mathcal{G}=(\mathcal{V},\mathcal{E},W)$ is connected and $\widehat{\mathcal{G}}=(\widehat{\mathcal{V}},\widehat{\mathcal{E}},\widehat{W})$
is structurally balanced, and let $\lambda_{1}(L_{B}^{s})$ and $\boldsymbol{v}_{1}(L_{B}^{s})$
be the smallest eigenvalue of $L_{B}^{s}$ and the corresponding normalized
eigenvector, respectively. Then, $\lambda_{1}(L_{B}^{s})>0$ is a
simple eigenvalue of $L_{B}^{s}$ and $\boldsymbol{v}_{1}(L_{B}^{s})$
is positive under a proper Gauge transformation.
\end{lem}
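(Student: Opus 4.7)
The plan is to reduce the signed case to the unsigned case via a Gauge transformation, then invoke Lemma \ref{lem:uniqueness-unsigned}. Because $\widehat{\mathcal{G}}$ is structurally balanced, there exists a bipartition of $\widehat{\mathcal{V}}=\mathcal{V}\cup\mathcal{U}$ into two sets such that all edges within each set carry positive weights and all edges across the bipartition carry negative weights. Equivalently, there is a signature matrix $\widehat{D}=\text{{\bf diag}}(\sigma_{1},\dots,\sigma_{n},\tau_{1},\dots,\tau_{m})$ with $\sigma_{i},\tau_{l}\in\{-1,+1\}$ such that $\widehat{D}\widehat{W}\widehat{D}$ has only nonnegative entries. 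First I would extract from $\widehat{D}$ the diagonal block $D=\text{{\bf diag}}(\sigma_{1},\dots,\sigma_{n})\in\mathbb{R}^{n\times n}$ acting on the agent coordinates and verify that (i) $DWD$ is the adjacency matrix of an unsigned connected graph $\mathcal{G}^{+}$ on $\mathcal{V}$, and (ii) the transformed input matrix $\widetilde{B}:=D\,B\,\text{{\bf diag}}(\tau_{1},\dots,\tau_{m})$ is entry-wise nonnegative.

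Next I would compute $D L_{B}^{s}D$. Since $D^{2}=I_{n}$ and the diagonal of $L_{B}^{s}$ depends only on $|w_{ij}|$ and $|b_{il}|$, which are invariant under the conjugation, a direct calculation gives
\begin{equation}
D L_{B}^{s}(\mathcal{G})\,D \;=\; L\bigl(\mathcal{G}^{+}\bigr) + \text{{\bf diag}}(\widetilde{B}\mathds{1}_{m}) \;=\; L_{\widetilde{B}}\bigl(\mathcal{G}^{+}\bigr),
\end{equation}
which is exactly the perturbed Laplacian of an \emph{unsigned} connected SAN. In particular $L_{B}^{s}$ and $L_{\widetilde{B}}(\mathcal{G}^{+})$ are orthogonally similar, so they share their spectra and their eigenspaces are related by $\boldsymbol{v}\mapsto D\boldsymbol{v}$.

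Now I would apply Lemma \ref{lem:uniqueness-unsigned} to the unsigned SAN on $\mathcal{G}^{+}$ with input matrix $\widetilde{B}$, which yields $\lambda_{1}(L_{\widetilde{B}}(\mathcal{G}^{+}))>0$ as a simple eigenvalue with a component-wise positive normalized eigenvector $\boldsymbol{w}$. Transferring back, $\lambda_{1}(L_{B}^{s})=\lambda_{1}(L_{\widetilde{B}}(\mathcal{G}^{+}))>0$ is simple, and $\boldsymbol{v}_{1}(L_{B}^{s})=D\boldsymbol{w}$ is a normalized eigenvector with entries $[\boldsymbol{v}_{1}(L_{B}^{s})]_{i}=\sigma_{i}[\boldsymbol{w}]_{i}$. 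Applying the Gauge transformation $D$ once more produces $D\,\boldsymbol{v}_{1}(L_{B}^{s})=\boldsymbol{w}>0$, which is precisely the statement that $\boldsymbol{v}_{1}(L_{B}^{s})$ is positive under a proper Gauge transformation.

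The main obstacle I anticipate is establishing step (i)--(ii) rigorously, i.e.\ showing that structural balance of the \emph{augmented} graph $\widehat{\mathcal{G}}$ (rather than of $\mathcal{G}$ alone) is exactly what is needed to simultaneously turn both $W$ and $B$ nonnegative after conjugation, since it is conceivable that $\mathcal{G}$ is balanced while the placement of negative input edges breaks the balance of $\widehat{\mathcal{G}}$. This is handled by using the \emph{joint} signature $\widehat{D}$ on $\widehat{\mathcal{V}}$ supplied by balance of $\widehat{\mathcal{G}}$, and noting that its restriction to $\mathcal{V}$ conjugates $W$ while its restriction to $\mathcal{U}$ absorbs the signs of the input edges into $\widetilde{B}$; the rest of the argument is then a routine transport of Lemma \ref{lem:uniqueness-unsigned}.
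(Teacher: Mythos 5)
Your proof is correct and follows exactly the route the paper intends: the paper omits this proof, stating only that it is an immediate extension of Lemma \ref{lem:uniqueness-unsigned} via the Gauge transformation, and your argument fills in precisely that reduction (conjugating $L_{B}^{s}$ by the signature matrix obtained from structural balance of the augmented graph $\widehat{\mathcal{G}}$, checking that both $W$ and the input edges are rendered nonnegative, and transporting the simple positive eigenpair back). Your observation that balance of $\widehat{\mathcal{G}}$ rather than of $\mathcal{G}$ alone is what makes the input matrix nonnegative after conjugation is exactly the right point to be careful about.
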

\begin{proof}
The proof is an immediate extension of Lemma \ref{lem:uniqueness-unsigned}
and omitted for brevity.
\end{proof}
The FSN network for signed SANs can be defined as follows.
\begin{defn}[\textbf{FSN network of signed SANs}]
\label{def:fsn-network-SASN} Let $\mathcal{G}=(\mathcal{V},\mathcal{E},W)$
be a signed SAN characterized by (\ref{eq:signed-semi-overall}).
The FSN network of $\mathcal{G}$, denoted by $\bar{\mathcal{G}}=(\mathcal{\bar{V}},\bar{\mathcal{E}},\bar{W})$,
is a subgraph of $\mathcal{G}$ such that $\mathcal{\bar{V}}=\mathcal{V}$,
$\mathcal{\bar{E}}\subseteq\mathcal{E}$ and $\bar{W}=(\bar{w}_{ij})\in\mathbb{R}^{n\times n}$,
where $\bar{w}_{ij}=w_{ij}$ if $|\boldsymbol{v}_{\text{1}}(L_{B}^{s})_{ij}|>1$
and $\bar{w}_{ij}=0$ if $|\boldsymbol{v}_{\text{1}}(L_{B}^{s})_{ij}|\leq1$.
\end{defn}
\begin{thm}
\label{thm:reachability-FSN-signed}  Let $\bar{\mathcal{G}}=(\mathcal{V},\mathcal{\bar{E}},\bar{W})$
be the FSN network of the signed SAN $\mathcal{G}=(\mathcal{V},\mathcal{E},W)$
characterized by (\ref{eq:signed-semi-overall}). Suppose that $\mathcal{G}=(\mathcal{V},\mathcal{E},W)$
is connected and $\widehat{\mathcal{G}}=(\widehat{\mathcal{V}},\widehat{\mathcal{E}},\widehat{W})$
is structurally balanced; then all agents in $\bar{\mathcal{G}}$
are reachable from the external input. 
\end{thm}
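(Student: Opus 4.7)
The plan is to reduce the signed case to the unsigned case already handled by Theorem~\ref{thm:reachability-FSN-SAN} via a Gauge transformation, exploiting the structural balance assumption on $\widehat{\mathcal{G}}$. Since $\widehat{\mathcal{G}}$ is structurally balanced, there exists a bipartition of $\widehat{\mathcal{V}}$ that induces a diagonal signature matrix $D=\mathrm{\textbf{diag}}(d_1,\dots,d_n)$ with $d_i\in\{+1,-1\}$. The standard Gauge identity then gives $D L_B^{s}(\mathcal{G})D = L_{\tilde B}(\tilde{\mathcal{G}})$, where $\tilde{\mathcal{G}}$ is the unsigned network (and $\tilde B$ the nonnegative input matrix) obtained by flipping all signs so that every edge becomes positive. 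Because $D$ is orthogonal, the eigenpairs are related by $\lambda_1(L_B^{s})=\lambda_1(L_{\tilde B})$ and $\boldsymbol{v}_1(L_B^{s})=D\,\boldsymbol{v}_1(L_{\tilde B})$.

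First, I would apply Lemma~\ref{lem:uniqueness-unsigned} to $L_{\tilde B}(\tilde{\mathcal{G}})$ to guarantee that $\boldsymbol{v}_1(L_{\tilde B})$ can be chosen componentwise positive and that $\lambda_1(L_{\tilde B})>0$; Lemma~\ref{thm:uniqueness-signed} is the signed analog of this fact, so it is consistent. Next, using the relation $[\boldsymbol{v}_1(L_B^{s})]_i=d_i[\boldsymbol{v}_1(L_{\tilde B})]_i$ and the positivity of $\boldsymbol{v}_1(L_{\tilde B})$, I would observe that for every $i,j\in\mathcal{V}$,
\[
|\boldsymbol{v}_1(L_B^{s})_{ij}|=\left|\frac{d_i}{d_j}\right|\,\boldsymbol{v}_1(L_{\tilde B})_{ij}=\boldsymbol{v}_1(L_{\tilde B})_{ij}.
\]
Hence the edge-retention criterion $|\boldsymbol{v}_1(L_B^{s})_{ij}|>1$ used in Definition~\ref{def:fsn-network-SASN} is identical to the criterion $\boldsymbol{v}_1(L_{\tilde B})_{ij}>1$ in Definition~\ref{def:fsn-network-SAN} applied to $\tilde{\mathcal{G}}$. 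Consequently, the FSN network $\bar{\mathcal{G}}$ of the signed SAN has exactly the same underlying directed edge set as the FSN network $\bar{\tilde{\mathcal{G}}}$ of the unsigned SAN on $\tilde{\mathcal{G}}$.

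Reachability is a purely combinatorial property of the underlying directed graph and is insensitive to edge signs. I would therefore finish by invoking Theorem~\ref{thm:reachability-FSN-SAN} on $\bar{\tilde{\mathcal{G}}}$: for each $i\in\bar{\mathcal{V}}$ there is some external input $\boldsymbol{u}_l$ with a directed path to $i$ in $\bar{\tilde{\mathcal{G}}}$; the same path exists in $\bar{\mathcal{G}}$, proving the claim.

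The only subtlety I anticipate is verifying the Gauge identity $D L_B^{s}D = L_{\tilde B}$ carefully for the augmented graph $\widehat{\mathcal{G}}$ rather than for $\mathcal{G}$ alone, because the input edges carry signed weights $b_{il}\in\{+1,-1\}$ and contribute to both the diagonal perturbation $\mathrm{\textbf{diag}}(|B|\mathds{1}_m)$ and the off-graph coupling to $\mathcal{U}$. In particular, one must check that the bipartition supplied by structural balance of $\widehat{\mathcal{G}}$ is compatible with the leader-input coupling, so that flipping signs by $D$ leaves the input-driven term consistent with the unsigned SAN model~\eqref{eq:unsigned-LF-overall}. Once this bookkeeping is done, the rest is an immediate transfer of Theorem~\ref{thm:reachability-FSN-SAN}.
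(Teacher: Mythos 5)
Your proposal is correct and follows essentially the same route as the paper, which likewise reduces the signed case to the unsigned one via the Gauge transformation $G=\mathbf{diag}(\sigma_1,\dots,\sigma_n)$ and then invokes (the proof of) Theorem~\ref{thm:reachability-FSN-SAN}; the paper omits the details for brevity, whereas you spell out the key identity $|\boldsymbol{v}_1(L_B^s)_{ij}|=\boldsymbol{v}_1(L_{\tilde B})_{ij}$ that makes Definition~\ref{def:fsn-network-SASN} coincide with Definition~\ref{def:fsn-network-SAN} on the unsigned counterpart. Your closing remark about checking compatibility of the bipartition with the input coupling is the right thing to verify, and it goes through because the diagonal perturbation uses $|B|$ and is invariant under conjugation by the signature matrix.
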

\begin{proof}
Note that for a structurally balanced signed network, there exists
a quantitative connection between traditional Laplacian matrix and
signed Laplacian matrix via a Gauge transformation, assuming matrix
form $G=\text{{\bf diag}}\left\{ \sigma_{1},\cdots,\sigma_{n}\right\} $,
where $\sigma_{i}\in\{1,-1\}$ and $i\in\underline{n}$\ \cite{altafini2013consensus}.
The proof (omitted for brevity) follows from Lemma \ref{lem:steady-state-unsigned-semi},
proof of Theorem \ref{thm:reachability-FSN-SAN}, and applying the
Gauge transformation corresponding to the signed network $\mathcal{G}$. 
\end{proof}
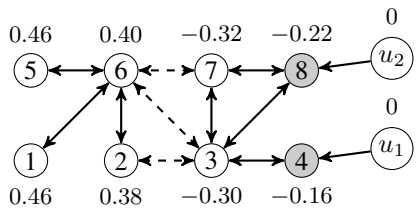
\begin{figure}[tbh]
\begin{centering}
\begin{tikzpicture}[scale=0.8, >=stealth',   pos=.8,  photon/.style={decorate,decoration={snake,post length=1mm}} ]
	\node (n1) at (0,0) [circle,inner sep= 1.5pt,draw] {1};
	\node (n2) at (1.5,0) [circle,inner sep= 1.5pt,draw] {2};
    \node (n3) at (3,0) [circle,inner sep= 1.5pt,draw] {3};
    \node (n4) at (4.5,0) [circle,inner sep= 1.5pt,fill=black!20,draw] {4};
	\node (n5) at (0,1.5) [circle,inner sep= 1.5pt,draw] {5};
	\node (n6) at (1.5,1.5) [circle,inner sep= 1.5pt,draw] {6};
    \node (n7) at (3,1.5) [circle,inner sep= 1.5pt,draw] {7};
    \node (n8) at (4.5,1.5) [circle,inner sep= 1.5pt,fill=black!20,draw] {8};

	\node (TV1) at (0,-0.6)   {{\small $0.46$}};
	\node (TV2) at (1.5,-0.6)   {{\small $0.38$}};
	\node (TV3) at (3,-0.6)   {{\small $-0.30$}};
	\node (TV4) at (4.5,-0.6)   {{\small $-0.16$}};
	\node (TV5) at (0,2.1)   {{\small $0.46$}};
	\node (TV6) at (1.5,2.1)   {{\small $0.40$}};
	\node (TV7) at (3,2.1)   {{\small $-0.32$}};
	\node (TV8) at (4.5,2.1)   {{\small $-0.22$}};

    \node (u1) at (6,0.2) [circle,inner sep= 1.5pt,draw] {$u_1$};
    \node (u2) at (6,1.7) [circle,inner sep= 1.5pt,draw] {$u_2$};

	\node (u11) at (6,2.4)   {{\small $0$}};
	\node (u22) at (6,0.9)   {{\small $0$}};

	\path[]
	(u1) [->,thick] edge node[below] {} (n4)	
    (u2) [->,thick] edge node[below] {} (n8);

	\path[]	(n2) [<->,dashed,thick] edge node[below] {} (n3);
	\path[]	(n6) [<->,thick] edge node[below] {} (n5);
	\path[] 	(n6) [<->,thick] edge node[below] {} (n1);
	\path[] 	(n2) [<->,thick] edge node[below] {} (n6); 
	\path[] 	(n7) [<->,dashed,thick] edge node[below] {} (n6); 
	\path[] 	(n3) [<->,dashed,thick] edge node[below] {} (n6);
	\path[] 	(n3) [<->,thick] edge node[below] {} (n7);
	\path[] 	(n8) [<->,thick] edge node[below] {} (n7);
	\path[] 	(n8) [<->,thick] edge node[below] {} (n3);
	\path[] 	(n4) [<->,thick] edge node[below] {} (n3);

\end{tikzpicture}
\par\end{centering}
\caption{An eight-node structurally balanced signed network $\mathcal{G}_{8}$
with two agents directly influenced by external inputs. The solid
lines and dashed lines represent the edges weighted by positive and
negative numbers, respectively. The entry in the $\boldsymbol{v}_{1}(L_{B}^{s})$
corresponding to each agent is shown close to each node.}

\label{fig:eight-node-consensus-network-SB-LF}
\end{figure}

\begin{thm}
\label{thm:relative-tempo-SAN-signed} Let $\mathcal{V}_{1}\subset\mathcal{V}$
and $\mathcal{V}_{2}\subset\mathcal{V}$ be two subsets of agents
in a connected signed SAN $\mathcal{G}=(\mathcal{V},\mathcal{E},W)$
characterized by (\ref{eq:signed-semi-overall}). If $\widehat{\mathcal{G}}=(\widehat{\mathcal{V}},\widehat{\mathcal{E}},\widehat{W})$
is structurally balanced, then the relative tempo between agents in
$\mathcal{V}_{1}$ and $\mathcal{V}_{2}$ satisfies ${\color{blue}\mathbb{L}}(\mathcal{V}_{1},\mathcal{V}_{2})=\frac{\|\phi(\mathcal{V}_{1})\boldsymbol{v}_{1}(L_{B}^{s})\|}{\|\phi(\mathcal{V}_{2})\boldsymbol{v}_{1}(L_{B}^{s})\|}.$
\end{thm}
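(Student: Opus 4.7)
The plan is to reduce the signed case to the unsigned case via a Gauge transformation and then invoke Theorem \ref{thm:relative-tempo-SAN}. Since $\widehat{\mathcal{G}}$ is structurally balanced, there exists a diagonal matrix $G = \text{{\bf diag}}(\sigma_1,\ldots,\sigma_n)$ with $\sigma_i \in \{1,-1\}$ such that $G L_B^s G$ coincides with the perturbed Laplacian $L_{\tilde B}$ of an associated unsigned SAN (this is the standard consequence of structural balance used in the proof of Theorem \ref{thm:reachability-FSN-signed}). I would first introduce the change of variables $\boldsymbol{y}(t) = (G \otimes I_d)\boldsymbol{x}(t)$ and rewrite the dynamics (\ref{eq:signed-semi-overall}) as
\begin{equation*}
\dot{\boldsymbol{y}}(t) = -(L_{\tilde B}\otimes I_d)\boldsymbol{y}(t) + (GB\otimes I_d)\boldsymbol{u},
\end{equation*}
which is a standard unsigned SAN of the form (\ref{eq:unsigned-LF-overall}).

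Next I would apply Theorem \ref{thm:relative-tempo-SAN} to the transformed system, obtaining
\begin{equation*}
\lim_{t\to\infty}\frac{\|\phi(\mathcal{V}_1)\otimes I_d\,\dot{\boldsymbol{y}}(t)\|}{\|\phi(\mathcal{V}_2)\otimes I_d\,\dot{\boldsymbol{y}}(t)\|} = \frac{\|\phi(\mathcal{V}_1)\boldsymbol{v}_1(L_{\tilde B})\|}{\|\phi(\mathcal{V}_2)\boldsymbol{v}_1(L_{\tilde B})\|}.
\end{equation*}
The remaining task is to translate both sides back to the signed quantities. Because $G$ is diagonal with $\pm 1$ entries, the matrix $\phi(\mathcal{V}_k) G$ differs from $\phi(\mathcal{V}_k)$ only by sign flips of its rows, so for any vector $\boldsymbol{z}$ one has $\|\phi(\mathcal{V}_k) G \boldsymbol{z}\| = \|\phi(\mathcal{V}_k) \boldsymbol{z}\|$. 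Applying this with $\boldsymbol{z} = (I_n\otimes I_d)\dot{\boldsymbol{x}}(t)$ yields
\begin{equation*}
\|\phi(\mathcal{V}_k)\otimes I_d\,\dot{\boldsymbol{y}}(t)\| = \|(\phi(\mathcal{V}_k)G)\otimes I_d\,\dot{\boldsymbol{x}}(t)\| = \|\phi(\mathcal{V}_k)\otimes I_d\,\dot{\boldsymbol{x}}(t)\|,
\end{equation*}
so the left-hand side already equals $\mathbb{L}(\mathcal{V}_1,\mathcal{V}_2)$ as defined in (\ref{eq:relative-tempo-definition}).

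For the right-hand side I would use the eigenvector relation $L_B^s \boldsymbol{v}_1(L_B^s) = \lambda_1(L_B^s)\boldsymbol{v}_1(L_B^s)$ combined with $G^2 = I$ to conclude that $G\boldsymbol{v}_1(L_B^s)$ is an eigenvector of $L_{\tilde B} = G L_B^s G$ corresponding to the same eigenvalue $\lambda_1(L_B^s)$. By Lemma \ref{thm:uniqueness-signed} this eigenvalue is simple, so $\boldsymbol{v}_1(L_{\tilde B}) = G \boldsymbol{v}_1(L_B^s)$ up to sign, and $G$ preserves the normalization. The same row-sign-flip observation then gives $\|\phi(\mathcal{V}_k)\boldsymbol{v}_1(L_{\tilde B})\| = \|\phi(\mathcal{V}_k)\boldsymbol{v}_1(L_B^s)\|$, and combining these identities yields the claimed formula.

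The main obstacle is not analytic but bookkeeping: one must verify carefully that the Gauge transformation relating $L_B^s$ to an unsigned perturbed Laplacian $L_{\tilde B}$ is well-defined (which requires structural balance of the augmented graph $\widehat{\mathcal{G}}$, not merely $\mathcal{G}$, so that the external-input couplings are absorbed consistently), and that the unsigned SAN obtained after transformation satisfies the connectivity hypothesis needed to invoke Theorem \ref{thm:relative-tempo-SAN}. Once these structural points are in place, the proof is essentially mechanical because $G$ is an isometry on $\mathbb{R}^n$ that commutes in the appropriate sense with the row-selection operator $\phi(\cdot)$ up to norm-preserving sign flips.
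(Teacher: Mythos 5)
Your proof is correct, but it takes a genuinely different route from the paper. The paper's own proof is a one-line application of the general Lemma \ref{lem:relative-tempo-symmetric-matrix} to the augmented block matrix $M=\left(\begin{smallmatrix}-L_{B}^{s}\otimes I_{d} & B\otimes I_{d}\\ \boldsymbol{0} & \boldsymbol{0}\end{smallmatrix}\right)$, exactly as in the proof of Theorem \ref{thm:relative-tempo-SAN}; there the only role of structural balance is to invoke Lemma \ref{thm:uniqueness-signed} so that $\lambda_{1}(L_{B}^{s})>0$ is simple, which collapses the sum in (\ref{eq:lemma-ratio-limit}) to a single term. You instead conjugate by the Gauge matrix $G$, reduce to the unsigned system, cite Theorem \ref{thm:relative-tempo-SAN} as a black box, and undo the transformation using the facts that $G$ commutes with $\phi(\cdot)$ up to row-sign flips (hence preserves the relevant norms) and that $G\boldsymbol{v}_{1}(L_{B}^{s})$ is the (normalized, simple) first eigenvector of $GL_{B}^{s}G=L+\text{{\bf diag}}(|B|\mathds{1}_{m})$. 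All of these steps check out, including the point you flag about absorbing the input couplings: with $\widehat{\mathcal{G}}$ structurally balanced the augmented Gauge matrix also flips the input signs so that $GBG_{u}=|B|$, making the transformed system a bona fide instance of (\ref{eq:unsigned-LF-overall}) on a connected unsigned graph (and in any case the relative tempo depends only on the homogeneous part of the dynamics). Your route makes the role of structural balance more transparent and reuses the unsigned theorem wholesale at the cost of the Gauge bookkeeping; the paper's route is shorter and shows that the formula really only needs symmetry of $L_{B}^{s}$ plus simplicity and positivity of $\lambda_{1}(L_{B}^{s})$, with structural balance entering solely to secure those spectral facts.
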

\begin{proof}
According to Lemma \ref{lem:relative-tempo-symmetric-matrix}, the
proof follows by choosing $M=\left(\begin{array}{cc}
-L_{B}^{s}\otimes I_{d} & B\otimes I_{d}\\
{\bf 0}_{md\times nd} & {\bf 0}_{md\times md}
\end{array}\right)$ in Lemma \ref{lem:relative-tempo-symmetric-matrix} in Appendix. 
\end{proof}
\begin{thm}
\label{thm:convergence-rate-FSN-signed} Let $\bar{\mathcal{G}}=(\mathcal{V},\mathcal{\bar{E}},\bar{W})$
be the FSN network of a connected signed SAN $\mathcal{G}=(\mathcal{V},\mathcal{E},W)$
characterized by (\ref{eq:signed-semi-overall}). If $\widehat{\mathcal{G}}=(\widehat{\mathcal{V}},\widehat{\mathcal{E}},\widehat{W})$
is structurally balanced, then $\lambda_{1}(L_{B}^{s}(\bar{\mathcal{G}}))\ge\lambda_{1}(L_{B}^{s}(\mathcal{G})).$
\end{thm}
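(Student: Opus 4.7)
The plan is to reduce the signed case to the unsigned case via a Gauge transformation, then invoke Theorem \ref{thm:convergence-rate-FSN-SAN}. Since $\widehat{\mathcal{G}}$ is structurally balanced, there exists a bipartition of $\widehat{\mathcal{V}}$ inducing a diagonal signature matrix $G = \mathbf{diag}(\sigma_1,\ldots,\sigma_n)$ with $\sigma_i \in \{+1,-1\}$ such that $G L_B^s(\mathcal{G}) G = L_B(\mathcal{G}_u)$, where $\mathcal{G}_u$ denotes the unsigned counterpart of $\mathcal{G}$ obtained by taking absolute values of all edge weights (and letting the input matrix become $|B|$). Because $G^2 = I$, this is an orthogonal similarity, so $\lambda_1(L_B^s(\mathcal{G})) = \lambda_1(L_B(\mathcal{G}_u))$, and the corresponding eigenvectors are related by $\boldsymbol{v}_1(L_B^s(\mathcal{G})) = G\,\boldsymbol{v}_1(L_B(\mathcal{G}_u))$.

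Next, I would show that the FSN construction commutes with this Gauge transformation. By Definition \ref{def:fsn-network-SASN}, an edge $(i,j)$ is retained in $\bar{\mathcal{G}}$ when $|\boldsymbol{v}_1(L_B^s)_{ij}| > 1$. Writing $[\boldsymbol{v}_1(L_B^s)]_i = \sigma_i [\boldsymbol{v}_1(L_B(\mathcal{G}_u))]_i$ (both with the proper sign choice guaranteed by Lemma \ref{thm:uniqueness-signed}), one has
\[
|\boldsymbol{v}_1(L_B^s)_{ij}| = \left|\frac{\sigma_i [\boldsymbol{v}_1(L_B(\mathcal{G}_u))]_i}{\sigma_j [\boldsymbol{v}_1(L_B(\mathcal{G}_u))]_j}\right| = \boldsymbol{v}_1(L_B(\mathcal{G}_u))_{ij},
\]
since the unsigned eigenvector is componentwise positive. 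Hence, $\bar{\mathcal{G}}$ has the same underlying edge set (ignoring signs) as the FSN network $\bar{\mathcal{G}_u}$ of the unsigned SAN on $\mathcal{G}_u$, and the Gauge transformation $G$ continues to balance $\widehat{\bar{\mathcal{G}}}$ since it only removes edges.

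Then applying Theorem \ref{thm:convergence-rate-FSN-SAN} to $\mathcal{G}_u$ yields $\lambda_1(L_B(\bar{\mathcal{G}_u})) \geq \lambda_1(L_B(\mathcal{G}_u))$. Gauge-transforming back gives $\lambda_1(L_B^s(\bar{\mathcal{G}})) = \lambda_1(L_B(\bar{\mathcal{G}_u})) \geq \lambda_1(L_B(\mathcal{G}_u)) = \lambda_1(L_B^s(\mathcal{G}))$, which is the desired inequality.

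The main obstacle I anticipate is making the Gauge-transformation step fully rigorous for the \emph{reduced} network, namely verifying that the signed FSN network $\bar{\mathcal{G}}$ remains structurally balanced under the same signature $G$, so that the identity $G L_B^s(\bar{\mathcal{G}}) G = L_B(\bar{\mathcal{G}_u})$ carries over intact. This should follow because edge removal preserves structural balance with the original signature, but it deserves explicit verification. A secondary subtlety is that the FSN construction uses strict inequality $|\boldsymbol{v}_1(L_B^s)_{ij}| > 1$; since the unsigned ratio is positive, the signed threshold criterion coincides exactly with the unsigned one, so no sign ambiguity arises.
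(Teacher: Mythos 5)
Your proposal is correct and follows exactly the route the paper intends: the paper omits this proof, stating only that it is "a straightforward extension of Theorem \ref{thm:convergence-rate-FSN-SAN}," and your Gauge-transformation argument (orthogonal similarity $G L_B^s G = L_B$, coincidence of the signed and unsigned FSN edge sets via $|\boldsymbol{v}_1(L_B^s)_{ij}| = \boldsymbol{v}_1(L_B)_{ij}$, preservation of structural balance under edge removal, then invoking the unsigned theorem) is precisely the extension being alluded to. In fact, your writeup supplies the details the paper leaves out, including the one genuinely non-trivial verification (that the reduced signed network remains balanced under the same signature so the similarity carries over), which you correctly flag and resolve.
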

\begin{proof}
The proof is a straightforward extension of Theorem \ref{thm:convergence-rate-FSN-SAN},
and omitted for brevity.
\end{proof}
We now provide an example to illustrate the aforementioned results
on signed SANs.
\begin{figure}[tbh]
\centering{}\begin{tikzpicture}[scale=0.8, >=stealth',   pos=.8,  photon/.style={decorate,decoration={snake,post length=1mm}} ]
	\node (n1) at (0,0) [circle,inner sep= 1.5pt,draw] {1};
	\node (n2) at (1.5,0) [circle,inner sep= 1.5pt,draw] {2};
    \node (n3) at (3,0) [circle,inner sep= 1.5pt,draw] {3};
    \node (n4) at (4.5,0) [circle,inner sep= 1.5pt,fill=black!20,draw] {4};
	\node (n5) at (0,1.5) [circle,inner sep= 1.5pt,draw] {5};
	\node (n6) at (1.5,1.5) [circle,inner sep= 1.5pt,draw] {6};
    \node (n7) at (3,1.5) [circle,inner sep= 1.5pt,draw] {7};
    \node (n8) at (4.5,1.5) [circle,inner sep= 1.5pt,fill=black!20,draw] {8};

	\node (TV1) at (0,-0.6)   {{\small $0.46$}};
	\node (TV2) at (1.5,-0.6)   {{\small $0.38$}};
	\node (TV3) at (3,-0.6)   {{\small $-0.30$}};
	\node (TV4) at (4.5,-0.6)   {{\small $-0.16$}};
	\node (TV5) at (0,2.1)   {{\small $0.46$}};
	\node (TV6) at (1.5,2.1)   {{\small $0.40$}};
	\node (TV7) at (3,2.1)   {{\small $-0.32$}};
	\node (TV8) at (4.5,2.1)   {{\small $-0.22$}};

 \node (u1) at (6,0.2) [circle,inner sep= 1.5pt,draw] {$u_1$};     
    \node (u2) at (6,1.7) [circle,inner sep= 1.5pt,draw] {$u_2$};
	\node (u11) at (6,2.4)   {{\small $0$}}; 	
    \node (u22) at (6,0.9)   {{\small $0$}};
	\path[] 	
    (u1) [->,thick] edge node[below] {} (n4) 	
    (u2) [->,thick] edge node[below] {} (n8);
	\path[] 	(n3) [->,dashed,thick] edge node[below] {} (n2);
	\path[] 	(n6) [->,thick] edge node[below] {} (n5);
	\path[] 	(n6) [->,thick] edge node[below] {} (n1);
	\path[] 	(n2) [->,thick] edge node[below] {} (n6);
	\path[] 	(n7) [->,dashed,thick] edge node[below] {} (n6);
	\path[] 	(n3) [->,dashed,thick] edge node[below] {} (n6);
	\path[] 	(n3) [->,thick] edge node[below] {} (n7);
	\path[] 	(n8) [->,thick] edge node[below] {} (n7);
	\path[] 	(n8) [->,thick] edge node[below] {} (n3); 
	\path[] 	(n4) [->,thick] edge node[below] {} (n3);

\end{tikzpicture}\caption{FSN network of the structurally balanced signed network in Figure
\ref{fig:eight-node-consensus-network-SB-LF}.}
\label{fig:signed-SB-FSN}
\end{figure}
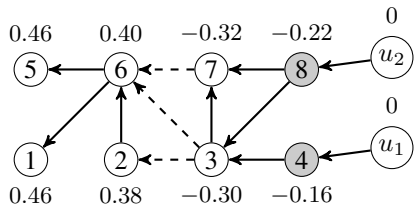

\begin{example}
Consider a signed SAN on the network $\mathcal{G}_{8}$ shown in Figure
\ref{fig:eight-node-consensus-network-SB-LF}, each agent holds a
three-dimensional state and agents $4$ and $8$ are leaders that
are directly influenced by the homogeneous input $\boldsymbol{u}=(\boldsymbol{u}_{1}^{\top},\thinspace\boldsymbol{u}_{2}^{\top})^{\top}$,
where $\boldsymbol{u}_{1}=\boldsymbol{u}_{2}=(0.7,\thinspace0.8,\thinspace0.9)^{\top}\in\mathbb{R}^{3}$.
The associated FSN network is shown in Figure \ref{fig:signed-SB-FSN}.
As one can see from Figure \ref{fig:convergence-rate-on-signed-SB},
the convergence rate of the bipartite consensus is significantly improved
on the associated FSN network.

\begin{figure}[tbh]
\begin{centering}
\includegraphics[width=9cm]{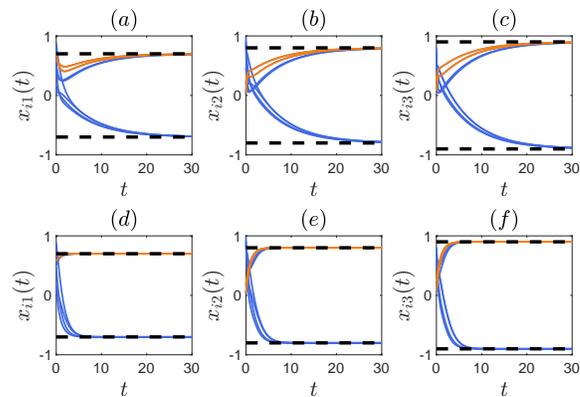}
\par\end{centering}
\caption{State trajectories of agents in the signed SAN (\ref{eq:signed-semi-overall})
on the structurally balanced signed network in Figure \ref{fig:eight-node-consensus-network-SB-LF}
((a)-(c)). State trajectories of agents in the signed SAN (\ref{eq:signed-semi-overall})
on FSN network in Figure \ref{fig:signed-SB-FSN} ((d)-(f)). The orange
and blue lines in each plot are trajectories of leader and follower
agents, respectively. The dotted lines in each plot represent external
inputs.}
\label{fig:convergence-rate-on-signed-SB}
\end{figure}

\end{example}

\subsection{Signed Fully-Autonomous Networks}

For the case of signed FANs, consider the interaction protocol, 

\begin{align}
\dot{\boldsymbol{x}}_{i}(t) & =-\sum_{i=1}^{n}|w_{ij}|(\boldsymbol{x}_{i}(t)-\text{{\bf sgn}}(w_{ij})\boldsymbol{x}_{j}(t)),i\in\mathcal{V},\label{eq:signed-protocol}
\end{align}
whose collective dynamics is
\begin{equation}
\dot{\boldsymbol{x}}=-(L^{s}(\mathcal{G})\otimes I_{d})\boldsymbol{x}.\label{eq:signed-semi-overall-1}
\end{equation}
Denote the unsigned network corresponding to the signed network $\mathcal{G}$
as $\widehat{\mathcal{G}}=(\mathcal{V},\mathcal{E},|W|)$, with Laplacian
matrix $L$. It is shown that $L=GL^{s}G$, where $G$ is the Gauge
transformation corresponding to the structurally balanced signed network
$\mathcal{G}$ \cite{altafini2013consensus}. This correspondence
implies that the Laplacian eigenvectors can be respectively transformed
via a Gauge transformation. Therefore, the results on the unsigned
FANs can be extended to the structurally balanced signed FANs through
a proper Gauge transformation on the Fielder vector.

\section{Acknowledgement}

The authors would like to thank the Associate Editor and the anonymous
reviewers for their constructive comments that improved the quality
of this manuscript.

\section{Conclusions Remarks \label{sec:Conclusion-Remarks}}

This paper addresses distributed neighbor selection problem of multi-agent
networks. In this direction, a theoretical framework of distributed
neighbor selection for diffusively coupled multi-agent networks has
been established. Along the way, we have highlighted the utility of
Laplacian eigenvectors to further improve network performance; these
eigenvectors encode hierarchical information about the network that
in turn, relate to the notion of relative tempo. The latter connection
is then used in a data-driven setting for the neighbor selection problem. 

Future works in this direction include extensions to directed and/or
time-varying networks, multi-agent systems with general individual
dynamics, and neighbor selection with noisy and delayed time-series
data. Furthermore, a notable feature of multi-agent networks is their
robustness to node/link failures. As such, it is often the case that
more links are favorable for the functionality of multi-agent systems,
e.g., convergence rate of the underlying coordination algorithm. Hence,
an interesting problem is to examine the optimal trade-off between
network robustness and size of the simplified network.

\appendix{}

The appendix contains the proofs of various results discussed in the
paper.

\section*{Proof of Lemma \ref{lem:uniqueness-unsigned}}
\begin{proof}
Note that the perturbed Laplacian matrix $L_{B}$ is symmetric and
diagonal dominant; as such, $\lambda_{i}(L_{B})\ge0$ for all $i\in\underline{n}$.
Assume that $L_{B}$ has an eigenvalue $\lambda_{1}(L_{B})=0$ with
associated eigenvector $\boldsymbol{v}_{1}(L_{B})\in\mathbb{R}^{n}$.
Then, 
\begin{align}
L_{B}\boldsymbol{v}_{1}(L_{B}) & =L\boldsymbol{v}_{1}(L_{B})+\text{{\bf diag}}(B\mathds{1}_{m})\boldsymbol{v}_{1}(L_{B})=0.
\end{align}
Multiply the above equality by $\boldsymbol{v}_{1}^{\top}(L_{B})$
from left yields,
\begin{equation}
\boldsymbol{v}_{1}^{\top}(L_{B})L\boldsymbol{v}_{1}(L_{B})+\boldsymbol{v}_{1}^{\top}(L_{B})\text{{\bf diag}}(B\mathds{1}_{m})\boldsymbol{v}_{1}(L_{B})=0.
\end{equation}
Since, $\boldsymbol{v}_{1}^{\top}(L_{B})L\boldsymbol{v}_{1}(L_{B})\ge0,$
and $\boldsymbol{v}_{1}^{\top}(L_{B})\text{{\bf diag}}(B\mathds{1}_{m})\boldsymbol{v}_{1}(L_{B})\ge0,$
one has, $\boldsymbol{v}_{1}^{\top}(L_{B})L\boldsymbol{v}_{1}(L_{B})=0,$
and $\boldsymbol{v}_{1}^{\top}(L_{B})\text{{\bf diag}}(B\mathds{1}_{m})\boldsymbol{v}_{1}(L_{B})=0.$
This however means that $\boldsymbol{v}_{1}(L_{B})=\mathds{1}_{n}$,
leading to having $\boldsymbol{v}_{1}^{\top}(L_{B})\text{{\bf diag}}(B\mathds{1}_{m})\boldsymbol{v}_{1}(L_{B})>0$.
This is a contradiction and therefore $\lambda_{i}>0$ for all $i\in\underline{n}$.

We shall proceed to show that $\lambda_{\text{1}}(L_{B})$ is simple
and $\boldsymbol{v}_{\text{1}}(L_{B})$ is positive. Denote by $L_{B}=\eta I-M$,
where $M\in\mathbb{R}^{n\times n}$ is a non-negative matrix and $\eta$
is the maximum value of the diagonal entries of $L_{B}$. Then $e^{-L_{B}}=e^{M-\eta I}=e^{-\eta I}e^{M}$.
Note that the matrix $M$ is non-negative, therefore, $e^{-L_{B}}$
is a non-negative matrix. In addition, since the network $\mathcal{G}$
is connected, $M$ is irreducible, implying that $e^{-L_{B}}$ is
a non-negative irreducible matrix. Thus, according to Perron--Frobenius
theorem for irreducible non-negative matrices, the eigenvalue $e^{-\lambda_{1}(L_{B})}$
is simple and the corresponding eigenvector $\boldsymbol{v}_{1}(L_{B})$
is positive.
\end{proof}

\section*{Steady-state of SAN on Unsigned Networks}

In the case of unsigned networks, the steady-state of SAN (\ref{eq:unsigned-LF-overall})
is consensus (when the external input is homogeneous) or cluster consensus
(when the external input is heterogeneous) \cite{cao2012distributed}.
Formally, the steady-state of the SAN (\ref{eq:unsigned-LF-overall})
is determined by the convex hull spanned by external inputs, namely,
 $\text{{\bf Co}}(\mathcal{U})=\{\sum_{i=1}^{m}k_{i}\boldsymbol{u}_{i}\thinspace|\thinspace\boldsymbol{u}_{i}\in\mathcal{U},\thinspace k_{i}\ge0,\sum_{i=1}^{m}k_{i}=1\}.$
Then following lemma characterizes the steady-state  of the SAN (\ref{eq:unsigned-LF-overall})
on unsigned networks.
\begin{lem}
\label{lem:steady-state-unsigned-semi}\textbf{ }\cite{lin2005necessary,ren2005consensus,cao2012distributed}\textbf{
}Consider the SAN (\ref{eq:unsigned-LF-overall}) on an unsigned network
$\mathcal{G}=(\mathcal{V},\mathcal{E},W)$. Then, the state of all
agents converge to the convex hull spanned by the external inputs
for arbitrary initial conditions if and only if for each agent $i\in\mathcal{V}$,
there exists at least one external input $\boldsymbol{u}_{l}\in\mathcal{U}$
such that $i$ is reachable from $\boldsymbol{u}_{l}$. Moreover,
the steady-state  of the SAN (\ref{eq:unsigned-LF-overall}) admits,
$\lim_{t\rightarrow\infty}\boldsymbol{x}(t)=(L_{B}^{-1}\otimes I_{d})(B\otimes I_{d})\boldsymbol{u}=(L_{B}^{-1}B)\otimes I_{d}\boldsymbol{u}.$
Specifically, if $\boldsymbol{u}$ is homogeneous, then the SAN (\ref{eq:unsigned-LF-overall})
achieves consensus, namely, $\lim_{t\rightarrow\infty}\boldsymbol{x}(t)=\frac{1}{m}\left({\color{blue}\mathds{1}_{n}}{\color{blue}\mathds{1}_{m}^{\top}}\otimes I_{d}\right)\boldsymbol{u}.$
\end{lem}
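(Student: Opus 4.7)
The plan is to prove the equivalence and the steady-state formula by leveraging the M-matrix structure of $L_B$, together with a simple identity that forces $L_B^{-1}B$ to be row-stochastic.

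For the sufficient direction, I would first extend Lemma \ref{lem:uniqueness-unsigned} to show that whenever every agent is reachable from some external input, $L_B$ is a nonsingular M-matrix. Specifically, $L_B = L + \textbf{diag}(B\mathds{1}_m)$ is weakly diagonally dominant with strict dominance at each leader, and on each connected component the reachability assumption forces at least one leader to exist; a direct adaptation of the positive-definiteness argument in the proof of Lemma \ref{lem:uniqueness-unsigned} (or the standard Gersgorin/Taussig-type M-matrix criterion) then gives $\lambda_i(L_B)>0$ for all $i$. Consequently $-L_B\otimes I_d$ is Hurwitz and the unique equilibrium of the affine ODE $\dot{\boldsymbol{x}}=-(L_B\otimes I_d)\boldsymbol{x}+(B\otimes I_d)\boldsymbol{u}$ is obtained by setting $\dot{\boldsymbol{x}}=0$, yielding $\boldsymbol{x}^\star=(L_B^{-1}B\otimes I_d)\boldsymbol{u}$; trajectories converge to $\boldsymbol{x}^\star$ from arbitrary initial conditions because of the Hurwitz property.

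Next, I would show $\boldsymbol{x}^\star\in\textbf{Co}(\mathcal{U})$ by proving that $L_B^{-1}B$ is row-stochastic. The key identity is
\[
L_B\mathds{1}_n \;=\; L\mathds{1}_n+\textbf{diag}(B\mathds{1}_m)\mathds{1}_n \;=\; B\mathds{1}_m,
\]
since the row sums of $L$ vanish; left-multiplying by $L_B^{-1}$ gives $\mathds{1}_n = L_B^{-1}B\mathds{1}_m$, so every row of $L_B^{-1}B$ sums to one. Nonnegativity of $L_B^{-1}$ is inherited from its being the inverse of a nonsingular M-matrix, and $B$ itself is nonnegative, so the entries of $L_B^{-1}B$ are nonnegative. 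Each row of $L_B^{-1}B$ is therefore a probability vector over $\mathcal{U}$, and each agent's steady-state is a convex combination of the external inputs.

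For necessity I would argue by contraposition. Suppose some agent $i^\star$ is not reachable from any external input, and let $\mathcal{V}_U$ denote the maximal set of such agents. By definition no $j\in\mathcal{V}_U$ is a leader, and because reachability is symmetric in the undirected graph $\mathcal{G}$, no edge connects $\mathcal{V}_U$ to $\mathcal{V}\setminus\mathcal{V}_U$; the dynamics restricted to $\mathcal{V}_U$ therefore reduces to the autonomous consensus law $\dot{\boldsymbol{x}}_{\mathcal{V}_U}=-(L(\mathcal{G}(\mathcal{V}_U))\otimes I_d)\boldsymbol{x}_{\mathcal{V}_U}$, which preserves the average of initial states on each connected component of $\mathcal{G}(\mathcal{V}_U)$. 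Choosing that average to lie outside $\textbf{Co}(\mathcal{U})$ violates convergence. Finally, the homogeneous-input corollary is immediate: when $\boldsymbol{u}_l\equiv\boldsymbol{u}_0$, the identity $\mathds{1}_n=L_B^{-1}B\mathds{1}_m$ turns $\boldsymbol{x}^\star$ into $\mathds{1}_n\otimes\boldsymbol{u}_0=\tfrac{1}{m}(\mathds{1}_n\mathds{1}_m^\top\otimes I_d)\boldsymbol{u}$, i.e. consensus at $\boldsymbol{u}_0$.

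The main obstacle is the M-matrix step: making the invertibility and elementwise-nonnegativity of $L_B^{-1}$ fully rigorous under the reachability hypothesis (and in particular without assuming $\mathcal{G}$ is connected globally) requires care. I would either invoke a standard classification theorem for nonsingular M-matrices (e.g., Berman-Plemmons, condition ``$A_1$'') or reprove it by applying Lemma \ref{lem:uniqueness-unsigned} component-wise after restricting to the subgraph induced by each connected component containing a leader; the remaining components are handled by the necessity argument above.
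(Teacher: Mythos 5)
The paper does not actually prove this lemma: it is imported from the literature (the citations to Lin et al., Ren--Beard, and Cao--Ren), so there is no in-paper argument to compare against. Your proposal is a correct, self-contained derivation of the symmetric case, and the route is the natural one: positive definiteness of $L_B$ under the reachability hypothesis (which, given the paper's standing assumption that $\mathcal{G}$ is connected, reduces to $\mathcal{V}_{\text{leader}}\neq\emptyset$ and is already delivered by Lemma~\ref{lem:uniqueness-unsigned}, so your component-wise care is only needed if connectivity is dropped), the identity $L_B\mathds{1}_n=B\mathds{1}_m$ giving row-stochasticity of $L_B^{-1}B$ via the M-matrix inverse-nonnegativity, and contraposition through average preservation on the unreachable components. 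All of these steps check out, including the observation that a constant average outside the closed set $\text{{\bf Co}}(\mathcal{U})$ obstructs convergence of every agent into that set. One caveat worth flagging: the paper invokes this lemma not only for the undirected SAN but also for the \emph{directed} FSN network $\bar{\mathcal{G}}$ (in the proofs of Theorems~\ref{thm:reachability-FSN-SAN} and~\ref{thm:reachability-FSN-FAN}), and there your two load-bearing tools degrade --- $L_B(\bar{\mathcal{G}})$ is no longer symmetric, so positive definiteness must be replaced by a Perron--Frobenius or row-diagonally-dominant Z-matrix argument for the Hurwitz property and for $L_B^{-1}\geq 0$, and the necessity step can no longer use $\mathds{1}^{\top}L=0$ but must instead use invariance of the convex hull of the unreachable agents' states. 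The cited references cover that generality; your proof covers the lemma as literally stated for the undirected network, which is correct but slightly narrower than the use the paper makes of it.
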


\section*{Proof of Theorem \ref{thm:relative-tempo-SAN}}

In order to show Theorem \ref{thm:relative-tempo-SAN}, we need the
following lemma.
\begin{lem}
\label{lem:relative-tempo-symmetric-matrix} Consider a matrix ordinary
differential equation  $\dot{\boldsymbol{x}}(t)=M\boldsymbol{x}(t),$
where $M\in\mathbb{R}^{n\times n}$ is symmetric and has $n$ linearly
independent eigenvectors and $\boldsymbol{x}(t)=(x_{1}(t),x_{2}(t),\dots,x_{n}(t))^{\top}$.
Denote the ordered eigenvalue of $M$ as $\lambda_{1}\le\lambda_{2}\le\cdots\le\lambda_{n}$
with associated mutually perpendicular  normalized eigenvectors $\boldsymbol{\varphi}_{1},\boldsymbol{\varphi}_{2},\ldots,\boldsymbol{\varphi}_{n}$.
Let $\lambda_{k_{1}}=\lambda_{k_{2}}=\ldots=\lambda_{k_{s}}$ be the
largest nonzero eigenvalue of $M$ with the algebraic multiplicity
$s\in\underline{n}$. Let $\psi(\eta_{1})=(\boldsymbol{e}_{i_{1}},\cdots,\boldsymbol{e}_{i_{s_{1}}})^{\top}\in\mathbb{R}^{s_{1}\times n}$
and $\psi(\eta_{2})=(\boldsymbol{e}_{j_{1}},\cdots,\boldsymbol{e}_{j_{s_{2}}})^{\top}\in\mathbb{R}^{s_{2}\times n}$
where $\eta_{1}=\left\{ i_{1},\ldots,i_{s_{1}}\right\} \subset\underline{n}$
and $\eta_{2}=\left\{ j_{1},\ldots,j_{s_{2}}\right\} \subset\underline{n}$,
respectively. Denote $\boldsymbol{\alpha}_{qi}=\psi(\eta_{q})\boldsymbol{\varphi}_{i}=\psi_{q}\boldsymbol{\varphi}_{i}\in\mathbb{R}^{s_{q}}$,
$S=[\boldsymbol{\varphi}_{1},\boldsymbol{\varphi}_{2},\ldots,\boldsymbol{\varphi}_{n}]\in\mathbb{R}^{n\times n}$
and $\boldsymbol{\beta}=[\beta_{1},\beta_{2},\ldots,\beta_{n}]^{\top}=S^{-1}\boldsymbol{x}(0)\in\mathbb{R}^{n}$
for $q\in\underline{2}$ and $i\in\underline{n}$. Then 
\begin{align}
\lim_{t\rightarrow\infty}\frac{\|\psi_{1}\dot{\boldsymbol{x}}(t)\|}{\|\psi_{2}\dot{\boldsymbol{x}}(t)\|} & =\left(\frac{{\displaystyle \sum_{i,j=1}^{s}}\lambda_{k_{i}}\lambda_{k_{j}}\boldsymbol{\alpha}_{1k_{i}}^{\top}\boldsymbol{\alpha}_{1k_{j}}\beta_{k_{i}}\beta_{k_{j}}}{{\displaystyle \sum_{i,j=1}^{s}}\lambda_{k_{i}}\lambda_{k_{j}}\boldsymbol{\alpha}_{2k_{i}}^{\top}\boldsymbol{\alpha}_{2k_{j}}\beta_{k_{i}}\beta_{k_{j}}}\right)^{\frac{1}{2}}.\label{eq:lemma-ratio-limit}
\end{align}
\end{lem}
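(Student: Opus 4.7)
The plan is to leverage the spectral decomposition of $M$ to derive an explicit series representation of $\dot{\boldsymbol{x}}(t)$, then isolate the dominant asymptotic mode and compute the ratio of norms. Since $M$ is symmetric with $n$ linearly independent eigenvectors, I can take the $\boldsymbol{\varphi}_i$'s to be orthonormal so that $S$ is orthogonal and $\boldsymbol{\beta} = S^{\top}\boldsymbol{x}(0)$. The solution of the ODE is then $\boldsymbol{x}(t) = e^{Mt}\boldsymbol{x}(0) = \sum_{i=1}^{n}\beta_{i}e^{\lambda_{i}t}\boldsymbol{\varphi}_{i}$, and termwise differentiation gives $\dot{\boldsymbol{x}}(t) = \sum_{i=1}^{n}\beta_{i}\lambda_{i}e^{\lambda_{i}t}\boldsymbol{\varphi}_{i}$. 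The crucial observation is that the factor $\lambda_{i}$ annihilates any zero-eigenvalue contribution, so only nonzero eigenvalues drive $\dot{\boldsymbol{x}}(t)$.

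Next, I would apply the projection $\psi_{q}$ and obtain $\psi_{q}\dot{\boldsymbol{x}}(t) = \sum_{i=1}^{n}\beta_{i}\lambda_{i}e^{\lambda_{i}t}\boldsymbol{\alpha}_{qi}$ for $q\in\underline{2}$. Writing $\lambda^{\star} = \lambda_{k_{1}} = \cdots = \lambda_{k_{s}}$ for the largest nonzero eigenvalue (with multiplicity $s$), I would split the sum into the dominant block $\{k_{1},\dots,k_{s}\}$ and the remainder. For every $j\notin\{k_{1},\dots,k_{s}\}$, either $\lambda_{j}=0$ and the term vanishes identically, or $\lambda_{j}<\lambda^{\star}$, in which case $e^{(\lambda_{j}-\lambda^{\star})t}\to 0$. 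Consequently,
\[
e^{-\lambda^{\star}t}\,\psi_{q}\dot{\boldsymbol{x}}(t) \;\longrightarrow\; \sum_{i=1}^{s}\beta_{k_{i}}\lambda_{k_{i}}\boldsymbol{\alpha}_{qk_{i}}\quad\text{as }t\to\infty.
\]

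Finally, taking squared norms and expanding the inner product gives
\[
\Big\|\sum_{i=1}^{s}\beta_{k_{i}}\lambda_{k_{i}}\boldsymbol{\alpha}_{qk_{i}}\Big\|^{2} = \sum_{i,j=1}^{s}\lambda_{k_{i}}\lambda_{k_{j}}\boldsymbol{\alpha}_{qk_{i}}^{\top}\boldsymbol{\alpha}_{qk_{j}}\beta_{k_{i}}\beta_{k_{j}}.
\]
The ratio $\|\psi_{1}\dot{\boldsymbol{x}}(t)\|/\|\psi_{2}\dot{\boldsymbol{x}}(t)\|$ coincides with $\|e^{-\lambda^{\star}t}\psi_{1}\dot{\boldsymbol{x}}(t)\|/\|e^{-\lambda^{\star}t}\psi_{2}\dot{\boldsymbol{x}}(t)\|$, so the common exponential factor cancels and passing to the limit yields (\ref{eq:lemma-ratio-limit}) directly.

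The main technical obstacle is ensuring that the denominator's dominant sum does not vanish, so that the ratio is well-defined in the limit. This amounts to the nondegeneracy of the projection of the dominant eigenspace onto the index set $\eta_{2}$; in every invocation of this lemma in the paper, the relevant dominant eigenvector ($\boldsymbol{v}_{1}(L_{B})$ for SANs or $\boldsymbol{v}_{2}(L)$ for FANs under simplicity) has strictly positive entries by Lemma \ref{lem:uniqueness-unsigned} or generic nonvanishing components, which closes the gap. A secondary bookkeeping issue is confirming that this asymptotic-dominant-mode argument handles both $\lambda^{\star}>0$ (growing solutions) and $\lambda^{\star}<0$ (decaying solutions) uniformly, which it does because only the sign of $\lambda_{j}-\lambda^{\star}$ governs which terms survive.
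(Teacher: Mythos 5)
Your proposal is correct and follows essentially the same route as the paper's proof: diagonalize $M$, write $\dot{\boldsymbol{x}}(t)=SJe^{Jt}S^{-1}\boldsymbol{x}(0)$, observe that the $\lambda_i$ factor kills the zero modes so the largest \emph{nonzero} eigenvalue dominates, and cancel the common factor $e^{\lambda^{\star}t}$ before passing to the limit. In fact you are somewhat more explicit than the paper, which stops at the double-sum expression for $\|\psi_q\dot{\boldsymbol{x}}(t)\|^{2}$ and leaves the limit (and the nonvanishing of the denominator, which you rightly flag as a genericity/positivity condition) as ``straightforward computation.''
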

\begin{proof}
Note that $M=SJS^{-1}$ where $S=[\boldsymbol{\varphi}_{1},\boldsymbol{\varphi}_{2},\ldots,\boldsymbol{\varphi}_{n}]\in\mathbb{R}^{n\times n}$
and $J=\text{{\bf diag}}\left\{ \lambda_{1},\lambda_{2},\ldots,\lambda_{n}\right\} \in\mathbb{R}^{n\times n}$.
According to the solution to the matrix ordinary differential equation
$\dot{\boldsymbol{x}}(t)=M\boldsymbol{x}(t)$, the derivative of $\boldsymbol{x}(t)$
is $\dot{\boldsymbol{x}}(t)=Me^{Mt}\boldsymbol{x}(0)=SJe^{Jt}S^{-1}\boldsymbol{x}(0)$.
Therefore one has,
\begin{align}
\|\psi_{q}\dot{\boldsymbol{x}}(t)\|^{2}= & \left(\dot{\boldsymbol{x}}(t)\right)^{\top}\psi_{q}^{\top}\psi_{q}\dot{\boldsymbol{x}}(t)\nonumber \\
= & \boldsymbol{x}(0)^{\top}(S^{-1})^{\top}e^{Jt}JS^{\top}\psi_{q}^{\top}\psi_{q}SJe^{Jt}S^{-1}\boldsymbol{x}(0)\nonumber \\
= & \sum_{i=1}^{n}\sum_{j=1}^{n}\lambda_{i}\lambda_{j}e^{(\lambda_{i}+\lambda_{j})t}\boldsymbol{\alpha}_{qi}^{\top}\boldsymbol{\alpha}_{qj}\beta_{i}\beta_{j}.
\end{align}
The statement of the lemma now follows from straightforward computation,
that has been omitted for brevity. 
\end{proof}
We are now in the position to prove Theorem \ref{thm:relative-tempo-SAN}.
\begin{proof}
Choose\textbf{ $M=\left(\begin{array}{cc}
-L_{B}\otimes I_{d} & B\otimes I_{d}\\
{\bf 0}_{md\times nd} & {\bf 0}_{md\times md}
\end{array}\right)$} in Lemma\textbf{ }\ref{lem:relative-tempo-symmetric-matrix}. Since
the algebraic multiplicity\textbf{ }of the largest nonzero eigenvalue
of the matrix $L_{B}$\textbf{ }is\textbf{ $1$},\textbf{ }the proof
then follows from a straightforward computation.
\end{proof}

\section*{Proof of Theorem \ref{thm:reachability-FSN-FAN}}

In order to prove Theorem \ref{thm:reachability-FSN-FAN}, we need
the following results.
\begin{lem}
\label{minimal edge bound}Let $\mathcal{G}(\mathcal{S})$ be a connected
induced subgraph of $\mathcal{G}(\mathcal{V})$ whose node set is
$\mathcal{S}\in\mathcal{V}$. Denote $\mathcal{E}^{bound}=\left\{ (i,j)\in\mathcal{E}\thinspace|\thinspace i\in\mathcal{S}\,\text{and}\,j\in\mathcal{V}\setminus\mathcal{S}\right\} $
and $\mathcal{V}_{\mathcal{S}}^{out-bound}=\left\{ j\in\mathcal{V}\setminus\mathcal{S}\thinspace|\thinspace\exists\thinspace i\in\mathcal{S}\,\text{such that}\,(i,j)\in\mathcal{E}^{bound}\right\} $
If for all $i\in\mathcal{S}$ and $j\in\mathcal{V}_{\mathcal{S}}^{out-bound}$,
$[\boldsymbol{v}_{2}]_{i}$ and $[\boldsymbol{v}_{2}]_{j}$ have the
same signs, then there exists an edge $(i,j)\in\mathcal{E}^{bound}$
such that $|[\boldsymbol{v}_{2}]_{i}|>|[\boldsymbol{v}_{2}]_{j}|$
.
\end{lem}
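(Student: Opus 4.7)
The plan is to exploit the Laplacian eigen-equation $L\boldsymbol{v}_{2}=\lambda_{2}(L)\boldsymbol{v}_{2}$ aggregated over the induced subgraph $\mathcal{G}(\mathcal{S})$, so that contributions from edges interior to $\mathcal{S}$ cancel by antisymmetry and only a boundary flux across $\mathcal{E}^{bound}$ survives. Since the underlying network $\mathcal{G}$ is assumed connected throughout the paper, $\lambda_{2}(L)>0$, which will be the quantitative engine of the argument.

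First, without loss of generality I assume $[\boldsymbol{v}_{2}]_{k}>0$ for every $k\in\mathcal{S}\cup\mathcal{V}_{\mathcal{S}}^{out\text{-}bound}$ (otherwise replace $\boldsymbol{v}_{2}$ by $-\boldsymbol{v}_{2}$, which is still a normalized Fiedler vector). Writing out the $i$-th row of the eigen-equation for $i\in\mathcal{S}$ and using $l_{ii}=\sum_{k}w_{ik}$, $l_{ij}=-w_{ij}$, I obtain
\[
\sum_{k\in\mathcal{N}_{i}}w_{ik}\bigl([\boldsymbol{v}_{2}]_{i}-[\boldsymbol{v}_{2}]_{k}\bigr)=\lambda_{2}(L)[\boldsymbol{v}_{2}]_{i}.
\]
Summing this identity over $i\in\mathcal{S}$ and splitting each $\mathcal{N}_{i}$ into $\mathcal{N}_{i}\cap\mathcal{S}$ and $\mathcal{N}_{i}\cap(\mathcal{V}\setminus\mathcal{S})$, the contribution from interior edges telescopes to zero (each interior edge $(i,k)$ with $i,k\in\mathcal{S}$ is counted once as $w_{ik}([\boldsymbol{v}_{2}]_{i}-[\boldsymbol{v}_{2}]_{k})$ and once as $w_{ki}([\boldsymbol{v}_{2}]_{k}-[\boldsymbol{v}_{2}]_{i})$). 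What remains is the boundary flux identity
\[
\sum_{(i,j)\in\mathcal{E}^{bound}}w_{ij}\bigl([\boldsymbol{v}_{2}]_{i}-[\boldsymbol{v}_{2}]_{j}\bigr)=\lambda_{2}(L)\sum_{i\in\mathcal{S}}[\boldsymbol{v}_{2}]_{i}.
\]

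With the sign normalization, the right-hand side is strictly positive because $\lambda_{2}(L)>0$ and every $[\boldsymbol{v}_{2}]_{i}$, $i\in\mathcal{S}$, is positive. Hence the left-hand side is strictly positive; since each $w_{ij}>0$ for $(i,j)\in\mathcal{E}^{bound}$, at least one summand must satisfy $[\boldsymbol{v}_{2}]_{i}-[\boldsymbol{v}_{2}]_{j}>0$, that is $[\boldsymbol{v}_{2}]_{i}>[\boldsymbol{v}_{2}]_{j}>0$. Taking absolute values yields the desired boundary edge $(i,j)\in\mathcal{E}^{bound}$ with $|[\boldsymbol{v}_{2}]_{i}|>|[\boldsymbol{v}_{2}]_{j}|$.

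The proof is essentially a one-shot discrete divergence theorem, so there is no major obstacle once the summation-and-cancellation idea is in place. The only subtlety worth spelling out is verifying that every boundary edge indeed lands in $\mathcal{V}_{\mathcal{S}}^{out\text{-}bound}$ (immediate from the definition) so that the common-sign hypothesis covers all $[\boldsymbol{v}_{2}]_{j}$ appearing on the left-hand side, and confirming that $\lambda_{2}(L)>0$ under the standing connectedness assumption on $\mathcal{G}$, so that the right-hand side cannot vanish.
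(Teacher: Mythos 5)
Your proof is correct and follows essentially the same route as the paper's: summing the eigen-equation rows over $\mathcal{S}$ so that interior contributions cancel, arriving at the boundary identity $\sum_{(i,j)\in\mathcal{E}^{bound}}w_{ij}([\boldsymbol{v}_{2}]_{i}-[\boldsymbol{v}_{2}]_{j})=\lambda_{2}\sum_{i\in\mathcal{S}}[\boldsymbol{v}_{2}]_{i}$, and invoking $\lambda_{2}>0$. The only cosmetic difference is that you argue directly from the strict positivity of the right-hand side, whereas the paper phrases the same step as a contradiction.
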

\begin{proof}
Assume that $[\boldsymbol{v}_{2}]_{i}$ and $[\boldsymbol{v}_{2}]_{j}$
are positive for all $i\in\mathcal{S}$ and $j\in\mathcal{V}_{\mathcal{S}}^{out-bound}$
and there does not exist an edge $(i,j)\in\mathcal{E}^{bound}$ such
that $[\boldsymbol{v}_{2}]_{i}>[\boldsymbol{v}_{2}]_{j}$. Let $\lambda_{2}$
be the second smallest eigenvalue of $L(\mathcal{G})$ with the corresponding
eigenvector $\boldsymbol{v}_{2}=([\boldsymbol{v}_{2}]_{1},[\boldsymbol{v}_{2}]_{2},\cdots,[\boldsymbol{v}_{2}]_{n})^{\top}\in\mathbb{R}^{n}$.
Denote the agents in $\mathcal{S}$ as $\{i_{1},i_{2},\ldots,i_{p}\}$;
then examining all the $i_{k}$th row in eigen-equation $L(\mathcal{G})\boldsymbol{v}_{2}=\lambda_{2}\boldsymbol{v}_{2}$,
yields,

\begin{equation}
\left({\displaystyle \sum_{j\in\mathcal{N}_{i_{k}}}}w_{i_{k}j}\right)[\boldsymbol{v}_{2}]_{i_{k}}-{\displaystyle \sum_{j\in\mathcal{N}_{i_{k}}}}w_{i_{k}j}[\boldsymbol{v}_{2}]_{j}=\lambda_{2}[\boldsymbol{v}_{2}]_{i_{k}},
\end{equation}
for all $k\in\underline{p}.$ Thereby,
\begin{equation}
\sum_{(i,j)\in\mathcal{E}^{bound}}w_{ij}([\boldsymbol{v}_{2}]_{i}-[\boldsymbol{v}_{2}]_{j})=\lambda_{2}\left({\displaystyle \sum_{i\in\mathcal{S}}}[\boldsymbol{v}_{2}]_{i}\right).
\end{equation}
Since $[\boldsymbol{v}_{2}]_{i}-[\boldsymbol{v}_{2}]_{j}\leq0,\forall(i,j)\in\mathcal{E}^{bound},$
one can conclude that $\lambda_{2}\leq0$, which is a contradiction
given the fact that $\lambda_{2}>0$ and there exists an edge $(i,j)\in\mathcal{E}^{bound}$
such that $[\boldsymbol{v}_{2}]_{i}>[\boldsymbol{v}_{2}]_{j}$.

For the case that $[\boldsymbol{v}_{2}]_{i}$ and $[\boldsymbol{v}_{2}]_{j}$
are negative for all $i\in\mathcal{S}$ and $j\in\mathcal{V}_{\mathcal{S}}^{out-bound}$,
the proof is analogous.
\end{proof}
\begin{lem}
\cite[Theorem 3.3]{Fiedler1975} \label{lem:Connectivity}Let $\mathcal{G}=(\mathcal{V},\mathcal{E},W)$
be an unsigned network with associated Laplacian matrix $L$. Let
$\boldsymbol{v}_{\text{2}}$ denote the eigenvector corresponding
to the second smallest eigenvalue of $L$. For any $r_{1}\geq0$ and
$r_{2}\leq0$, let $M(r_{1})=\left\{ i\in\mathcal{V}\thinspace|\thinspace[\boldsymbol{v}_{2}]_{i}+r_{1}\geq0\right\} $
and $M(r_{2})=\left\{ i\in\mathcal{V}\thinspace|\thinspace[\boldsymbol{v}_{2}]_{i}+r_{2}\leq0\right\} .$
Then, the subgraph $\mathcal{G}(r_{1})$ ($\mathcal{G}(r_{2})$) induced
by $\mathcal{G}$ on $M(r_{1})$ ($M(r_{2})$) is connected.
\end{lem}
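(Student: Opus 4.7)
The plan is to prove the statement by contradiction using the Courant-Fischer variational characterization
\[
\lambda_2(L)=\min_{x\perp\mathds{1},\,x\neq 0}\frac{x^{\top}Lx}{x^{\top}x}.
\]
Replacing $\boldsymbol{v}_2$ by $-\boldsymbol{v}_2$ interchanges the two claims, so I will establish only the statement for $M(r_1)$ with $r_1\ge 0$. Suppose for contradiction that $\mathcal{G}(r_1)$ is disconnected. Grouping its connected components, I can partition $M(r_1)=N_1\sqcup N_2$ with both $N_q$ non-empty; by construction, there is no edge of $\mathcal{G}$ between $N_1$ and $N_2$ (otherwise that edge would lie in the induced subgraph and merge the two parts).

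Next I construct an explicit test vector. Let $\boldsymbol{u}=\boldsymbol{v}_2+r_1\mathds{1}$, which is $\ge 0$ on $M(r_1)$ and $<0$ on $\mathcal{V}\setminus M(r_1)$. Define $\boldsymbol{w}^{(q)}$ by $[\boldsymbol{w}^{(q)}]_i=u_i$ for $i\in N_q$ and zero otherwise, and let $c_q=\mathds{1}^{\top}\boldsymbol{w}^{(q)}\ge 0$. A key preliminary step is to verify that $c_q>0$ for each $q$: if $c_q=0$, then $v_{2,i}=-r_1$ on $N_q$, and connectedness of $\mathcal{G}$ forces some $i\in N_q$ to have a neighbour $k\in\mathcal{V}\setminus M(r_1)$, where $v_{2,k}<-r_1$. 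Evaluating the eigen-equation $(L\boldsymbol{v}_2)_i=\lambda_2 v_{2,i}=-\lambda_2 r_1$ and noting that internal contributions cancel while each boundary contribution $w_{ik}(v_{2,i}-v_{2,k})$ is strictly positive yields $-\lambda_2 r_1>0$, a contradiction since $\lambda_2>0$ and $r_1\ge 0$. With $c_1,c_2>0$, form $\boldsymbol{w}=c_2\boldsymbol{w}^{(1)}-c_1\boldsymbol{w}^{(2)}$, which is nonzero and orthogonal to $\mathds{1}$.

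The absence of edges between $N_1$ and $N_2$ gives $\boldsymbol{w}^{(1)\top}L\boldsymbol{w}^{(2)}=0$ (acting by $L$ on a vector supported on $N_2$ produces a vector vanishing on $N_1$, since $N_1$ has no neighbours in $N_2$), so $\boldsymbol{w}^{\top}L\boldsymbol{w}=c_2^2\boldsymbol{w}^{(1)\top}L\boldsymbol{w}^{(1)}+c_1^2\boldsymbol{w}^{(2)\top}L\boldsymbol{w}^{(2)}$ and $\|\boldsymbol{w}\|^2=c_2^2\|\boldsymbol{w}^{(1)}\|^2+c_1^2\|\boldsymbol{w}^{(2)}\|^2$. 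To estimate $\boldsymbol{w}^{(q)\top}L\boldsymbol{w}^{(q)}$, I will multiply the identity $\sum_k w_{ik}(u_i-u_k)=\lambda_2(u_i-r_1)$ by $u_i$ and sum over $i\in N_q$. Symmetrising internal edges and using that boundary edges from $N_q$ only reach $\mathcal{V}\setminus M(r_1)$ (not the other $N_l$), this yields
\[
\boldsymbol{w}^{(q)\top}L\boldsymbol{w}^{(q)}=\lambda_2\bigl(\|\boldsymbol{w}^{(q)}\|^{2}-r_1 c_q\bigr)+\!\!\!\sum_{(i,k):\,i\in N_q,\,k\notin M(r_1)}\!\!\!w_{ik}\,u_i u_k,
\]
and the boundary sum is $\le 0$ since $u_i\ge 0$ on $N_q$ and $u_k<0$ outside $M(r_1)$. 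Summing the two $q$'s produces $\boldsymbol{w}^{\top}L\boldsymbol{w}\le\lambda_2\|\boldsymbol{w}\|^2-\lambda_2 r_1\,c_1 c_2(c_1+c_2)$, so the Rayleigh quotient satisfies $R(\boldsymbol{w})\le\lambda_2$, with strict inequality whenever $r_1>0$; this already contradicts Courant-Fischer in that regime.

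The main obstacle is the borderline case $r_1=0$, where the $-\lambda_2 r_1 c_q$ term disappears and the inequality $R(\boldsymbol{w})\le\lambda_2$ becomes weak. I would handle this by showing that equality forces $\boldsymbol{w}$ to be an eigenvector of $L$ for $\lambda_2$; this requires every boundary edge $(i,k)$ with $i\in N_q$, $k\notin M(0)$ to satisfy $u_i u_k=0$, hence $v_{2,i}=0$ on every node of $N_q$ that touches $\mathcal{V}\setminus M(0)$. Combined with connectedness of $\mathcal{G}$ and the eigen-equation evaluated at such interior zero nodes (sweeping inward from the boundary), this degenerates $\boldsymbol{w}$ against the structure of $\boldsymbol{v}_2$ and produces a second Fiedler-type eigenvector with support strictly inside one of the $N_q$, contradicting either the simplicity assumption implicit in the statement or Fiedler's nodal-domain bound. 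Once this edge case is closed, the contradiction completes the proof, and the statement for $M(r_2)$, $r_2\le 0$, follows by the sign-reversal symmetry noted at the outset.
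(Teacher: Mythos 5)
Your variational argument is clean and complete for $r_{1}>0$: the construction of $\boldsymbol{u}=\boldsymbol{v}_{2}+r_{1}\mathds{1}$, the verification that $c_{q}>0$ via the eigen-equation at a node of $N_{q}$ with a neighbour outside $M(r_{1})$, the identity $\boldsymbol{w}^{(q)\top}L\boldsymbol{w}^{(q)}=\lambda_{2}\bigl(\|\boldsymbol{w}^{(q)}\|^{2}-r_{1}c_{q}\bigr)+\sum w_{ik}u_{i}u_{k}$, and the strict Rayleigh-quotient contradiction all check out. (Note the paper itself offers no proof of this lemma; it cites Fiedler, whose own route goes through Perron--Frobenius theory for principal submatrices of irreducible nonnegative matrices plus interlacing, so your test-vector approach is a genuinely different and more elementary path.) The genuine gap is exactly where you flag it: the case $r_{1}=0$, which is the case the lemma is most often invoked for (connectivity of $\{i:[\boldsymbol{v}_{2}]_{i}\geq0\}$), is not closed. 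Your sketch appeals to ``the simplicity assumption implicit in the statement'' --- but the statement carries no such assumption, and Fiedler's theorem (and the paper's use of it in the reachability proof) must hold for any $\lambda_{2}$-eigenvector, including in the degenerate case --- or alternatively to ``Fiedler's nodal-domain bound,'' which is a result of comparable depth to the one being proved, so invoking it is essentially circular. The ``sweeping inward from the boundary'' step that is supposed to produce a second eigenvector supported strictly inside one $N_{q}$ is never actually carried out, and it is not clear it can be: discrete eigenvectors enjoy no unique-continuation principle, so vanishing on the boundary nodes of $N_{q}$ propagates inward only under conditions you have not established.

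The good news is that the borderline case closes with the machinery you already built, without any multiplicity hypothesis. With $r_{1}=0$, equality $R(\boldsymbol{w})=\lambda_{2}$ forces $\boldsymbol{w}=c_{2}\boldsymbol{w}^{(1)}-c_{1}\boldsymbol{w}^{(2)}$ to lie in the $\lambda_{2}$-eigenspace, as you say. Now consider $\boldsymbol{y}=c_{2}\boldsymbol{v}_{2}-\boldsymbol{w}$, also a $\lambda_{2}$-eigenvector. By construction $[\boldsymbol{y}]_{i}=0$ for every $i\in N_{1}$, $[\boldsymbol{y}]_{i}=(c_{1}+c_{2})[\boldsymbol{v}_{2}]_{i}\geq0$ on $N_{2}$, and $[\boldsymbol{y}]_{k}=c_{2}[\boldsymbol{v}_{2}]_{k}<0$ on $\mathcal{V}\setminus M(0)$, which is non-empty (otherwise $M(0)=\mathcal{V}$ and the induced subgraph is $\mathcal{G}$ itself, connected). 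Since $\mathcal{G}$ is connected and $N_{1}\neq\mathcal{V}$, some $i\in N_{1}$ has a neighbour outside $N_{1}$, necessarily in $\mathcal{V}\setminus M(0)$ because no edges join $N_{1}$ to $N_{2}$. Evaluating the eigen-equation for $\boldsymbol{y}$ at this $i$ gives
\begin{equation*}
0=\lambda_{2}[\boldsymbol{y}]_{i}=\sum_{k}w_{ik}\bigl([\boldsymbol{y}]_{i}-[\boldsymbol{y}]_{k}\bigr)=-\sum_{k\notin M(0)}w_{ik}[\boldsymbol{y}]_{k}>0,
\end{equation*}
a contradiction --- this is precisely your own $c_{q}>0$ argument re-applied to the new eigenvector. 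Splice this in place of the final sketch and the proof is complete; the $M(r_{2})$ statement then follows by the sign-reversal symmetry you noted.
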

We are now ready to prove Theorem \ref{thm:reachability-FSN-FAN}.
\begin{proof}
According to Lemma \ref{lem:steady-state-unsigned-semi}, consensus
can be examined as a reachability problem; we shall therefore consider
the reachability aspect first. According to Lemma \ref{lem:monotonicity-Fiedler},
we need to discuss two cases. 

Case 1: The network $\mathcal{G}$ contains a core block $B_{0}$,
we shall discuss the reachability of the positive/negative blocks
from $B_{0}$. Without the loss of generality, let $B_{i}$ be a positive
block, then two possible situations may occur in terms of the location
of $B_{i}$. 

First, consider the case where $B_{i}$ is a positive block connecting
with the core block $B_{0}$ directly through the cut node $i^{*}$.
We shall prove that for any node $j\in B_{i}$, $[\boldsymbol{v}_{2}]_{j}\geq[\boldsymbol{v}_{2}]_{i^{*}}$.
By contradiction, assume that there exists a node $j_{0}\in B_{i}$
satisfying $[\boldsymbol{v}_{2}]_{j_{0}}<[\boldsymbol{v}_{2}]_{i^{*}}$.
As $i^{*}$ is a cut node, according to Lemma \ref{lem:Connectivity},
for any node $p\in B_{0}$, one has $[\boldsymbol{v}_{2}]_{p}\geq[\boldsymbol{v}_{2}]_{i^{*}}$;
this is a contradiction (with the property of $B_{0}$). Therefore,
one has $[\boldsymbol{v}_{2}]_{j}\geq[\boldsymbol{v}_{2}]_{i^{*}}$
for any node $j\in B_{i}$. 

Second, consider the case where  $B_{i}$ is a positive block such
that all its neighboring blocks are positive. Denote by $\mathcal{N}_{i}^{B}=\left\{ B_{i_{1}},B_{i_{2}},\ldots,B_{i_{q_{i}}}\right\} $
as the neighboring blocks of $B_{i}$ with the corresponding cut nodes
$\left\{ i_{1},i_{2},\ldots,i_{q_{i}}\right\} $. We denote  $i^{*}=\underset{k\in\underline{q_{i}}}{\text{{\bf argmin}}}[\boldsymbol{v}_{2}]_{i_{k}},$
which connects blocks $B_{i}$ and $B_{i^{*}}$. Then we show that
for any node $j\in B_{i}$, $[\boldsymbol{v}_{2}]_{j}\geq[\boldsymbol{v}_{2}]_{i^{*}}$.
By contradiction, assume that there exists $j_{0}\in B_{i}$ satisfying
$[\boldsymbol{v}_{2}]_{j_{0}}<[\boldsymbol{v}_{2}]_{i^{*}}$; then
according to Lemma \ref{lem:Connectivity}, for any node $p\in B_{i^{*}}$,
one has $[\boldsymbol{v}_{2}]_{p}\ge[\boldsymbol{v}_{2}]_{i^{*}}$.
Otherwise, assume that there exists a node $p_{0}\in B_{i^{*}}$ such
that $[\boldsymbol{v}_{2}]_{p_{0}}<[\boldsymbol{v}_{2}]_{i^{*}}$,
and choose $r=\text{{\bf min}}\{-[\boldsymbol{v}_{2}]_{j_{0}},-[\boldsymbol{v}_{2}]_{p_{0}}\}.$
Note that $i^{*}$ is a cut node; then the subgraph $\mathcal{G}(r)$
induced by $\mathcal{G}$ on $M(r)$ is disconnected. Therefore, one
has $[\boldsymbol{v}_{2}]_{p}\ge[\boldsymbol{v}_{2}]_{i^{*}}$ for
any node $p\in B_{i^{*}}$. In addition, for any $k\in\underline{q_{i}}$
and $k\neq i^{*}$, one has $[\boldsymbol{v}_{2}]_{q}\geq[\boldsymbol{v}_{2}]_{i_{k}}$
for any node $q\in B_{i_{k}}$. This is due to having $i_{k}$ as
a cut node and $[\boldsymbol{v}_{2}]_{i_{k}}>[\boldsymbol{v}_{2}]_{i^{*}}$.
Therefore, for any $k\in\underline{q_{i}}$, one has $[\boldsymbol{v}_{2}]_{q}\geq[\boldsymbol{v}_{2}]_{i_{k}}$
for any node $q\in B_{ik}$. However, in view of Lemma \ref{minimal edge bound},
this is a contradiction. Therefore, for any node $j\in B_{i}$, one
has $[\boldsymbol{v}_{2}]_{j}\geq[\boldsymbol{v}_{2}]_{i^{*}}$.

Based on the above two scenarios, in the following, let $B_{i}$ be
an arbitrary positive block and denote by $\mathcal{N}_{i}^{B}=\left\{ B_{i_{1}},B_{i_{2}},\ldots,B_{i_{q_{i}}}\right\} $
as the neighboring blocks of $B_{i}$ with the corresponding cut nodes
$\left\{ i_{1},i_{2},\ldots,i_{q_{i}}\right\} $. Let $i^{*}=\underset{k\in\underline{q_{i}}}{\text{{\bf argmin}}}[\boldsymbol{v}_{2}]_{i_{k}},$
connecting blocks $B_{i}$ and $B_{i^{*}}$. We shall prove that any
node $j\in B_{i}$ can be reached by the cut node $i^{*}$ in the
FSN network $\bar{\mathcal{G}}$. Let $\bar{\mathcal{G}}(\left\{ i_{1},\cdots,i_{s_{0}}\right\} )$
be weakly connected, not reachable from the cut node $i^{*}$ in $\bar{\mathcal{G}}$,
where $s_{0}\in\mathbb{Z}_{+}$. Denote $\mathcal{E}^{bound}=\left\{ (i,j)\in\mathcal{E}\thinspace|\thinspace i\in\left\{ i_{1},\cdots,i_{s_{0}}\right\} ,j\in\mathcal{V}\setminus\left\{ i_{1},\cdots,i_{s_{0}}\right\} \right\} $,
then according to Definition \ref{def:fsn-tempo-network-autonomous},
for any edge $(i,j)\in\mathcal{E}^{bound}$, one has $[\boldsymbol{v}_{2}]_{i}\leq[\boldsymbol{v}_{2}]_{j}$,
which is a contradiction in view of Lemma \ref{minimal edge bound}.
Hence, any node $j\in B_{i}$, can be reached from the cut node $i^{*}$
in $\bar{\mathcal{G}}$.

Case 2: For the case that the network $\mathcal{G}$ only contains
a core node $i_{0}$; let nodes $j$ and $k$ be arbitrary positive
and negative nodes connecting the core node directly, respectively.
By Definition \ref{def:fsn-tempo-network-autonomous}, nodes $j$
and $k$ are reachable from the core node $i_{0}$. For the remaining
 nodes that do not connect the core node directly, the proof  is then
similar to the Case 1. 

Consequently, consensus can be guaranteed for aforementioned two cases
according to Lemma \ref{lem:steady-state-unsigned-semi}.

Secondly, we shall prove that the steady-state of agents in $\bar{\mathcal{G}}$
is equal to either the average of initial states of the agents in
the core block and zero blocks (Case 1) or that of the agents in zero
blocks and the core node (Case 2). To simplify our presentation, we
employ a general description for both cases, namely, that there are
$m$ agents in the union of either core block and zero blocks (Case
1) or core node and zero blocks (Case 2). Then, denote by $\mathcal{L}=\{1,2,\ldots,m\}$
and $\mathcal{F}=\{m+1,m+2,\ldots,n\}$, where $m\in\mathbb{Z}_{+}$
and $m\leq n$. We can represent the Laplacian matrix of $\mathcal{G}$
as, $L=\left[\begin{array}{cc}
L_{11} & 0_{m\times(n-m)}\\
L_{21} & L_{22}
\end{array}\right],$ where $L_{11}\in\mathbb{R}^{m\times m}$, $L_{22}\in\mathbb{R}^{(n-m)\times(n-m)}$
and $L_{21}\in\mathbb{R}^{(n-m)\times m}$, and $L_{22}$ is nonsingular.
Due to $L\mathds{1}_{n}=\boldsymbol{0}$, then one has $L_{21}\mathds{1}_{m}+L_{22}\mathds{1}_{n-m}=\boldsymbol{0}$.
Therefore, $L_{22}^{-1}L_{21}\mathds{1}_{m}=-\mathds{1}_{n-m}$. Denote
by $\boldsymbol{x}_{\mathcal{L}}=(\boldsymbol{x}_{1}^{\top}(t),\dots,\boldsymbol{x}_{m}^{\top}(t))^{\top}\in\mathbb{R}^{md}$
and $\boldsymbol{x}_{\mathcal{F}}=(\boldsymbol{x}_{m+1}^{\top}(t),\dots,\boldsymbol{x}_{n}^{\top}(t))^{\top}\in\mathbb{R}^{(n-m)d}$.
Thereby,
\begin{equation}
\dot{\boldsymbol{x}}_{\mathcal{L}}=-(L_{11}\otimes I_{d})\boldsymbol{x}_{\mathcal{L}},
\end{equation}
and
\begin{equation}
\dot{\boldsymbol{x}}_{\mathcal{F}}=-(L_{22}\otimes I_{d})\boldsymbol{x}_{\mathcal{F}}-(L_{21}\otimes I_{d})\boldsymbol{x}_{\mathcal{L}}.
\end{equation}

Let $X=\boldsymbol{x}_{\mathcal{F}}+(L_{22}^{-1}L_{21}\otimes I_{d})\boldsymbol{x}_{\mathcal{L}}.$
Then,
\begin{equation}
\dot{X}=-(L_{22}\otimes I_{d})X-(L_{22}^{-1}L_{21}L_{11}\otimes I_{d})\boldsymbol{x}_{\mathcal{L}}.
\end{equation}
According to the input-to-state stability theory, one has $\underset{t\rightarrow\infty}{\text{{\bf lim}}}X=\boldsymbol{0}$.
Therefore, 
\begin{align}
\underset{t\rightarrow\infty}{\text{{\bf lim}}}\boldsymbol{x}_{\mathcal{F}} & =-(L_{22}^{-1}L_{21}\otimes I_{d})\underset{t\rightarrow\infty}{\text{{\bf lim}}}\boldsymbol{x}_{\mathcal{L}}\\
 & =-\frac{1}{m}(L_{22}^{-1}L_{21}\mathds{1}_{m}\mathds{1}_{m}^{\top}\otimes I_{d})\boldsymbol{x}_{\mathcal{L}}(0)\\
 & =\frac{1}{m}(\mathds{1}_{n-m}\mathds{1}_{m}^{\top}\otimes I_{d})\boldsymbol{x}_{\mathcal{L}}(0),
\end{align}
which concludes the proof.
\end{proof}

\section*{Proof of Proposition \ref{thm:convergence-rate-FSN-FAN}}
\begin{proof}
Let $E=L(\bar{\mathcal{G}})-L(\mathcal{G})$. Note that $L(\bar{\mathcal{G}})\bar{\boldsymbol{v}}_{2}=\lambda_{2}(L(\bar{\mathcal{G}}))\bar{\boldsymbol{v}}_{2}$
and $\bar{\boldsymbol{v}}_{2}^{\top}L^{\top}(\bar{\mathcal{G}})=\lambda_{2}(L(\bar{\mathcal{G}}))\bar{\boldsymbol{v}}_{2}^{\top}.$
Hence, $\bar{\boldsymbol{v}}_{2}^{\top}L(\bar{\mathcal{G}})\bar{\boldsymbol{v}}_{2}=\lambda_{2}(L(\bar{\mathcal{G}}))\bar{\boldsymbol{v}}_{2}^{\top}\bar{\boldsymbol{v}}_{2},$
and $\bar{\boldsymbol{v}}_{2}^{T}L^{T}(\bar{\mathcal{G}})\bar{\boldsymbol{v}}_{2}=\lambda_{2}(L(\bar{\mathcal{G}}))\bar{\boldsymbol{v}}_{2}^{T}\bar{\boldsymbol{v}}_{2}.$
Therefore,
\begin{equation}
\bar{\boldsymbol{v}}_{2}^{\top}(L(\bar{\mathcal{G}})+L^{\top}(\bar{\mathcal{G}}))\bar{\boldsymbol{v}}_{2}=2\lambda_{2}(L(\bar{\mathcal{G}}))\bar{\boldsymbol{v}}_{2}^{\top}\bar{\boldsymbol{v}}_{2},
\end{equation}
and
\begin{align}
\lambda_{2}(L(\bar{\mathcal{G}}))= & \frac{\bar{\boldsymbol{v}}_{2}^{\top}(L(\bar{\mathcal{G}})+L^{\top}(\bar{\mathcal{G}}))\bar{\boldsymbol{v}}_{2}}{2\bar{\boldsymbol{v}}_{2}^{\top}\bar{\boldsymbol{v}}_{2}}\\
= & \frac{\bar{\boldsymbol{v}}_{2}^{\top}(L(\mathcal{G})+L^{\top}(\mathcal{G}))\bar{\boldsymbol{v}}_{2}}{2\bar{\boldsymbol{v}}_{2}^{\top}\bar{\boldsymbol{v}}_{2}}+\frac{\bar{\boldsymbol{v}}_{2}^{\top}(E+E^{\top})\bar{\boldsymbol{v}}_{2}}{2\bar{\boldsymbol{v}}_{2}^{\top}\bar{\boldsymbol{v}}_{2}}\\
= & \frac{\bar{\boldsymbol{v}}_{2}^{\top}L(\mathcal{G})\bar{\boldsymbol{v}}_{2}}{\bar{\boldsymbol{v}}_{2}^{\top}\bar{\boldsymbol{v}}_{2}}+\frac{\bar{\boldsymbol{v}}_{2}^{\top}(E+E^{\top})\bar{\boldsymbol{v}}_{2}}{2\bar{\boldsymbol{v}}_{2}^{\top}\bar{\boldsymbol{v}}_{2}}.
\end{align}
By applying Rayleigh theorem \cite[Theorem 4.2.2,  p.235]{horn2012matrix},
one has,
\begin{equation}
\lambda_{2}(L(\bar{\mathcal{G}}))\geq\lambda_{2}(L(\mathcal{G}))+\frac{\bar{\boldsymbol{v}}_{2}^{\top}(E+E^{\top})\bar{\boldsymbol{v}}_{2}}{2\bar{\boldsymbol{v}}_{2}^{\top}\bar{\boldsymbol{v}}_{2}}.
\end{equation}
Moreover, according to the expansion,
\begin{align}
\frac{\bar{\boldsymbol{v}}_{2}^{\top}(E+E^{\top})\bar{\boldsymbol{v}}_{2}}{2\bar{\boldsymbol{v}}_{2}^{\top}\bar{\boldsymbol{v}}_{2}} & =\frac{1}{\bar{\boldsymbol{v}}_{2}^{\top}\bar{\boldsymbol{v}}_{2}}{\displaystyle \sum_{(i,j)\in\mathcal{E}\setminus\mathcal{\bar{E}}}}[\bar{\boldsymbol{v}}_{2}]_{i}\left([\bar{\boldsymbol{v}}_{2}]_{j}-[\bar{\boldsymbol{v}}_{2}]_{i}\right),
\end{align}
and $\bar{\boldsymbol{v}}_{2}^{\top}\bar{\boldsymbol{v}}_{2}=1$,
one can conclude that,
\begin{equation}
\lambda_{2}(L(\bar{\mathcal{G}}))\geq\lambda_{2}(L(\mathcal{G}))+{\displaystyle \sum_{(i,j)\in\mathcal{E}\setminus\mathcal{\bar{E}}}}[\bar{\boldsymbol{v}}_{2}]_{i}\left([\bar{\boldsymbol{v}}_{2}]_{j}-[\bar{\boldsymbol{v}}_{2}]_{i}\right).
\end{equation}
\end{proof}

\section*{Proof of Theorem \ref{thm:FAN-Tree-convergence-rate}}
\begin{lem}
\cite{de2007old}\label{eigenvalue bound of tree graph} Let $\mathcal{T}$
be a non-star tree network with $n\geq4$ nodes. Then $\lambda_{2}(L(\mathcal{T}))<0.59$.
\end{lem}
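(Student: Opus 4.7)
The plan is to use the Rayleigh--Ritz characterization
\[
\lambda_2(L(\mathcal{T}))=\min_{\substack{x\neq 0,\ x\perp\mathds{1}_n}}\frac{\sum_{(i,j)\in\mathcal{E}(\mathcal{T})}(x_i-x_j)^2}{\sum_i x_i^2}
\]
together with an explicit test vector that lifts the Fiedler vector of $P_4$ to all of $\mathcal{T}$, driving the Rayleigh quotient below $2-\sqrt{2}\approx 0.586<0.59$. The key structural observation is that any tree of diameter $\le 2$ is necessarily a star, so a non-star tree $\mathcal{T}$ on $n\ge 4$ vertices must contain a sub-path $v_0v_1v_2v_3$ of length three.

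Removing the three edges of this sub-path splits $\mathcal{T}$ into four subtrees $T_0,T_1,T_2,T_3$ with $v_i\in T_i$ and sizes $n_i:=|T_i|\ge 1$. I would take the (unnormalized) Fiedler vector $f=(1+\sqrt{2},\,1,\,-1,\,-(1+\sqrt{2}))$ of $P_4$, which satisfies $f^\top L(P_4)f=8$, $\|f\|^2=8+4\sqrt{2}$, and $\lambda_2(L(P_4))=2-\sqrt{2}$, and define $x_u:=f_i$ whenever $u\in T_i$. Every edge of $\mathcal{T}$ that is not on the sub-path is internal to some $T_i$, so only the three path edges contribute to the Dirichlet energy, giving $x^\top L(\mathcal{T})x=8$ exactly. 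To enforce orthogonality to $\mathds{1}_n$, replace $x$ by $x':=x-\bar x\,\mathds{1}_n$ with $\bar x=\tfrac{1}{n}\sum_u x_u$; since $L(\mathcal{T})\mathds{1}_n=0$, the numerator is unchanged while the denominator becomes $\|x'\|^2=\|x\|^2-n\bar x^2$.

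Setting $\alpha:=1+\sqrt{2}$, $s:=n_0+n_3$, $t:=n_1+n_2$, $p:=n_0-n_3$, $q:=n_1-n_2$, we obtain $\|x\|^2=s\alpha^2+t$ and $n\bar x=p\alpha+q$. Because every $n_i\ge 1$, $|p|\le s-2=:a$ and $|q|\le t-2=:b$; the worst case is $(p,q)=(\pm a,\pm b)$ with a common sign, and the required inequality $\|x'\|^2\ge \|f\|^2=8+4\sqrt{2}$ reduces to
\[
(a\alpha^2+b)(a+b+4)-(a\alpha+b)^2 \;=\; ab(\alpha-1)^2+4a\alpha^2+4b \;\ge\; 0,
\]
which is manifest since $a,b\ge 0$. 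Combined with $x'^\top L(\mathcal{T})x'=8$, this yields $\lambda_2(L(\mathcal{T}))\le 8/(8+4\sqrt{2})=2-\sqrt{2}<0.59$, as required. The main obstacle is precisely this final algebraic step: orthogonalizing against $\mathds{1}_n$ deducts $n\bar x^2$ from the denominator, and the plan works only because the surplus $\|x\|^2-\|f\|^2$ produced by subtrees of size exceeding one fully absorbs that loss; the displayed factorization is what makes the absorption uniform in $(n_0,n_1,n_2,n_3)$ and keeps the bound strictly below $0.59$.
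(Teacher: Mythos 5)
Your proof is correct, and it is necessarily a different route from the paper's, because the paper offers no proof at all: the lemma is imported wholesale from the survey \cite{de2007old} on algebraic connectivity. Your argument is a clean, self-contained replacement. I checked the pieces: a non-star tree on $n\geq4$ vertices indeed has diameter at least $3$, hence contains $P_{4}$ as a sub-path; deleting the three path edges leaves exactly four components $T_{0},\dots,T_{3}$, so the lifted vector $x$ (constant on each $T_{i}$, with values from the $P_{4}$ Fiedler vector $(1+\sqrt{2},1,-1,-(1+\sqrt{2}))$) has Dirichlet energy exactly $8$; and the mean-subtraction bookkeeping is right, since
\[
(a\alpha^{2}+b)(a+b+4)-(a\alpha+b)^{2}=ab(\alpha-1)^{2}+4a\alpha^{2}+4b\geq0
\]
checks out by direct expansion, with $\|f\|^{2}=2\alpha^{2}+2=8+4\sqrt{2}$ and $n=a+b+4$ as you use them. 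Two cosmetic remarks: you do not need attainability of the ``worst case'' $(p,q)=(\pm a,\pm b)$ (parity of $n_{i}$ may rule it out) --- only the bound $|p\alpha+q|\leq a\alpha+b$, which is what you actually use; and $x'\neq0$ should be noted, though it is immediate from $\|x'\|^{2}\geq8+4\sqrt{2}>0$. What your approach buys over the citation is twofold: it is elementary (Rayleigh--Ritz plus a tree decomposition, no machinery from the survey), and it proves the \emph{sharp} bound $\lambda_{2}(L(\mathcal{T}))\leq2-\sqrt{2}\approx0.5858$, attained by $P_{4}$ itself, which is strictly stronger than the $0.59$ the paper needs for Theorem \ref{thm:FAN-Tree-convergence-rate}. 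The one caveat worth stating explicitly if this were inserted into the paper: the argument (like the cited lemma) is for unit edge weights, consistent with the paper's implicit normalization in the tree results, but it does not cover arbitrarily weighted trees, where no such absolute constant can hold.
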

We now prove Theorem \ref{thm:FAN-Tree-convergence-rate}.
\begin{proof}
Note that every node is a cut node in a tree network. Then in light
of Lemma \ref{lem:monotonicity-Fiedler}, there are two cases to consider.
For the case that there exists one core node in $\mathcal{T}$, denoted
by $w\in\mathcal{V}$, the Laplacian matrix $L(\bar{\mathcal{T}})$
of the FSN\textbf{ }network $\bar{\mathcal{T}}$ can be decomposed
as $L(\bar{\mathcal{T}})=\left[\begin{array}{cc}
0 & 0\\*
* & L_{\mathcal{V}\setminus\{w\}}
\end{array}\right],$ where $L_{\mathcal{V}\setminus\{w\}}$ is a lower triangular matrix,
with all diagonal elements equal to one. Therefore, the second smallest
eigenvalue of $L(\bar{\mathcal{T}})$ is equal to one. Hence, using
Lemma \ref{eigenvalue bound of tree graph},\textbf{ }it follows that
 \textbf{$\lambda_{2}(L(\bar{\mathcal{T}}))>\lambda_{2}(L(\mathcal{T}))$}.

Now we proceed to consider the case that there exists one core block
with two nodes; denote these nodes by $u\in\mathcal{V}$ and $v\in\mathcal{V}$.
Then, the Laplacian matrix $L(\bar{\mathcal{T}})$ of the FSN\textbf{
}network $\bar{\mathcal{T}}$ can be decomposed as, $L(\bar{\mathcal{T}})=\left[\begin{array}{cc}
L_{\{u,v\}} & 0\\*
* & L_{\mathcal{V}\setminus\{u,v\}}
\end{array}\right],$ where $L_{\{u,v\}}=\left[\begin{array}{cc}
1 & -1\\
-1 & 1
\end{array}\right]$ denotes the Laplacian matrix associated with the core block, and
$L_{\mathcal{V}\setminus\{u,v\}}$ is a lower triangular matrix with
all diagonal elements equal to one. Therefore, the second smallest
eigenvalue of $L(\bar{\mathcal{T}})$ is equal to one. Again, in view
of the Lemma \ref{eigenvalue bound of tree graph}\textbf{, }we conclude
that  \textbf{$\lambda_{2}(L(\bar{\mathcal{T}}))>\lambda_{2}(L(\mathcal{T}))$}.
\end{proof}
\bibliographystyle{IEEEtran}
\bibliography{mybib}

\end{document}